\title{
Net Occurrences in Fibonacci and Thue-Morse Words
}
\author{Peaker Guo}
{
School of Computing and Information Systems,
The University of Melbourne, 
Parkville, Australia
\and
\url{www.peakerguo.com}
}{zifengg@student.unimelb.edu.au}
{https://orcid.org/0000-0002-9098-1783}{}
\author{Kaisei Kishi}
{
Department of Information Science and Technology, 
Kyushu University,
Fukuoka, Japan 
}{kishi.kaisei.216@s.kyushu-u.ac.jp}{}{}
\authorrunning{
P.~Guo and K.~Kishi
}
\keywords{
Fibonacci words,
Thue-Morse words,
net occurrence,
net frequency,
factorization 
}
\newcommand{\tm}{\ensuremath{\mathcal{T}}\xspace} 
\newcommand{\ltm}{\ensuremath{\tau}\xspace}
\newcommand{\flip}[1]{\ensuremath{\overline{#1}}\xspace}
\newcommand{\fac}[2]{\ensuremath{\mathcal{F}_{#2}^{#1}}\xspace}
\newcommand{\nextfac}[1]{\ensuremath{\left(#1\right)^-}\xspace}
\newcommand{\emptyseq}{\ensuremath{( \; )}\xspace}
\newcommand{\removed}[1]{}
\begin{document}

\maketitle 

\begin{abstract}
A \emph{net occurrence} of a repeated string in a text
is an occurrence with unique left and right \emph{extensions},
and the \emph{net frequency} of the string is the 
number of its net occurrences in the text.
Originally introduced for applications in Natural Language Processing, 
net frequency has recently gained attention for 
its algorithmic aspects.
Guo et al.~[CPM 2024] and Ohlebusch et al.~[SPIRE 2024]
focus on its computation in the offline setting, while
Guo et al.~[SPIRE 2024], Inenaga~[arXiv 2024],
and Mieno and Inenaga~[CPM 2025]
tackle the online counterpart.
Mieno and Inenaga also characterize 
 net occurrences in terms of 
the minimal unique substrings of the text.
Additionally, Guo et al.~[CPM 2024] initiate the study of 
 net occurrences in Fibonacci words 
to establish a lower bound on the asymptotic running time of algorithms.
Although there has been notable progress in algorithmic developments and some initial combinatorial insights, 
the combinatorial aspects of net occurrences have yet to be thoroughly examined. 
In this work, we make two key contributions.
First, we confirm the conjecture that 
each Fibonacci word contains exactly three net occurrences.
Second, we show that 
each Thue-Morse word contains exactly nine net occurrences. 
To achieve these results,
we introduce the notion of \emph{overlapping net occurrence cover},
which narrows down the candidate net occurrences in any text.
Furthermore,
we provide a precise characterization of
occurrences of Fibonacci and Thue-Morse words of smaller order, 
offering structural insights that may have independent interest
and potential applications in algorithm analysis and combinatorial properties of these words.
\end{abstract}

\clearpage

\clearpage

\section{Introduction}

The work by Axel Thue at the beginning of the 20th century
marked the beginning of the field of combinatorics on words~\cite{journal/ejc/2007/berstel}.
Central to the field are two key objects that have attracted extensive research:
Fibonacci words and Thue-Morse words~\cite{book/1997/lothaire}.
These objects are remarkable 
for their rich combinatorial properties and applications in seemingly unrelated fields 
beyond combinatorics on words.
Fibonacci words, for instance, have been used to establish lower bounds 
and analyze behaviors of string algorithms~\cite{journal/mst/2022/inoue},
while Thue-Morse words appear in diverse areas such as 
 group theory, physics, and even chess~\cite{conf/seta/1998/allouche}.
They have also been used to prove properties related to repetitiveness measures~\cite{conf/spire/2021/bannai, conf/spire/2020/kutsukake, conf/cwords/2023/dolce, journal/tit/2021/navarro, conf/cpm/2025/bannai}.

Another key aspect of combinatorics on words involves identifying significant strings in a text.
Different definitions of significance lead to different problem formulations.
These significant strings could be 
repetitions~\cite{journal/tcs/2009/crochemore},
tandem repeats~\cite{journal/jcss/2004/gusfield}, or
runs~\cite{journal/siamcomp/2017/bannai}.
There is also a rich literature on the study of these significant strings 
in Fibonacci and Thue-Morse words~\cite{journal/dam/1989/brlek, journal/ipl/1995/droubay, journal/tcs/1997/iliopoulos, journal/tcs/1999/fraenkel}.
For many applications, frequency serves as a basis for significance measure. 
However, frequency alone can be misleading, as it may be 
inflated by occurrences of longer repeated strings.
Consider the text 
\texttt{the\textvisiblespace theoretical\textvisiblespace theme} as an example.
The string \texttt{the} is the most frequent string of length three, but this is due to the fact that
two of its occurrences are contained by the longer repeated string \texttt{\textvisiblespace the}.

To address this issue, Lin and Yu~\cite{journal/jise/2001/lin, journal/ijclclp/2004/lin} 
introduced the notion of \emph{net frequency} (NF),
motivated by Natural Language Processing tasks.
As reconceptualized by Guo et al.~\cite{conf/cpm/2024/guo},
a \emph{net occurrence} of a repeated string
in a text is an occurrence with unique left and right extensions,
and the NF of the string is the number of its net occurrences
in the text.
In the earlier example, 
only the first occurrence 
of \texttt{the}
is a net occurrence,
reflecting the only occurrence 
that is not contained by a longer repeated string.

There has been a recent surge of interest in the computation of NF.
Guo et al.~\cite{conf/cpm/2024/guo} and Ohlebusch et al.~\cite{conf/spire/2024/ohlebusch}
focus on the offline setting,
while Guo et al.~\cite{conf/spire/2024/guo}, 
Inenaga~\cite{DBLP:journals/corr/abs-2410-06837},
and 
Mieno and Inenaga~\cite{conf/cpm/2025/mieno}
extend the computation to the online setting.
Mieno and Inenaga also characterize 
 net occurrences in terms of 
the minimal unique substrings of the text.
Additionally, Guo et al.~\cite{conf/cpm/2024/guo}  study 
 net occurrences in Fibonacci words 
to establish a lower bound on the asymptotic running time of algorithms.
Despite these advances,
the combinatorial aspect of net occurrences has yet to be thoroughly investigated.
It has been shown that there are \emph{at least} three net occurrences 
in each Fibonacci word~\cite{conf/cpm/2024/guo}.
However, proving that these are the \emph{only} three 
is more challenging and
was only conjectured.
Meanwhile, the net occurrences in each Thue-Morse word had not been 
investigated before---both of which we address in this work.

\subparagraph*{Our results.}
In this work,
our main contribution is twofold.
First, we confirm the conjecture by 
Guo et al.~\cite{conf/cpm/2024/guo} that 
there are exactly three net occurrences
in each Fibonacci word (\cref{thm:fib-only-net-occ}).
Second, we show that 
there are exactly nine net occurrences 
in each Thue-Morse word (\cref{thm:tm-only-net-occ}).
To achieve these results, 
 we first introduce the concept of an \emph{overlapping net occurrence cover}, 
which drastically reduces the number of occurrences that need to be examined 
when proving certain net occurrences are the only ones (\cref{thm:super-occ-bnso}). 
Additionally, we provide a precise characterization of 
occurrences of smaller-order Fibonacci and Thue-Morse words 
(\cref{thm:fib-occ-inductive} and \cref{thm:tm-occ}).
These findings 
could also be of independent interest,
providing tools and insights
for analyzing algorithms and exploring the combinatorial properties of these words.
For example, they lead to methods to count the smaller-order occurrences (\cref{thm:fib-occ-num} and \cref{thm:tm-occ-num}).

\subparagraph*{Other related work.}
Occurrences of Fibonacci and Thue-Morse words of smaller order have been previously studied. 
For Fibonacci words, these occurrences have been shown to be related to the Fibonacci representation of positive integers~\cite{journal/tcs/1997/iliopoulos, journal/tcs/2006/rytter}.
For Thue-Morse words, these occurrences have been investigated using the binary representation of numbers and properties of the compact directed acyclic word graph (CDAWG) of each Thue-Morse word~\cite{journal/jda/2012/radoszewski}.
We emphasize that our work addresses occurrences of Fibonacci and Thue-Morse words of smaller order from a different angle than prior work: we provide a recurrence relation that precisely characterizes the occurrences, bypassing the need for other representations.

\section{Preliminaries}

\subparagraph*{Strings.}
Throughout, we consider the binary alphabet $\Sigma := \{\texttt{a}, \texttt{b}\}$.
A \emph{string} is an element of $\Sigma^*$.
The length of a string $S$ is denoted as $|S|$.
Let $\epsilon$ denote the \emph{empty string} of length 0.
We use $S[i]$ to denote the $i^{\text{th}}$ character of a string $S$.
Let~$[n]$ denote the set $\{1,2,\ldots,n\}$.
Let $S \ T$ be the \emph{concatenation} of two strings,~$S$ and~$T$.
A substring of a string~$T$ of length $n$, 
starting at position~$i \in [n]$ 
and ending at position~$j \in [n]$, is written as $T[i \ldots j]$. 
A substring $T[1\ldots j]$ is called a \emph{prefix} of~$T$, 
while $T[i \ldots n]$ is called a \emph{suffix} of $T$.
A substring $S$ of $T$ is a \emph{proper} substring if $S \neq T$.
An \emph{occurrence} in the text~$T$ of length $n$ is a pair of 
starting and ending positions $(i, j) \in [n] \times [n]$.
We say $(i, j)$ is an \emph{occurrence of string~$S$} if $S = T[i \ldots j]$,
and $i$ is an occurrence of $S$ if $S = T[i \ldots i + |S| - 1]$.
An occurrence $(i', j')$ is a \emph{sub-occurrence} of $(i, j)$
if $i \leq i' \leq j' \leq j$.
An occurrence $(i, j)$ is a \emph{super-occurrence} of $(i', j')$ if 
$(i', j')$ is a sub-occurrence of $(i, j)$.
Moreover, 
$(i', j')$
is a \emph{proper} sub-occurrence (or super-occurrence) of $(i, j)$ if $(i', j')$ is a sub-occurrence (or super-occurrence) of $(i, j)$ and $i \neq i'$ or $j' \neq j$.
Two occurrences $(i, j)$ and $(i', j')$ \emph{overlap} if 
there exists a position $k$ such that $i \leq k \leq j$ and $i'\leq k \leq j'$.
For a non-empty string $S$, 
a sequence of non-empty strings $\mathcal{F} = (  x_k )^m_{k=1} = (x_1, x_2, \ldots, x_m)$
is referred to as a \emph{factorization} of $S$ if
$S = x_1 \ x_2 \ \cdots x_m $.
Each string $x_k$ is called a \emph{factor} of $\mathcal{F}$.
The \emph{size} of $\mathcal{F}$, denoted by $|\mathcal{F}|$, 
is the number of factors in the factorization.

\subparagraph*{Net frequency and net occurrences.}
In a text $T$,
the net frequency (NF) of a unique string in $T$ is defined to be zero.
The NF of a repeated string is the number of \emph{net occurrences} in $T$.

\begin{definition}[Net occurrence~\cite{conf/cpm/2024/guo}]
In a text $T$,
an occurrence $(i, j)$ is a \emph{net occurrence} if
the corresponding string 
$T[i   \ldots j  ]$ is repeated,
while both left extension 
$T[i-1 \ldots j  ]$ and right extension
$T[i   \ldots j+1]$ are unique.
When $i=1$, $T[i-1 \ldots j  ] $ is assumed to be unique;
when $j=|T|$, $T[i   \ldots j+1] $ is assumed to be unique.
\end{definition}

For an occurrence $(i, j)$ in text $T$,
we refer to $T[i-1]$ and $T[j+1]$ as the left and right extension \emph{characters} 
of $(i, j)$, respectively.
For a string $S$ occurring in $T$, 
we say $x, y \in \Sigma$ are  left and right extension  \emph{characters} 
of $S$ if both strings $xS$ and $Sy$ also occur in $T$.

\subparagraph*{Fibonacci words.}
Let $F_i$ denote the (finite) \emph{Fibonacci word of order $i$}
where~$F_1 := \texttt{b}, F_2 := \texttt{a}$, and 
$F_i := F_{i-1} \ F_{i-2}$ for each $i \geq 3$. 
Let $f_i := |F_i|$ be the length of the Fibonacci word of order $i$, which is also 
the $i^{\text{th}}$ Fibonacci number. 
We next review two useful results on $F_i$.

\begin{lemma}[\cite{journal/ipl/1995/droubay}]\label{thm:f-i-f-i}
$F_{i}$ only occurs twice in $F_i \ F_i$.
\end{lemma}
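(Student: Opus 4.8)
The plan is to pin down the two obvious occurrences and then rule out any third one by a primitivity argument. The word $F_i$ certainly occurs as the prefix (starting position $1$) and as the suffix (starting position $f_i+1$) of $F_iF_i$, giving the two claimed occurrences. So it suffices to show there is no \emph{internal} occurrence, i.e.\ no starting position $k$ with $2\le k\le f_i$ such that $(F_iF_i)[k\ldots k+f_i-1]=F_i$. I would frame the whole argument as: such an internal occurrence would force $F_i$ to be a nontrivial power, which I then contradict by counting letters.

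First I would translate an internal occurrence into a commutation. Suppose $F_i$ occurs at some position $k$ with $2\le k\le f_i$, and split $F_i=uv$ with $u=F_i[1\ldots k-1]$ and $v=F_i[k\ldots f_i]$, so that $u$ and $v$ are both nonempty. Reading the window starting at $k$, the positions $k,\ldots,f_i$ lie in the first copy and spell $v$, while the positions $f_i+1,\ldots,k+f_i-1$ lie in the second copy and spell $u$; hence this window equals $vu$. The occurrence then asserts $vu=F_i=uv$. By the standard fact that two nonempty words commute only when both are powers of a common word, we get $u=t^a$ and $v=t^b$ for some word $t$ and integers $a,b\ge 1$, whence $F_i=uv=t^{a+b}$ with $a+b\ge 2$. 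Thus an internal occurrence makes $F_i$ a proper power.

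It then remains to prove that $F_i$ is primitive, which I would do by a clean counting argument. Writing $a_i$ and $b_i$ for the numbers of \texttt{a}'s and \texttt{b}'s in $F_i$, the recurrence $F_i=F_{i-1}F_{i-2}$ gives $a_i=a_{i-1}+a_{i-2}$ and $b_i=b_{i-1}+b_{i-2}$, with initial values from $F_1=\texttt{b}$ and $F_2=\texttt{a}$; this yields $a_i=f_{i-1}$ and $b_i=f_{i-2}$ for $i\ge 3$. Since consecutive Fibonacci numbers are coprime (an immediate induction via $\gcd(f_{i-1},f_{i-2})=\gcd(f_{i-2},f_{i-3})$), any root $t$ with $F_i=t^m$ would force $m$ to divide both $a_i=f_{i-1}$ and $b_i=f_{i-2}$, hence $m\mid\gcd(f_{i-1},f_{i-2})=1$ and $m=1$. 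So $F_i$ is primitive for all $i\ge 3$, contradicting the proper power obtained above; the cases $i=1,2$ give the single characters \texttt{b} and \texttt{a}, which trivially occur exactly twice in \texttt{bb} and \texttt{aa}. Combining, no internal occurrence exists and $F_i$ occurs exactly twice.

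The only mildly delicate point is the bookkeeping in the first step---confirming that the window starting at position $k$ really reads $v$ from the tail of the first copy followed by $u$ from the head of the second copy, so that the equality $uv=vu$ is exactly what an internal occurrence asserts. Everything after that (the commutation-implies-common-root lemma and the coprimality of $a_i$ and $b_i$) is routine, so I do not expect a genuine obstacle.
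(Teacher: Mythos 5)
Your argument is correct. The paper itself does not prove this lemma---it imports it from Droubay (1995) as a known fact---so there is no in-paper proof to compare against; what you have written is the standard self-contained derivation. The bookkeeping in your first step is right: an occurrence at position $k$ with $2\le k\le f_i$ reads $v=F_i[k\ldots f_i]$ from the first copy and $u=F_i[1\ldots k-1]$ from the second, of total length $f_i$, so it asserts $vu=uv$ with $u,v$ nonempty; the Lyndon--Sch\"utzenberger commutation theorem then makes $F_i$ a proper power. Your primitivity argument also checks out: the letter counts satisfy $a_i=f_{i-1}$ and $b_i=f_{i-2}$ for $i\ge 3$, consecutive Fibonacci numbers are coprime, and any period exponent $m$ must divide both counts, forcing $m=1$; the cases $i\in\{1,2\}$ are trivial. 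This is essentially the classical proof of the lemma, and it would serve as a valid replacement for the citation.
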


\begin{lemma}[\cite{journal/tit/2021/navarro}]\label{thm:no-aaa}
The strings \texttt{aaa} and \texttt{bb} do not occur in $F_i$.
\end{lemma}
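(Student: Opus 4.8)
The plan is to prove a strengthened statement by induction on $i$, since a naive induction fails to control what happens at the junction where $F_{i-1}$ and $F_{i-2}$ are concatenated. Specifically, I would prove simultaneously that for every $i \geq 3$: (a) $F_i$ contains neither \texttt{aaa} nor \texttt{bb}; (b) $F_i$ begins with \texttt{ab}; and (c) the last two characters of $F_i$ are either \texttt{ab} or \texttt{ba}. The base cases $F_3 = \texttt{ab}$ and $F_4 = \texttt{aba}$ satisfy all three conditions by direct inspection, and for $i \geq 5$ the recursive structure $F_i = F_{i-1} F_{i-2}$ lets me reduce each claim to the inductive hypothesis.

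For the structural claims (b) and (c), I would argue that a long enough prefix and suffix of $F_i$ are inherited from its two parts. Since $f_{i-1} \geq 2$ for $i \geq 4$, the first two characters of $F_i$ coincide with those of $F_{i-1}$, which are \texttt{ab} by hypothesis, giving (b). Likewise, $F_i$ ends with $F_{i-2}$, and since $f_{i-2} \geq 2$ for $i \geq 5$, the last two characters of $F_i$ coincide with those of $F_{i-2}$, which are \texttt{ab} or \texttt{ba} by hypothesis, giving (c).

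The crux is claim (a). By the inductive hypothesis neither $F_{i-1}$ nor $F_{i-2}$ contains \texttt{aaa} or \texttt{bb}, so any new occurrence of a forbidden pattern in $F_i = F_{i-1} F_{i-2}$ must straddle the boundary between the two factors. Because both forbidden patterns have length at most three, such a straddling occurrence is entirely determined by the two characters immediately to the left of the boundary together with the two immediately to the right---that is, by the concatenation of the last two characters of $F_{i-1}$ with the first two characters of $F_{i-2}$. Invoking (c) for $F_{i-1}$ and (b) for $F_{i-2}$, this boundary window is either \texttt{abab} (when $F_{i-1}$ ends with \texttt{ab}) or \texttt{baab} (when $F_{i-1}$ ends with \texttt{ba}); neither contains \texttt{aaa} or \texttt{bb}. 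This rules out a straddling occurrence and completes the step.

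The main obstacle I anticipate is the bookkeeping around small indices: the boundary argument needs both $F_{i-1}$ and $F_{i-2}$ to have at least two characters and to obey (b) and (c), which requires $i-2 \geq 3$, so I would verify $i = 3, 4$ (and the degenerate words $F_1 = \texttt{b}$, $F_2 = \texttt{a}$) by hand and run the general step only for $i \geq 5$. The only subtlety beyond this is confirming that a forbidden pattern crossing the junction truly cannot use more than two characters from either side, which holds precisely because the patterns are short; once that is noted, the case analysis on the four-character boundary window is immediate.
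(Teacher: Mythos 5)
Your proof is correct, but note that the paper does not prove this lemma at all: it is imported verbatim from the cited reference, so there is no in-paper argument to match. What you have written is a self-contained elementary replacement. The strengthened induction hypothesis is exactly the right move: proving only (a) by induction fails at the junction of $F_{i-1}$ and $F_{i-2}$, and your auxiliary claims (b) and (c) are precisely what is needed to control the four-character boundary window. Your case analysis is sound --- the window is \texttt{abab} or \texttt{baab}, neither of which contains \texttt{aaa} or \texttt{bb}, and a forbidden factor of length at most $3$ that straddles the boundary indeed uses at most two characters from each side, so it lies inside that window. The index bookkeeping is also handled correctly: the general step needs $i-2 \geq 3$ and $f_{i-2} \geq 2$, both of which hold for $i \geq 5$, and you check $F_1,\dots,F_4$ by hand. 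The only thing your write-up leaves implicit is that (c) is genuinely needed (if $F_{i-1}$ could end in \texttt{aa} the window \texttt{aaab} would contain \texttt{aaa}), but your statement of (c) as ``\texttt{ab} or \texttt{ba}'' already encodes this. Compared with simply citing the literature as the paper does, your argument buys self-containedness at the modest cost of carrying two extra invariants through the induction; it is a standard and perfectly valid way to establish such forbidden-factor facts about Fibonacci words.
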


The following result can be readily
derived by repeatedly applying the definition of $F_i$.

\begin{observation}\label{thm:fib-basic-fac}
For $1 \leq k \leq i$,
there is a factorization of $F_i$ where each factor is either $F_{k}$ or $F_{k+1}$.
\end{observation}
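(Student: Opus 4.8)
The plan is to fix $k$ and prove the statement by strong induction on $i$, ranging over all $i \geq k$. The two base cases are immediate. When $i = k$, the word $F_i = F_k$ is itself a valid factorization consisting of the single factor $F_k$; when $i = k+1$, likewise $F_i = F_{k+1}$ is a single-factor factorization. These settle the smallest two values of $i$, which is precisely the foundation a Fibonacci-style recurrence requires.

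For the inductive step I would take any $i \geq k+2$. Since $k \geq 1$, this forces $i \geq 3$, so the defining recurrence $F_i = F_{i-1} \, F_{i-2}$ applies. Both indices satisfy $i - 1 \geq k+1 \geq k$ and $i - 2 \geq k$, so the induction hypothesis supplies factorizations of $F_{i-1}$ and of $F_{i-2}$ in which every factor is either $F_k$ or $F_{k+1}$. Concatenating these two factorizations, in order, yields a factorization of $F_{i-1} \, F_{i-2} = F_i$ of the required form, which completes the induction.

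The argument is genuinely short, and there is no real obstacle; the only point demanding care is the bookkeeping on index ranges. One must check that the inductive step is invoked only when $i \geq k+2$, so that both $F_{i-1}$ and $F_{i-2}$ are covered by the hypothesis \emph{and} are indexed at least $k$, while simultaneously this same bound guarantees $i \geq 3$ so that the recurrence $F_i = F_{i-1} \, F_{i-2}$ is available. The two base cases $i = k$ and $i = k+1$ fill exactly the gap below this threshold, so the cases partition $\{\, i : i \geq k \,\}$ cleanly and the recurrence is never applied outside its valid range.
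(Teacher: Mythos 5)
Your proof is correct and is essentially the same argument the paper has in mind: the paper dismisses this observation as following "by repeatedly applying the definition of $F_i$," and your induction on $i$ (with base cases $i=k$, $i=k+1$ and the step $F_i = F_{i-1}\,F_{i-2}$) is exactly the rigorous form of that repeated unfolding. The index bookkeeping you check is the only delicate point, and you handle it correctly.
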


\noindent
For example,
for $k = i-2 \ldots i-5$,
we have the following factorizations:
$ 
  F_i
= F_{i-1} \ F_{i-2} 
= F_{i-2}  \ F_{i-3} \ F_{i-2}
= F_{i-3} \ F_{i-4} \ F_{i-3} \ F_{i-3} \ F_{i-4}
= F_{i-4} \ F_{i-5} \ F_{i-4} \ F_{i-4} \ F_{i-5} \ F_{i-4} \ F_{i-5} \ F_{i-4}
$.

\subparagraph*{Thue-Morse words.}
For a binary string $S$,
let $\flip{S}$ denote the string obtained by
simultaneously replacing each \texttt{a} with \texttt{b} 
and each \texttt{b} with \texttt{a}.
Let $\tm_i$ be the (finite) \emph{Thue-Morse word of order $i$}
where $\tm_1 := \texttt{a}$ and
$\tm_i := \tm_{i-1} \flip{ \tm_{i-1}}$
for each $i \geq 2$.
Let $\ltm_i := |\tm_i| = 2^{i-1}$ be the length of the Thue-Morse word of order $i$.
We next review two properties of each $\tm_i$.

\begin{lemma}[Overlap-free~\cite{book/1997/lothaire}]\label{thm:tm-overlap-free}
$\tm_i$ has no overlapping occurrences of the same string.
\end{lemma}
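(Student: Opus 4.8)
The plan is to prove the equivalent statement that every $\tm_i$ is \emph{overlap-free} in the classical sense: it contains no factor of the form $cWcWc$ with $c \in \Sigma$ and $W \in \Sigma^*$, equivalently no factor $w$ admitting a period $p$ with $|w| \ge 2p+1$. This is the same as the stated property, since two occurrences of a string $S$ at distance $p$ with $1 \le p \le |S|-1$ merge into a factor of length $|S|+p \ge 2p+1$ having period $p$, and conversely $cWcWc$ exhibits two overlapping occurrences of $cWc$. Because $\tm_{i+1} = \tm_i \flip{\tm_i}$, each $\tm_i$ is a prefix of $\tm_{i+1}$, so all the $\tm_i$ are prefixes of a single infinite word $t = t_0 t_1 t_2 \cdots$ determined by $t_0 = \texttt{a}$ and the recurrences $t_{2n} = t_n$ and $t_{2n+1} = \flip{t_n}$; in particular, within every length-$2$ block $t_{2n}t_{2n+1}$ the two letters differ. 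It therefore suffices to show that $t$ is overlap-free, since any overlapping pair in some $\tm_i$ is an overlap inside $t$.

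Suppose not, and choose an overlap in $t$ of \emph{minimal period} $p$: positions $k, k+1, \ldots, k+2p$ with $t_{k+j} = t_{k+j+p}$ for all $0 \le j \le p$. The main engine is a halving reduction for even periods. Write $p = 2q$ and extract the arithmetic subsequence $E_j := t_{k+2j}$ for $j = 0, \ldots, 2q$; these $2q+1$ positions all share the parity of $k$, so by the recurrences $t_{2n}=t_n$ and $t_{2n+1} = \flip{t_n}$ the sequence $(E_j)$ is a contiguous factor of $t$ up to a global complement, which preserves overlaps. The overlap relation at the even offsets $i = 2j \le p$ gives $E_j = E_{j+q}$ for $0 \le j \le q$, so $(E_j)_{j=0}^{2q}$ is itself an overlap, now of period $q < p$, contradicting minimality. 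Hence no overlap of even period can be minimal.

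It remains to rule out \emph{odd} periods, which I expect to be the crux: the every-other-position trick fails because $p/2$ is not an integer, so here one must invoke the block structure. First, $p = 1$ would force three equal consecutive letters $ccc$, which cannot occur: writing $\mu$ for the morphism $\texttt{a}\mapsto\texttt{ab}$, $\texttt{b}\mapsto\texttt{ba}$ (so $t = \mu(t)$ and $\tm_{i+1} = \mu(\tm_i)$), equal adjacent letters can only straddle a block boundary $(t_{2n}, t_{2n+1})$, and two such boundaries cannot be adjacent, so \texttt{aaa} and \texttt{bbb} are absent. For odd $p \ge 3$, the period-$p$ shift carries each block $t_{2n}t_{2n+1}$ (whose letters differ) onto a boundary-straddling pair $t_{2n+p}\,t_{2n+p+1}$; propagating the resulting forced equalities and inequalities along the overlap via the recurrences yields either a forbidden $ccc$ or a genuine shorter overlap in $t$ (equivalently in $\flip{t}$), again contradicting minimality. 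Carefully bookkeeping the alignment of the overlap to the block grid in this odd case, so that the derived shorter object is a genuine overlap rather than merely a square, is the one delicate point of the argument; by contrast the even-period reduction above is routine and the reduction from the finite $\tm_i$ to $t$ is immediate.
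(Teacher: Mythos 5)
The paper does not actually prove this lemma; it quotes it from Lothaire as a classical fact, so there is no in-paper argument to compare against and you are supplying a from-scratch proof of a cited result. Most of your skeleton is sound: the reduction from the finite words to the infinite word $t$ is immediate (each $\tm_i$ is a prefix of $\tm_{i+1}=\tm_i\,\flip{\tm_i}$); the even-period halving is correct (by $t_{2n}=t_n$ and $t_{2n+1}=\flip{t_n}$ the subsequence $t_k,t_{k+2},\dots,t_{k+2p}$ is a contiguous factor of $t$ up to a global complement, and complementation preserves overlaps since every factor of $\flip{t}$ is again a factor of $t$); and the period-$1$ case via the absence of $ccc$ is correct.

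The genuine gap is the case of odd period $p\ge 3$, which you explicitly leave open (``propagating the resulting forced equalities and inequalities \dots yields either a forbidden $ccc$ or a genuine shorter overlap''). This is not a bookkeeping detail but the crux: without it the minimal-period argument does not terminate, because the halving step applies only to even periods, so a minimal overlap could a priori have period $3,5,\dots$ and you derive no contradiction there. The standard way to close it is an alternation argument rather than a further reduction. For $0\le j\le 2p-1$ set $b_j=1$ iff $t_{k+j}\ne t_{k+j+1}$. Since $t_{2n}\ne t_{2n+1}$, we have $b_j=1$ whenever $k+j$ is even; the overlap gives $b_j=b_{j+p}$ for $0\le j\le p-1$; and for odd $p$ exactly one of $k+j$, $k+j+p$ is even, so $b_j=b_{j+p}=1$ for every such $j$, i.e.\ the entire window $t_k\cdots t_{k+2p}$ alternates between the two letters. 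For $p\ge 3$ this window has length at least $7$ and therefore contains an overlap of period $2<p$ (a factor \texttt{ababa} or \texttt{babab}), contradicting minimality. With that paragraph supplied your proof is complete; as written, the hardest step of the classical theorem is asserted rather than proved.
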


\begin{lemma}[Cube-free~\cite{book/1997/lothaire}]\label{thm:tm-cube-free}
$\tm_i$ does not contain any string of the form $xxx$ where $x$ is a non-empty string.
\end{lemma}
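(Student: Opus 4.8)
The plan is to derive cube-freeness directly from the overlap-free property already recorded in \cref{thm:tm-overlap-free}, since a cube is merely a particularly redundant kind of overlap. First I would observe that every cube carries an overlap inside it. Suppose, for contradiction, that $\tm_i$ contains a factor of the form $xxx$ with $x$ non-empty, and set $c := x[1]$, writing $x = c\,u$ for the (possibly empty) suffix $u := x[2 \ldots |x|]$. The prefix of $xxx$ of length $2|x|+1$ is then exactly $c\,u\,c\,u\,c$, which is itself a factor of $\tm_i$.

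Next I would exhibit the forbidden overlapping pair. The string $c\,u\,c$, of length $|x|+1$, occurs in $c\,u\,c\,u\,c$ at two positions whose starting indices differ by $|x|$: once as the prefix, occupying $(q, q+|x|)$, and once as the suffix, occupying $(q+|x|, q+2|x|)$, where $q$ is the starting position of the cube in $\tm_i$. These two occurrences share the single position $k = q+|x|$ (the last character of the first occurrence coincides with the first character of the second), so by the overlap definition in the preliminaries they overlap. This contradicts \cref{thm:tm-overlap-free}, which forbids any two overlapping occurrences of the same string; hence no such cube can occur and $\tm_i$ is cube-free.

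The reduction itself is short, so the \textbf{main obstacle} is not in this argument at all but is entirely absorbed into \cref{thm:tm-overlap-free}: the substantive combinatorial work lies in proving overlap-freeness, which is the strictly stronger statement. Should one instead want a self-contained proof of cube-freeness that avoids invoking overlap-freeness, the natural route is a shortest-counterexample argument along the Thue--Morse morphism $\texttt{a} \mapsto \texttt{ab}$, $\texttt{b} \mapsto \texttt{ba}$, i.e.\ the block decomposition of $\tm_i = \mu(\tm_{i-1})$ into the length-two images of the letters of $\tm_{i-1}$. There, an even-period cube desubstitutes to a strictly shorter cube in the preimage, contradicting minimality; the genuinely delicate case is when $|x|$ is odd, where the block boundaries cut the cube asymmetrically and one must track the single-character discrepancies within each block to reach a contradiction. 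For the purposes of this paper, however, leaning on \cref{thm:tm-overlap-free} makes the cube-free property immediate.
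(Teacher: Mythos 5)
Your reduction is correct. The paper itself offers no proof of this lemma---it is imported verbatim from Lothaire, just like the overlap-free property---so there is no in-paper argument to compare against; but your derivation is the standard one and it goes through with the paper's own definitions. The key checks all hold: the length-$(2|x|+1)$ prefix of $xxx$ is indeed $c\,u\,c\,u\,c$, the two occurrences of $c\,u\,c$ at $(q, q+|x|)$ and $(q+|x|, q+2|x|)$ share the position $q+|x|$, and the paper's definition of overlap is inclusive (a single shared position suffices), so \cref{thm:tm-overlap-free} applies even in the degenerate case $|x|=1$, where the overlap is just $ccc$ containing two touching occurrences of $cc$. Your closing remark is also fair: all the combinatorial content sits in overlap-freeness, which the paper likewise takes as a black box, so deriving cube-freeness from it rather than reproving it via the morphism $\texttt{a}\mapsto\texttt{ab}$, $\texttt{b}\mapsto\texttt{ba}$ is entirely in the spirit of how the paper uses both lemmas.
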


The following result can be directly derived by repeatedly applying the definition of $\tm_i$.

\begin{observation}\label{thm:tm-basic-fac}
For each $i \geq 2$ and $1 \leq j \leq i$, 
there is a factorization of $\tm_i$ where each factor is either $\tm_{i-(j-1)}$ or $\flip{\tm_{i-(j-1)}}$.
\end{observation}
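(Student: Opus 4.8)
The plan is to prove \cref{thm:tm-basic-fac} by induction on $j$, unwinding the recursive definition $\tm_i := \tm_{i-1} \flip{\tm_{i-1}}$ one layer at a time while tracking how the complementation operator $\flip{\phantom{x}}$ propagates. First I would establish the base case $j = 1$: here the claim asserts that $\tm_i$ admits a factorization whose single factor is either $\tm_i$ or $\flip{\tm_i}$, which holds trivially by taking $\mathcal{F} = (\tm_i)$.

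For the inductive step, I would assume that $\tm_i$ has a factorization $\mathcal{F} = (x_k)_{k=1}^m$ where each factor $x_k$ equals $\tm_{i-(j-1)}$ or $\flip{\tm_{i-(j-1)}}$, and aim to produce a factorization into blocks of order $i - j$. The key observation is that by the definition of the Thue-Morse words we have $\tm_{i-(j-1)} = \tm_{i-j}\,\flip{\tm_{i-j}}$, and applying the bar to both sides gives $\flip{\tm_{i-(j-1)}} = \flip{\tm_{i-j}}\,\tm_{i-j}$, using the fact that complementation distributes over concatenation and is an involution. Thus each factor $x_k$ of order $i-(j-1)$ splits into exactly two consecutive factors, each of which is $\tm_{i-j}$ or $\flip{\tm_{i-j}}$. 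Replacing every $x_k$ by its corresponding two-factor expansion yields the desired factorization of $\tm_i$ into factors of order $i-j$, completing the induction. The constraint $j \leq i$ ensures the indices $i-(j-1)$ stay at least $1$, so every factor is a genuine Thue-Morse word; the hypothesis $i \geq 2$ simply guarantees the statement is nonvacuous.

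I do not anticipate a serious obstacle here, since the result is essentially a bookkeeping unfolding of the recurrence, as the text itself signals by calling it an \emph{observation} derivable ``by repeatedly applying the definition.'' The one point requiring a little care is the correct propagation of the bar: one must verify that $\flip{ST} = \flip{S}\,\flip{T}$ and $\flip{\flip{S}} = S$ so that the two cases (an unbarred versus a barred factor of order $i-(j-1)$) expand to the stated pairs of order-$(i-j)$ factors. These are immediate from the character-wise definition of $\flip{\phantom{x}}$, so the argument goes through cleanly.
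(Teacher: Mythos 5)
Your proof is correct and matches the paper's intent exactly: the paper offers no formal argument beyond noting that the result ``can be directly derived by repeatedly applying the definition of $\tm_i$,'' and your induction on $j$, expanding each factor via $\tm_{i-(j-1)} = \tm_{i-j}\,\flip{\tm_{i-j}}$ and $\flip{\tm_{i-(j-1)}} = \flip{\tm_{i-j}}\,\tm_{i-j}$, is precisely that unfolding made rigorous. The range check (the recurrence is only invoked while $i-(j-1) \geq 2$, which is guaranteed when passing to $j+1 \leq i$) is handled correctly, so nothing is missing.
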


\noindent
For example, for $2 \leq j \leq 3$,
$
\tm_i 
= \tm_{i-1} \ \flip{\tm_{i-1}}
= \tm_{i-2} \ \flip{\tm_{i-2}}  \ \flip{\tm_{i-2}} \ \tm_{i-2}
$.
\cref{fig:tm-occ} illustrates larger value of $j$.
Also note that this result is analogous to \cref{thm:fib-basic-fac}.

\section{Overlapping Net Occurrence Cover}\label{sec:onoc}

This section lays the foundation 
to prove the main results of this paper 
in the subsequent sections.
Specifically, we aim to develop tools to show that certain net occurrences are the only ones in a text. 
To achieve this, 
we first provide two characteristics 
for non-net occurrences.
The proofs in this section are presented in \cref{appx:onoc}.

\begin{restatable}{observation}{nonnetsuper}\label{thm:non-net-super}
In a text $T$,
if an occurrence $(s,e)$ is a proper super-occurrence of a net occurrence,
then $(s,e)$ is not a net occurrence.   
\end{restatable}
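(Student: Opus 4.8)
The plan is to prove the contrapositive-style structural fact directly from the definition of a net occurrence. Let $(i,j)$ be the net occurrence that $(s,e)$ properly contains, so $s \leq i \leq j \leq e$ with $(s,e) \neq (i,j)$. The key observation is that a proper super-occurrence must extend $(i,j)$ on at least one side: either $s < i$ or $e > j$ (or both). I would split into these cases and show that in each, one of the two extensions required for $(s,e)$ to be net actually fails, because the corresponding extension of $(i,j)$ was already forced to be \emph{unique} by the net property of $(i,j)$.

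First I would treat the case $s < i$. Since $(i,j)$ is a net occurrence, its left extension $T[i-1 \ldots j]$ is unique in $T$. Now consider the left extension of $(s,e)$, namely $T[s-1 \ldots e]$. I would argue that this longer string contains $T[i-1\ldots j]$ as a factor at a fixed internal offset, so any occurrence of $T[s-1\ldots e]$ would induce an occurrence of the unique string $T[i-1\ldots j]$; hence $T[s-1\ldots e]$ can occur at most as often as $T[i-1\ldots j]$, i.e.\ exactly once, making it unique. But for $(s,e)$ to be a net occurrence, $T[s\ldots e]$ must be \emph{repeated}, and since $T[s-1\ldots e]$ is unique while it is the left extension of $(s,e)$, this is consistent only if we instead examine the right side; the cleaner route is to note directly that $T[s \ldots e]$ itself already contains the unique string $T[i-1 \ldots j]$ whenever $s \leq i-1$, i.e.\ $s < i$, forcing $T[s\ldots e]$ to be unique and therefore $(s,e)$ not to be a net occurrence (a net occurrence requires its own string to be repeated). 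The symmetric case $e > j$ uses that $T[i \ldots j+1]$ is unique and is contained in $T[s\ldots e]$ whenever $e \geq j+1$, again forcing $T[s\ldots e]$ to be unique.

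The main obstacle, and the point deserving the most care, is the interaction between the two cases and the boundary conventions. If $s < i$ but $s = i-1$ fails to strictly contain the unique left extension, I must be careful that $s < i$ indeed gives $s \leq i-1$, so $T[s\ldots e] \supseteq T[i-1\ldots j]$ as a factor; this is immediate for integer positions but worth stating. I also need to handle the degenerate boundaries where $i=1$ or $j=|T|$: in those cases the relevant extension of $(i,j)$ is \emph{defined} to be unique by convention, and I would check that this convention still yields a genuine unique substring contained in $T[s\ldots e]$ whenever the super-occurrence extends past that boundary—though note that if $i=1$ then no occurrence has $s < i$, so that subcase is vacuous, and symmetrically for $j=|T|$. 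Assembling these, since $(s,e)$ properly contains $(i,j)$ it must extend on some side, in every surviving case $T[s\ldots e]$ is shown to be unique, and a unique string cannot be a net occurrence by definition; this completes the argument.
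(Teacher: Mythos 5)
Your proposal is correct and, after some initial meandering about the left extension of $(s,e)$, lands on exactly the paper's argument: a proper super-occurrence extends the net occurrence on at least one side, so $T[s\ldots e]$ contains the corresponding unique extension string $T[i-1\ldots j]$ or $T[i\ldots j+1]$ and is therefore itself unique, hence not a net occurrence. Your explicit check that the boundary conventions at $i=1$ and $j=|T|$ make the corresponding subcases vacuous is a small point the paper glosses over, but the approach is the same.
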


\begin{restatable}{observation}{nonnetsub}\label{thm:non-net-sub}
In a text $T$,
if an occurrence $(s,e)$ is a proper sub-occurrence of a net occurrence,
then $(s,e)$ is not a net occurrence.     
\end{restatable}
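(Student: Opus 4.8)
The plan is to prove this directly by producing, for any proper sub-occurrence $(s,e)$ of a net occurrence $(i,j)$, a single one-character extension of $(s,e)$ that is \emph{repeated}. Since the definition of a net occurrence demands that \emph{both} the left extension $T[s-1\ldots e]$ and the right extension $T[s\ldots e+1]$ be unique, exhibiting just one repeated extension already disqualifies $(s,e)$, and I never need to reason about whether $T[s\ldots e]$ itself is repeated. The engine of the argument is the elementary fact that every substring of a repeated string is again repeated: if $T[i\ldots j]$ occurs at two distinct starting positions, then the window at any fixed offset inside it occurs at two correspondingly distinct positions. I would either isolate this as a one-line observation or invoke it inline, since it is precisely what transfers the repeatedness of $T[i\ldots j]$ down to the extensions of $(s,e)$.

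Next I would split according to the two ways a sub-occurrence can be proper. By definition $i \le s \le e \le j$ with $s > i$ or $e < j$. If $s > i$, then $s-1 \ge i$, so the left extension $T[s-1\ldots e]$ satisfies $i \le s-1$ and $e \le j$ and is therefore a substring of $T[i\ldots j]$; being a substring of a repeated string, it is repeated, so the left extension of $(s,e)$ is not unique and $(s,e)$ is not a net occurrence. Symmetrically, if $e < j$, then $e+1 \le j$, so the right extension $T[s\ldots e+1]$ satisfies $i \le s$ and $e+1 \le j$, is a substring of $T[i\ldots j]$, and is likewise repeated, again disqualifying $(s,e)$. As at least one of the two cases must hold for a proper sub-occurrence, the claim follows. (This is the natural dual of \cref{thm:non-net-super}, which instead propagates the \emph{uniqueness} of the net occurrence's own extensions outward to any superstring.)

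The only point I would check with care, and the closest thing to an obstacle, is that the boundary conventions of the net-occurrence definition never interfere: I must confirm that the extension characters actually lie inside $T$ so that the relevant extension is a genuine substring rather than one that is merely \emph{assumed} unique. This is immediate, since $s > i \ge 1$ forces $s-1 \ge 1$, and $e < j \le |T|$ forces $e+1 \le |T|$. With that bookkeeping settled, no further difficulty arises; the proof is short exactly because the repeatedness of $T[i\ldots j]$ is inherited by every extension of a strictly smaller sub-occurrence that still fits within $[i,j]$.
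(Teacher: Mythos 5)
Your proposal is correct and follows essentially the same route as the paper's own proof: both arguments observe that because the containing net occurrence corresponds to a repeated string, at least one of the two one-character extensions of a proper sub-occurrence still lies inside it and is therefore repeated, disqualifying the sub-occurrence. Your version merely spells out the case split on $s>i$ versus $e<j$ and the boundary bookkeeping that the paper leaves implicit.
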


\begin{remark}
In a text, both a string and its substring
can have positive NF, for example, in \texttt{abaababaabaab}, both \texttt{abaaba} and  \texttt{abaab} have positive NF.
However, this relationship does not hold for an occurrence and its sub-occurrence,
as shown in the above two observations. 
\end{remark}

To show that a given set of net occurrences in $T$
are the only ones in $T$,
the above two observations allow us to ignore
any occurrence that is either a sub-occurrence or 
a super-occurrence of a net occurrence.
To fully use these two observations,
we focus on the case when
the given net occurrences ``overlap'' one another and collectively ``cover'' the text.
Consequently, the only occurrences that need to be explicitly examined 
are the super-occurrences of those corresponding to the ``overlapping regions''
of these net occurrences. 
To formalize this, we introduce the following definition and lemma.

\begin{definition}[ONOC and BNSO]\label{def:onoc}
Consider a text $T$ and 
a set of $c$ net occurrences in $T$:
$\mathcal{C} = \{(i_1, j_1), (i_2, j_2), \ldots, (i_c, j_c) \}$.
We say $\mathcal{C}$ is an
\emph{overlapping net occurrence cover (ONOC)} of $T$ if
$i_1 = 1$, $i_{k+1} \leq j_k$ for $1 \leq k \leq c-1$, and $j_c = n$.
Each occurrence in the set 
$\{ (i_2, j_1), (i_3, j_2), \ldots (i_{c}, j_{c-1}) \}$ 
is a \emph{bridging net sub-occurrence (BNSO)} of $\mathcal{C}$.
\end{definition}

\noindent
An example of \cref{def:onoc}
is shown in \cref{fig:onoc-example}.

\begin{figure}[t]
\centering
\includegraphics[width=0.4\linewidth]{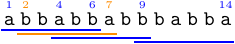}
\caption{
An example for \cref{def:onoc}.
The set
$\{(1,6),(4,9),(9,14)\}$
is an ONOC,
with each of its net occurrences underlined in blue;
$\{(4,6),(9,9)\}$ is the corresponding set of BNSOs.
Note that $(2, 7)$ is a net occurrence outside of this ONOC,
underlined in orange.
}
\label{fig:onoc-example}
\end{figure}

\begin{restatable}{lemma}{superoccbnso}\label{thm:super-occ-bnso}
For a text $T$,
if there exists an ONOC $\mathcal{C}$ of $T$ such that 
$\mathcal{C}$ does not contain all the net occurrences in $T$, then each
net occurrence in $T$ outside of $\mathcal{C}$  must be  a  
super-occurrence of $(i-1, j+1)$,
where $(i, j)$ is  a BNSO of $\mathcal{C}$.    
\end{restatable}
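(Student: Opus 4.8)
The plan is to fix an arbitrary net occurrence $(s,e)$ of $T$ that lies outside $\mathcal{C}$ and to pin down, from the covering structure alone, a single BNSO that it must straddle. First I would extract two consequences of $(s,e)\notin\mathcal{C}$: since every $(i_k,j_k)$ is itself a net occurrence, \cref{thm:non-net-sub} forbids $(s,e)$ from being a proper sub-occurrence of any $(i_k,j_k)$ and \cref{thm:non-net-super} forbids it from being a proper super-occurrence; as $(s,e)$ equals no member of $\mathcal{C}$ either, it is \emph{neither} a sub- nor a super-occurrence of any member of $\mathcal{C}$. Applying the same two observations to members of $\mathcal{C}$ against one another shows they are pairwise non-nested, so we may assume them indexed by increasing starting position (as the conditions $i_1=1$, $i_{k+1}\le j_k$, $j_c=n$ in \cref{def:onoc} intend), giving $i_1<i_2<\cdots<i_c$ and, by non-nestedness, $j_1<j_2<\cdots<j_c$.

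Next I would locate $(s,e)$ inside this monotone chain. Because $(s,e)$ is not a sub-occurrence of $(i_1,j_1)$ while $s\ge 1=i_1$, I obtain $e>j_1$; symmetrically, since it is not a sub-occurrence of $(i_c,j_c)$ while $e\le n=j_c$, I obtain $s<i_c$. Let $p$ be the largest index with $i_p\le s$ and $q$ the smallest index with $e\le j_q$; the bounds just derived guarantee $1\le p\le c-1$ and $2\le q\le c$. By monotonicity, $(s,e)$ is a sub-occurrence of $(i_k,j_k)$ precisely for indices $k$ with $q\le k\le p$, and since no such $k$ exists this forces $p<q$. In particular $(i_{p+1},j_p)$ is then a genuine BNSO of $\mathcal{C}$.

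Finally I would verify that $(s,e)$ is a super-occurrence of $(i_{p+1}-1,\,j_p+1)$. Maximality of $p$ gives $s<i_{p+1}$, hence $s\le i_{p+1}-1$; minimality of $q$ gives $j_{q-1}<e$, and monotonicity with $p\le q-1$ gives $j_p\le j_{q-1}$, hence $e>j_p$, i.e.\ $e\ge j_p+1$. The endpoints $i_{p+1}-1\ge 1$ and $j_p+1\le n$ are legitimate positions since $s\ge 1$ and $e\le n$. Thus $(s,e)$ is a super-occurrence of $(i-1,j+1)$ for the BNSO $(i,j)=(i_{p+1},j_p)$, which is exactly the claim.

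I expect the crux to be the middle paragraph: correctly defining $p$ and $q$, confirming their ranges so that the index $(i_{p+1},j_p)$ actually names a BNSO, and converting the purely combinatorial fact ``$(s,e)$ is a sub-occurrence of no member of $\mathcal{C}$'' into the index-crossing inequality $p<q$. This crossing step is where \cref{thm:non-net-sub} does the essential work and where the established monotonicity of the chain is indispensable; the concluding endpoint inequalities are then routine.
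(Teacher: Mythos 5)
Your proof is correct and follows essentially the same route as the paper's: both hinge on locating the start of the stray net occurrence among the $i_k$'s and invoking \cref{thm:non-net-sub} to rule out its being nested inside a member of $\mathcal{C}$, which forces its right endpoint past the relevant $j_k$ and yields the bridging sub-occurrence. The paper phrases this as a proof by contradiction with a case split on the position of $s$, whereas you run the contrapositive directly via the indices $p$ and $q$ (and are more explicit about the monotone ordering of the cover, which the paper assumes implicitly), but the substance is identical.
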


In the example in \cref{fig:onoc-example},
note that net occurrence $(2,7)$ 
is indeed a super-occurrence of 
$(4-1, 6+1)$, where $(4, 6)$ is a BNSO.

In \cref{sec:fib-net-occ} and \cref{sec:tm-net-occ},
we apply \cref{thm:super-occ-bnso}
in three steps.
First, for a Fibonacci or Thue–Morse word, we show that an ONOC exists. 
Next, we examine the set of BNSOs of the ONOC.
Finally, we prove that no super-occurrence
of $(i-1,j+1)$ (where $(i, j)$ is a BNSO) is a  net occurrence,
thus concluding that the ONOC already contains
all the net occurrences in the text.

\section{
Occurrences of Fibonacci Words of Smaller Order
}\label{sec:occ-fib-smaller-ord}

We study the occurrences of $F_{i-j}$ in $F_i$
for appropriate $i$ and $j$.
These results will help us prove the only net occurrences in $F_i$ in \cref{sec:fib-net-occ} and  may also be of independent interest.

When $j = 1$, with $F_i = F_{i-1} \ F_{i-2}$,
we have one occurrence of $F_{i-1}$ at position 1.
The following result shows that this is the only one.

\begin{lemma}[\cite{journal/tit/2021/navarro}]\label{thm:f-i-1}
$F_{i-1}$ only occurs at position 1 in $F_i$ for $i \geq 3$.
\end{lemma}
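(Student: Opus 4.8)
The plan is to show that any occurrence of $F_{i-1}$ in $F_i$ must begin at position $1$, using the size relationship between $F_{i-1}$ and $F_i$ together with the non-overlapping property of \cref{thm:f-i-f-i}. First I would note that since $f_i = f_{i-1} + f_{i-2}$ and $f_{i-2} < f_{i-1}$, the word $F_{i-1}$ is long enough that at most a limited number of its occurrences can fit in $F_i$; in fact any occurrence starting at position $s$ must satisfy $s + f_{i-1} - 1 \leq f_i$, i.e. $s \leq f_{i-2} + 1$. So every occurrence of $F_{i-1}$ starts within the first $f_{i-2}+1$ positions of $F_i$, which lies inside the prefix $F_{i-1}$ of $F_i = F_{i-1}\,F_{i-2}$.

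The key step is to rule out any occurrence starting at a position $s$ with $2 \leq s \leq f_{i-2}+1$. My approach would be to relate such an occurrence to the non-overlapping result. Using \cref{thm:fib-basic-fac}, I would write $F_i$ in terms of blocks $F_{i-1}$, so that $F_i\,F_i$ (or a suitable prefix/extension of $F_i$) contains $F_{i-1}\,F_{i-1}$ as a factor, and then invoke \cref{thm:f-i-f-i}, which says $F_{i-1}$ occurs exactly twice in $F_{i-1}\,F_{i-1}$, namely at the two aligned positions. The idea is that a second occurrence of $F_{i-1}$ starting at some $s \geq 2$ inside the leading copy would, by the self-overlapping structure, force an occurrence of $F_{i-1}$ in $F_{i-1}\,F_{i-1}$ at a non-aligned position, contradicting \cref{thm:f-i-f-i}.

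The main obstacle will be making the reduction to \cref{thm:f-i-f-i} precise, since that lemma is stated for $F_i\,F_i$ whereas here I have a single copy of $F_i$ whose suffix $F_{i-2}$ is shorter than $F_{i-1}$. I would handle this by observing that any occurrence of $F_{i-1}$ at position $s \geq 2$ overlaps the prefix $F_{i-1}$, and that the overlapping characters are determined by the prefix of $F_{i-1}$ itself; comparing the overlap region against the known characters of $F_{i-1}\,F_{i-2}$ should produce the needed contradiction, possibly with a short appeal to \cref{thm:no-aaa} to eliminate borderline alignments near $s = f_{i-2}+1$. An alternative, cleaner route would be a direct induction on $i$: assuming $F_{i-2}$ occurs only at position $1$ in $F_{i-1}$, I would analyze how occurrences of $F_{i-1} = F_{i-2}\,F_{i-3}$ in $F_i = F_{i-1}\,F_{i-2}$ must align, tracking the boundary between the two defining blocks and using the inductive hypothesis to pin the start to position $1$. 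I expect the induction base cases (small $i$) to be routine and the inductive step to hinge on the same non-overlapping principle.
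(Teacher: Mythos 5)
The paper does not actually prove this lemma---it is imported verbatim from \cite{journal/tit/2021/navarro}---so there is no in-paper argument to compare against; your proposal is a self-contained derivation from \cref{thm:f-i-f-i}. Your first route is correct and can be made precise in one line, which dissolves the "main obstacle" you worry about: for $i \geq 4$, $F_{i-2}$ is a prefix of $F_{i-1} = F_{i-2}\,F_{i-3}$, hence $F_i = F_{i-1}\,F_{i-2}$ is a prefix of $F_{i-1}\,F_{i-1}$, so every occurrence of $F_{i-1}$ in $F_i$ is literally an occurrence of $F_{i-1}$ in $F_{i-1}\,F_{i-1}$; by \cref{thm:f-i-f-i} (applied at order $i-1$) these sit only at positions $1$ and $f_{i-1}+1$, and your bound $s \leq f_{i-2}+1 < f_{i-1}+1$ forces $s=1$. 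No character-by-character comparison and no appeal to \cref{thm:no-aaa} is needed; the only residual case is $i=3$, where the prefix claim fails ($F_1 = \texttt{b}$ is not a prefix of $F_2 = \texttt{a}$) but the statement is checked by inspection on $F_3 = \texttt{ab}$. Your alternative inductive route is shakier than you suggest: an occurrence of $F_{i-1}$ at position $s \geq 2$ yields an occurrence of its prefix $F_{i-2}$ that can spill past the leading block $F_{i-1}$ (since $2f_{i-2} > f_{i-1}$), so the inductive hypothesis about occurrences \emph{inside} $F_{i-1}$ does not directly apply. Stick with the embedding into $F_{i-1}\,F_{i-1}$.
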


The two factorizations 
in the following result
reveal three occurrences of $F_{i-2}$ in $F_i$.

\begin{observation}[\cite{conf/cpm/2024/guo}]\label{thm:f-i-fac}
For each $i \geq 6$,
\begin{align}
F_i &= F_{i-2} \ F_{i-3} \ F_{i-2} \label{eq:fac-1}  \\
F_i &= F_{i-2} \ F_{i-2} \ F_{i-5} \ F_{i-4} \label{eq:fac-2} .
\end{align}    
\end{observation}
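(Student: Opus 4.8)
The plan is to prove both factorizations by induction on $i$, unwinding the recurrence $F_i = F_{i-1} \ F_{i-2}$ just enough times to expose the claimed structure. For \eqref{eq:fac-1}, I would start from $F_i = F_{i-1} \ F_{i-2}$ and expand the leading factor once more: $F_i = F_{i-1} \ F_{i-2} = (F_{i-2} \ F_{i-3}) \ F_{i-2}$, which is immediately the desired form $F_{i-2} \ F_{i-3} \ F_{i-2}$. This one requires only a single application of the definition (valid for $i \geq 4$, so certainly for $i \geq 6$) and needs no induction at all.

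The factorization \eqref{eq:fac-2} is the substantive one. The first step is to rewrite $F_i$ so that two copies of $F_{i-2}$ appear at the front. Starting from \eqref{eq:fac-1}, namely $F_i = F_{i-2} \ F_{i-3} \ F_{i-2}$, I would focus on the middle-and-right block $F_{i-3} \ F_{i-2}$ and show it equals $F_{i-2} \ F_{i-5} \ F_{i-4}$; concatenating the leading $F_{i-2}$ then yields \eqref{eq:fac-2}. Equivalently, since $F_{i-3} \ F_{i-2} = F_{i-3} \ F_{i-3} \ F_{i-4}$ (using $F_{i-2} = F_{i-3} \ F_{i-4}$), it suffices to verify the identity $F_{i-3} \ F_{i-3} = F_{i-2} \ F_{i-5}$, after which appending $F_{i-4}$ completes the claim. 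This reduces everything to a single clean identity between Fibonacci words.

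The remaining identity $F_{i-3} \ F_{i-3} = F_{i-2} \ F_{i-5}$ is where the real work sits, and I expect it to be the main obstacle, though a mild one. Setting $m := i-3$, it reads $F_m \ F_m = F_{m+1} \ F_{m-2}$. One direction of attack is a length check together with a prefix argument: both sides have length $2 f_m = f_{m+1} + f_{m-2}$ (which follows from $f_{m+1} = f_m + f_{m-1}$ and $f_m = f_{m-1} + f_{m-2}$, giving $f_{m+1} + f_{m-2} = f_m + f_{m-1} + f_{m-2} = 2 f_m$), so it is enough to show one side is a prefix of the other. Since $F_{m+1} = F_m \ F_{m-1}$, the right side is $F_m \ F_{m-1} \ F_{m-2} = F_m \ (F_{m-1} \ F_{m-2}) = F_m \ F_m$, which is exactly the left side. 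So in fact the identity collapses by one more application of the defining recurrence, with no induction needed beyond the base case $i \geq 6$ ensuring all indices $i-5, i-4, i-3$ are at least $1$.

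In summary, I would present \eqref{eq:fac-1} as a direct one-line expansion, then derive \eqref{eq:fac-2} by substituting $F_{i-2} = F_{i-3} \ F_{i-4}$ into \eqref{eq:fac-1} and invoking $F_{i-3} \ F_{i-3} = F_{i-2} \ F_{i-5}$, the latter itself being an immediate consequence of $F_{i-2} = F_{i-3} \ F_{i-4}$ and $F_{i-3} = F_{i-4} \ F_{i-5}$. The constraint $i \geq 6$ is exactly what guarantees that the smallest index appearing, $i-5$, is a legitimate positive order, so no separate base-case verification is necessary beyond checking that all factors are well-defined.
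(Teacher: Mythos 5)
Your derivation is correct: both identities follow by repeatedly unfolding the recurrence $F_i = F_{i-1}\,F_{i-2}$, and your reduction of \eqref{eq:fac-2} to $F_{i-3}\,F_{i-3} = F_{i-2}\,F_{i-5}$ (which itself collapses under one more application of the recurrence) is exactly the intended routine verification. The paper states this as an observation imported from prior work without an explicit proof, and your argument is the standard one it implicitly relies on.
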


\noindent
The following result confirms that these are the only three.

\begin{lemma}[\cite{journal/tit/2021/navarro}]\label{thm:f-i-2}
$F_{i-2}$ only occurs at positions 
$1$, $f_{i-2}+1$, and $f_{i-1}+1$ in $F_i$ for $i \geq 6$.
\end{lemma}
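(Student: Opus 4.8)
The plan is to prove that $F_{i-2}$ occurs at exactly the three positions $1$, $f_{i-2}+1$, and $f_{i-1}+1$. \cref{thm:f-i-fac} already exhibits these three occurrences: \eqref{eq:fac-1} gives occurrences at positions $1$ and $f_{i-2}+f_{i-3}+1 = f_{i-1}+1$, while \eqref{eq:fac-2} gives the occurrence at position $f_{i-2}+1$. So the entire task is to show there are \emph{no others}. My approach is an inductive decomposition: I would use \cref{thm:fib-basic-fac} to write $F_i$ as a concatenation of blocks each equal to $F_{i-1}$ or $F_{i-2}$, and then analyze where a copy of $F_{i-2}$ can sit relative to these block boundaries.

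First I would recall that $F_i = F_{i-1}\,F_{i-2}$ and, by \cref{thm:f-i-1}, the prefix $F_{i-1}$ occurs only at position $1$. The natural induction is to push the problem down to $F_{i-1}$: since $F_{i-1} = F_{i-2}\,F_{i-3}$, any occurrence of $F_{i-2}$ lying entirely within the prefix $F_{i-1}$ is governed by the count of $F_{i-2}$-occurrences in $F_{i-1}$, which (by the shifted version of \cref{thm:f-i-1} applied with $i$ replaced by $i-1$) is just the single occurrence at position $1$. The remaining copy of $F_{i-2}$ in \eqref{eq:fac-1} starts at $f_{i-1}+1$, i.e.\ it is the final suffix block, and occurrences crossing the boundary between the $F_{i-1}$ prefix and the $F_{i-2}$ suffix must be handled separately. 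The crucial boundary-crossing case is the one yielding the middle occurrence at $f_{i-2}+1$, which is exactly the content of factorization \eqref{eq:fac-2}.

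The heart of the argument is ruling out spurious boundary-crossing occurrences, and I expect this to be the main obstacle. An occurrence of $F_{i-2}$ at some position $p$ with $1 < p < f_{i-1}+1$ and $p \neq f_{i-2}+1$ would have to straddle block boundaries; to exclude it I would appeal to the primitivity/aperiodicity of Fibonacci words together with \cref{thm:f-i-f-i}, which says $F_{i-2}$ occurs exactly twice in $F_{i-2}\,F_{i-2}$. Concretely, two distinct occurrences of $F_{i-2}$ that overlap would force $F_{i-2}$ to have a nontrivial period shorter than its length; combined with the forbidden-pattern constraints of \cref{thm:no-aaa} (no \texttt{aaa} or \texttt{bb}), this should pin the overlaps to exactly the three legal offsets and forbid all intermediate shifts. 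I would set up the induction so that the inductive hypothesis controls occurrences inside each $F_{i-1}$ block, and then only the finitely many alignments near the single internal boundary need explicit checking.

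In summary, the proof structure is: (i) verify the three stated occurrences via \cref{thm:f-i-fac}; (ii) induct, using \cref{thm:f-i-1} to confine interior occurrences to block starts; (iii) use \cref{thm:f-i-f-i} and the overlap/periodicity obstruction, reinforced by \cref{thm:no-aaa}, to eliminate every boundary-crossing occurrence except the one at $f_{i-2}+1$. The delicate step is the overlap analysis in (iii), since a careless case split could miss an alignment; the safeguard is to reduce every potential extra occurrence to a forbidden short period in $F_{i-2}$, which is impossible because Fibonacci words are not periodic with period below their full length.
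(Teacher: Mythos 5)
First, a point of reference: the paper does not prove this lemma at all --- it is imported verbatim from \cite{journal/tit/2021/navarro} --- so your proposal can only be judged on its own terms rather than against an in-paper argument.

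Your decomposition is the right one (occurrences inside the prefix $F_{i-1}$ are handled by \cref{thm:f-i-1} with $i$ shifted to $i-1$; the suffix block gives $f_{i-1}+1$; only boundary-crossing occurrences remain), but the safeguard you lean on for the crucial step is false. You assert that overlapping occurrences of $F_{i-2}$ are impossible ``because Fibonacci words are not periodic with period below their full length.'' They are: $F_{i-2}=F_{i-3}\,F_{i-4}$ with $F_{i-4}$ a prefix of $F_{i-3}$, so $F_{i-2}$ has period $f_{i-3}<f_{i-2}$, and indeed the two genuine occurrences at $f_{i-2}+1$ and $f_{i-1}+1$ overlap at shift exactly $f_{i-3}$. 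Taken literally, your periodicity obstruction would also ``rule out'' the real occurrence at $f_{i-1}+1$. What you actually need is that the \emph{minimal} period of $F_{i-2}$ equals $f_{i-3}$ (true, but not among the facts the paper cites), combined with the observation that every boundary-crossing start position $p$ lies in $[f_{i-3}+2,\,f_{i-1}]$ and therefore satisfies $|p-(f_{i-2}+1)|\le f_{i-3}-1$; a fourth occurrence would then force a period strictly smaller than $f_{i-3}$, which is the actual contradiction. Alternatively, you can avoid periods altogether using only \cref{thm:f-i-f-i}: crossing occurrences starting in $[f_{i-3}+2,\,f_{i-2}+1]$ lie inside the prefix $F_{i-2}\,F_{i-2}$ visible in \eqref{eq:fac-2} and are excluded there, while those starting in $[f_{i-2}+2,\,f_{i-1}]$ would place a copy of $F_{i-3}$ strictly inside the square $F_{i-3}\,F_{i-3}$ at the head of the suffix $F_{i-3}\,F_{i-2}=F_{i-3}\,F_{i-3}\,F_{i-4}$. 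Either repair closes the gap; as written, ``this should pin the overlaps to exactly the three legal offsets'' is not established, and \cref{thm:no-aaa} plays no real role in the argument.
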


\noindent
We next provide the result when $j=3$ and $i \geq 7$.

\begin{lemma}\label{thm:f-i-3}
$F_{i-3}$ only occurs at positions 
$1$, $f_{i-3}+1$, $f_{i-2}+1$, and $f_{i-1}+1$ in $F_i$.
\end{lemma}

\begin{proof}
From \cref{thm:f-i-2},
notice that the second occurrence of $F_{i-2}$ 
follows immediately after the first occurrence,
and the second and the third occurrences of $F_{i-2}$ overlap.
Then, based on \Crefrange{eq:fac-1}{eq:fac-2},
we consider the following three cases.

\begin{enumerate}[{Case} 1]
\item
$F_{i-3}$ occurs within $F_{i-2}$.
From \cref{thm:f-i-1}, $F_{i-3}$ only occurs at position 1 in $F_{i-2}$.
Thus, using \cref{thm:f-i-2},
the only occurrences of $F_{i-3}$ within $F_{i-2}$ in $F_i$
are at positions $1$, $f_{i-2}+1$, and $f_{i-1}+1$.

\item
$F_{i-3}$ occurs across the boundary of $F_{i-2} \ F_{i-3}$.
Again from \cref{thm:f-i-2},
the only occurrences of $F_{i-3}$ within $F_{i-2} \ F_{i-3} = F_{i-1}$
are at positions $1$, $f_{i-3}+1$, and $f_{i-2}+1$.
Note that the occurrence at position $f_{i-3}+1$ is the boundary-crossing one:
we apply \cref{eq:fac-2} on $F_{i-1}$ and obtain
$F_{i-2} \ F_{i-3}  = F_{i-3} \ F_{i-3} \ F_{i-6} \ F_{i-5}$.

\item
$F_{i-3}$ occurs across the boundary of $F_{i-3} \ F_{i-2}$.
Note that 
$ F_{i-3} \ F_{i-2}  = F_{i-3} \ F_{i-3} \ F_{i-4}$.
Using \cref{thm:f-i-f-i},
$F_{i-3}$ does not occur in $F_{i-3} \ F_{i-3}$ and thus does not occur across the boundary of 
$F_{i-3} \ F_{i-2}$.
\qedhere
\end{enumerate} 
\end{proof}

We now present the main result of the section,
illustrated in \cref{fig:fib-occ}.
Before that, we define the following.
For a set of integers $A$ and another integer $i$,
$A \oplus i$ denotes the set 
$\{ a + i \ : \ a \in A \}$.
We write $\max(A)$ for the maximum element of set $A$.

\begin{figure}[t]
\centering
\includegraphics[width=0.55\linewidth]{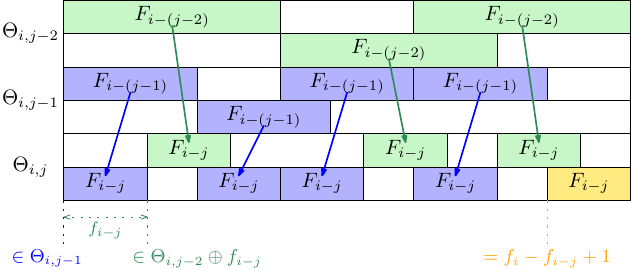}
\caption{An illustration of \cref{thm:fib-occ-inductive} when $j = 4$.
Each row depicts a factorization of $F_i$ with relevant factors highlighted in colors.
The top two, middle two, and bottom two rows correspond to sets 
$\Theta_{i,j-2}$, $\Theta_{i,j-1}$ and $\Theta_{i,j}$, respectively.
Each green and blue occurrence of $F_{i-j}$ is introduced by 
an occurrence of $F_{i-(j-2)}$ and $F_{i-(j-1)}$, respectively.
The yellow occurrence is the rightmost one.
}
\label{fig:fib-occ}
\end{figure}

\begin{theorem}\label{thm:fib-occ-inductive} 
Let $\Theta_{i,j}$ denote the set of the starting positions of the occurrences of $F_{i-j}$ in $F_{i}$.
Then, $\Theta_{i,0} = \Theta_{i,1} = \{1\}$, 
and for $2 \leq j \leq i - 4$,
\begin{itemize}
\item 
when $j$ is even, 
$ \Theta_{i,j} = \Theta_{i,j-1} \cup (\Theta_{i,j-2} \oplus f_{i-j}) \cup \{f_i - f_{i-j} + 1 \}$,
where 
the three sets in the union
are mutually disjoint,
and $\max(\Theta_{i,j}) = f_i - f_{i-j} + 1 $;
\item 
when $j$ is odd,
$\Theta_{i,j} = \Theta_{i,j-1} \cup (\Theta_{i,j-2} \oplus f_{i-j})$
where
the two sets in the union are disjoint,
and $\max(\Theta_{i,j}) = f_i - f_{i-(j-1)} + 1$.
\end{itemize}
\end{theorem}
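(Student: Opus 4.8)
The plan is to prove the statement by induction on $j$, establishing the set identity, the disjointness claim, and the value of $\max(\Theta_{i,j})$ simultaneously. The cases $j\in\{0,1\}$ are immediate, and $j\in\{2,3\}$ are precisely \cref{thm:f-i-2,thm:f-i-3}, which one verifies satisfy the claimed recurrence and maximum formulas. For the inductive step I would fix $j$ with $4\le j\le i-4$ and assume the full description of $\Theta_{i,j-1}$ and $\Theta_{i,j-2}$; in particular I may use the stated values of $\max(\Theta_{i,j-1})$ and $\max(\Theta_{i,j-2})$. Note that the range $j\le i-4$ is exactly what makes \cref{eq:fac-2} applicable to $F_{i-(j-2)}$ (which requires order at least $6$), so the smaller-order tools remain in force throughout the step.

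For the inclusion $\Theta_{i,j}\supseteq(\cdots)$ I would exhibit each term as a genuine family of occurrences. Since $F_{i-(j-1)}=F_{i-j}\,F_{i-(j+1)}$, the word $F_{i-j}$ is a prefix of $F_{i-(j-1)}$, so every $p\in\Theta_{i,j-1}$ is a start of $F_{i-j}$, giving $\Theta_{i,j-1}\subseteq\Theta_{i,j}$. Applying \cref{eq:fac-2} to $F_{i-(j-2)}$ yields $F_{i-(j-2)}=F_{i-j}\,F_{i-j}\,F_{i-j-3}\,F_{i-j-2}$, whose second factor begins at offset $f_{i-j}$; hence every $p\in\Theta_{i,j-2}$ contributes a start $p+f_{i-j}$, giving $\Theta_{i,j-2}\oplus f_{i-j}\subseteq\Theta_{i,j}$. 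Finally, iterating $F_i=F_{i-1}\,F_{i-2}$ shows that $F_{i-j}$ is a suffix of $F_i$ exactly when $j$ is even, which accounts for the extra element $f_i-f_{i-j}+1$ in the even case and for its absence in the odd case.

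The substantial direction is $\subseteq$. Here I would factor $F_i$ into blocks, each equal to $F_{i-(j-1)}$ or $F_{i-(j-2)}$ via \cref{thm:fib-basic-fac}, and locate an arbitrary occurrence of $F_{i-j}$ relative to this factorization. An occurrence lying within a single block is controlled by the smaller-order results applied to that block: inside an $F_{i-(j-1)}$-block, $F_{i-j}=F_{(i-(j-1))-1}$ can start only at the block start by \cref{thm:f-i-1}, while inside an $F_{i-(j-2)}$-block, $F_{i-j}=F_{(i-(j-2))-2}$ can start only at offsets $0$, $f_{i-j}$, and $f_{i-(j-1)}$ by \cref{thm:f-i-2}. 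Each such position is then matched back to a prefix of an $F_{i-(j-1)}$-occurrence (the $\Theta_{i,j-1}$ term) or to the second factor of an $F_{i-(j-2)}$-occurrence (the $\Theta_{i,j-2}\oplus f_{i-j}$ term). Boundary-crossing occurrences are the delicate case: they genuinely arise, as already for $j=2$ the middle occurrence at $f_{i-2}+1$ straddles the boundary of $F_i=F_{i-1}\,F_{i-2}$, so they cannot simply be excluded. I would control them using \cref{thm:f-i-f-i} (so that $F_{i-j}$ occurs in $F_{i-j}\,F_{i-j}$ only as the two evident factors) together with \cref{thm:no-aaa} to pin down the characters surrounding each boundary, and then show every surviving crossing occurrence coincides with one of the listed families.

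To conclude, I would establish disjointness and the maximum. Disjointness follows from the inductive position bounds: using the identity $f_{i-(j-2)}=f_{i-(j-1)}+f_{i-j}$, one checks that $\Theta_{i,j-2}\oplus f_{i-j}$ lies to the right of all but the top of $\Theta_{i,j-1}$ and, in the even case, strictly below the terminal element. The maximum is then a short Fibonacci computation: in the even case $\max(\Theta_{i,j-2})+f_{i-j}=f_i-f_{i-(j-1)}+1$ and $\max(\Theta_{i,j-1})=f_i-f_{i-(j-2)}+1$ are both less than $f_i-f_{i-j}+1$, so the terminal element dominates; in the odd case the same identities give $\max(\Theta_{i,j})=\max(\Theta_{i,j-1})=f_i-f_{i-(j-1)}+1$. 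I expect the main obstacle to be the $\subseteq$ direction, and specifically giving an exhaustive yet non-redundant account of the boundary-crossing occurrences while verifying that the within-block occurrence at offset $f_{i-(j-1)}$ predicted by \cref{thm:f-i-2} is neither omitted nor double-counted against the $\Theta_{i,j-1}$ family.
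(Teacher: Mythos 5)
Your overall strategy coincides with the paper's: strong induction on $j$ with base cases $j\le 3$ from \cref{thm:f-i-1,thm:f-i-2,thm:f-i-3}, a two-inclusion argument for the inductive step, the easy inclusion via the prefix property and the analogue of \cref{eq:fac-2}, and the hard inclusion via a block factorization from \cref{thm:fib-basic-fac} with a within-block versus boundary-crossing case analysis controlled by \cref{thm:f-i-1,thm:f-i-2,thm:f-i-f-i}. The paper factors into blocks of orders $i-(j-1)$ and $i-j$ (so the located word equals the smaller block), whereas you factor one level coarser into $F_{i-(j-1)}$ and $F_{i-(j-2)}$; this is workable but forces you to account for the third within-block occurrence at offset $f_{i-(j-1)}$ of each $F_{i-(j-2)}$-block (it lands in $\Theta_{i,j-1}$ via the start of the following block, or becomes the terminal element when the block is last), which you correctly flag as the delicate point but do not resolve.

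There is, however, one genuine gap: your disjointness argument. You claim that $\Theta_{i,j-1}\cap(\Theta_{i,j-2}\oplus f_{i-j})=\emptyset$ follows from ``inductive position bounds,'' asserting that $\Theta_{i,j-2}\oplus f_{i-j}$ lies to the right of all but the top of $\Theta_{i,j-1}$. This is false, and no interval-separation argument can work, because the two sets genuinely interleave. Already at $j=4$ one has $\Theta_{i,3}=\{1,\,f_{i-3}+1,\,f_{i-2}+1,\,f_{i-1}+1\}$ and $\Theta_{i,2}\oplus f_{i-4}=\{f_{i-4}+1,\,f_{i-2}+f_{i-4}+1,\,f_{i-1}+f_{i-4}+1\}$, whose sorted union alternates between the two sets: $f_{i-4}+1$ sits strictly between $1$ and $f_{i-3}+1$, and $f_{i-2}+f_{i-4}+1$ sits strictly between $f_{i-2}+1$ and $f_{i-1}+1$. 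Position bounds suffice only to show that the terminal element $f_i-f_{i-j}+1$ lies beyond both other sets. To separate $\Theta_{i,j-1}$ from $\Theta_{i,j-2}\oplus f_{i-j}$ you need a combinatorial argument: if some $x$ belonged to both, then $F_{i-(j-1)}$ would start at $x$ while $F_{i-(j-2)}$ starts at $x-f_{i-j}$, forcing $F_{i-(j-2)}=F_{i-j}\,F_{i-(j-1)}$, which contradicts $F_{i-(j-2)}=F_{i-(j-1)}\,F_{i-j}$ by the near-commutative property of Fibonacci words (the two products differ in their last two characters). This is how the paper closes the step, and without it the mutual-disjointness clause of the theorem (and hence \cref{thm:fib-occ-num}) is unproven.
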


\begin{proof}
We proceed by induction on $j$.

\textbf{Base cases.}
When $j = 0$, we have $\Theta_{i,0} = \{1 \}$ trivially. 
When $j = 1$, it follows from \cref{thm:f-i-1} that $\Theta_{i,1} = \{1 \}$.
When $j = 2$, we obtain $\Theta_{i,2} = \{1, f_{i-2} +1, f_{i-1} + 1 \}$ by \cref{thm:f-i-2}. 
Thus, $\Theta_{i,2} = \Theta_{i,1} \cup (\Theta_{i,0} \oplus f_{i-2}) \cup \{ f_i - f_{i-2} + 1 \}$, 
 where the three sets are mutually disjoint, and 
 $\max(\Theta_{i,2}) = f_{i-1} + 1 = f_i - f_{i-2} + 1$.
Next, when $j = 3$, we have 
$\Theta_{i,3} = \{1, f_{i-3} +1, f_{i-2} +1, f_{i-1} + 1 \}$ by \cref{thm:f-i-3}.
Hence, $\Theta_{i,3} = \Theta_{i,2} \cup (\Theta_{i,1} \oplus f_{i-3})$, 
$\Theta_{i,2} \cap (\Theta_{i,1}  \oplus f_{i-3}) = \emptyset$,  
and $\max(\Theta_{i,3}) = f_{i-1} + 1 = f_i - f_{i-(3-1)} + 1$.

\textbf{Inductive step.}  
For each $4 \leq k \leq i-4$, 
assume the claim holds for $j=k-2$ and $k-1$,
and we now prove the claim for $j = k$. 
We first prove the claim for even $k$.
Define $\Lambda_{i, k} := \Theta_{i,k-1} \cup (\Theta_{i,k-2} \oplus f_{i-k}) \cup \{f_i - f_{i-k} + 1 \}$.
We aim to show that $\Theta_{i,k} = \Lambda_{i, k}$ by showing
$\Lambda_{i, k} \subset \Theta_{i,k}$ and 
$\Theta_{i,k} \subset \Lambda_{i, k}$.
Before we proceed, note that
$F_{i-(k+2)}, F_{i-(k+1)}, F_{i-k}, F_{i-(k-1)}, F_{i-(k-2)}$
are consecutive Fibonacci words of \emph{increasing} orders.

To prove that
$\Lambda_{i, k} \subset \Theta_{i,k}$,
we will show that each set in the union defining $\Lambda_{i, k}$
is contained in $\Theta_{i,k}$.
Note that $\Theta_{i,k-1} \subset \Theta_{i,k}$
because $F_{i-k}$ is a prefix of $F_{i-(k-1)} = F_{i-k} \ F_{i-(k+1)}$.
Next, we have $(\Theta_{i,k-2} \oplus f_{i-k}) \subset \Theta_{i,k}$
because $F_{i-k}$ occurs at position $f_{i-k} + 1$ in 
$
F_{i-(k-2)} 
= F_{i-k} \ F_{i-k} \ F_{i-(k+3)} \ F_{i-(k+2)}
$
where this factorization can be derived similarly to \cref{eq:fac-2}.
Lastly, 
by the induction hypothesis on $j = k-2$,
we have $f_i - f_{i-(k-2)} + 1 \in \Theta_{i, k-2}$
and it is the rightmost occurrence of $F_{i-(k-2)}$ in $F_i$.
Consider the factorization $F_{i-(k-2)} = F_{i-k} \ F_{i-(k+1)} \ F_{i-k}$,
which can be derived similarly to \cref{eq:fac-1}.
Note that the rightmost occurrence of $F_{i-k}$ in $F_{i-(k-2)}$
is 
at position 
$
  (f_i - f_{i-(k-2)} + 1) + (f_{i-k} + f_{i-(k+1)})
= (f_i - (f_{i-k} + f_{i-k} + f_{i-(k+1)}) + 1) + (f_{i-k} + f_{i-(k+1)})
= f_i - f_{i-k} + 1
$
in $F_i$.
Thus, $f_i - f_{i-k} + 1 \in \Theta_{i,k}$.

Next, we prove $\Theta_{i,k} \subset \Lambda_{i, k}$
by showing that each occurrence of $F_{i-k}$ is in $\Lambda_{i, k}$.
By \cref{thm:fib-basic-fac}, 
there is a factorization of $F_i$ where each factor is either $F_{i-(k-1)}$ or $F_{i-k}$.
We now examine the occurrences of $F_{i-k}$ based on this factorization. 
First, when there is an occurrence of $F_{i-k}$ within $F_{i-(k-1)}$,
this occurrence is in $\Theta_{i,k-1} \subset \Lambda_{i,k}$.
Next, when there is an occurrence of $F_{i-k}$ (underlined)  across the boundary of 
$
  F_{i-(k-1)} \ F_{i-(k-1)} 
= F_{i-k} \ \underline{F_{i-k}} \  F_{i-(k+3)} \ F_{i-(k+2)} \ F_{i-(k+1)} 
$
or across the boundary of 
$
F_{i-(k-1)} \ F_{i-k} = F_{i-(k-2)}
= F_{i-k} \ \underline{F_{i-k}} \ F_{i-(k+3)} \ F_{i-(k+2)}
$,
then, 
by the fact that 
$F_{i-k}$ only occurs at positions 
$1$, $f_{i-k}+1$, and $f_{i-(k-1)}+1$ in $F_{i-(k-2)}$ 
(a direct generalization of \cref{thm:f-i-2}),
 this occurrence of $F_{i-k}$ must be in
$(\Theta_{i,k-2} \oplus f_{i-k}) \subset \Lambda_{i,k}$.
Finally, observe that
there does not exist an occurrence of $F_{i-k}$  across the boundary of 
$
F_{i-k} \ F_{i-(k-1)} = F_{i-k} \ F_{i-k} \ F_{i-(k-1)}
$
or across the boundary of 
$
F_{i-k} \ F_{i-k}
$,
because otherwise,
this would contradict \cref{thm:f-i-f-i}.

Now, consider a position $x \in (\Theta_{i,k-2}  \oplus f_{i-k})$. 
Assume, by contradiction, that $x \in \Theta_{i,k-1}$, 
then, we have $F_{i-(k-2)} = F_{i-k} \ F_{i-(k-1)}$,
which contradicts 
$F_{i-(k-2)} = F_{i-(k-1)} \ F_{i-k} \neq F_{i-k} \ F_{i-(k-1)}$.
This is analogous to $F_{i-1} \ F_{i-2} \neq F_{i-2} \ F_{i-1}$, 
which follows from the ``near-commutative property'' of Fibonacci words~\cite{DBLP:journals/siamcomp/KnuthMP77}.
Thus, $\Theta_{i,k-1} \cap (\Theta_{i,k-2}  \oplus f_{i-k}) = \emptyset$.
Next, by the induction hypothesis, 
$\max(\Theta_{i,k-1} \cup \Theta_{i,k-2}) = f_i - f_{i-(k-2)} + 1$.
Note that $(f_i - f_{i-(k-2)} + 1) + f_{i-k} < f_i - f_{i-k} + 1$.
Therefore, $f_i - f_{i-k} + 1 \notin \Theta_{i,k-1} \cup (\Theta_{i,k-2}  \oplus f_{i-k})$
and $\max(\Theta_{i, k}) = f_i - f_{i-k} + 1$.

The proof for odd $k$ is very similar to even $k$ with the difference being that we do not need to consider  $f_i - f_{i-k} + 1 $ for odd $k$.
\end{proof}

In the following result, 
the case where $0 \leq j \leq  i-4$ 
has been addressed in \cite{journal/tcs/2006/rytter},
while the case where $i-3 \leq j \leq i-1$ is straightforward.
Our characterization in \cref{thm:fib-occ-inductive}
can offer an alternative simpler proof for this result.

\begin{corollary}\label{thm:fib-occ-num}
Consider $F_i$ and $0 \leq j \leq i-1$.
Let $\theta_{i,j}$ denote the number of occurrences of $F_{i-j}$ in $F_i$,
and define $f_{-1} := 1$ and $f_0 := 0$ for convenience.
Then,
\[
\theta_{i,j} = 
\begin{cases}
\begin{aligned}
& f_{j+2} - (j \bmod 2) & \text{~if~} & \quad 0 \leq j \leq  i-4; \\
& f_{j+1}               & \text{~if~} & \quad i-3 \leq j \leq i-2; \\
& f_{j-1}               & \text{~if~} & \quad j = i-1.
\end{aligned}
\end{cases}
\]
\end{corollary}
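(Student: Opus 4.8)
The plan is to split the range of $j$ into the bulk interval $0 \le j \le i-4$, where \cref{thm:fib-occ-inductive} applies directly, and the three boundary values $j \in \{i-3, i-2, i-1\}$, which fall outside that theorem and which I handle by elementary letter counting. The heavy lifting is already done by \cref{thm:fib-occ-inductive}, so the corollary should follow by passing from the set characterization to cardinalities and then solving a Fibonacci-type recurrence.

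First I would extract a numerical recurrence from \cref{thm:fib-occ-inductive}. Because the unions there are over mutually disjoint sets and the shift $\oplus f_{i-j}$ is a cardinality-preserving bijection, taking sizes yields $\theta_{i,0} = \theta_{i,1} = 1$ together with, for $2 \le j \le i-4$, the relation $\theta_{i,j} = \theta_{i,j-1} + \theta_{i,j-2} + 1$ when $j$ is even (the extra $+1$ coming from the singleton $\{f_i - f_{i-j}+1\}$) and $\theta_{i,j} = \theta_{i,j-1} + \theta_{i,j-2}$ when $j$ is odd. I would then prove the closed form $\theta_{i,j} = f_{j+2} - (j \bmod 2)$ by induction on $j$. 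The base cases $j=0,1$ are immediate from $f_2 = 1$ and $f_3 - 1 = 1$. For the inductive step I substitute the hypothesis and track the parity term: when $j$ is even, $\theta_{i,j-1} = f_{j+1}-1$ and $\theta_{i,j-2} = f_j$, so $\theta_{i,j} = (f_{j+1}-1) + f_j + 1 = f_{j+2}$; when $j$ is odd, $\theta_{i,j-1} = f_{j+1}$ and $\theta_{i,j-2} = f_j - 1$, so $\theta_{i,j} = f_{j+1} + f_j - 1 = f_{j+2} - 1$. In both cases this reduces to the single identity $f_{j+1} + f_j = f_{j+2}$, matching $f_{j+2} - (j \bmod 2)$.

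For the boundary I would use the standard letter counts of $F_i$, which follow in one line from $F_i = F_{i-1} \ F_{i-2}$ and the seeds $F_1 = \texttt{b}$, $F_2 = \texttt{a}$: the number of \texttt{a}'s is $f_{i-1}$ and the number of \texttt{b}'s is $f_{i-2}$ (consistent with $f_{-1}=1$, $f_0 = 0$). Since $F_{i-(i-2)} = F_2 = \texttt{a}$ and $F_{i-(i-1)} = F_1 = \texttt{b}$, this immediately gives $\theta_{i,i-2} = f_{i-1} = f_{j+1}$ and $\theta_{i,i-1} = f_{i-2} = f_{j-1}$. For $j = i-3$, i.e.\ counting occurrences of $F_3 = \texttt{ab}$, I would observe that every \texttt{b} in $F_i$ is immediately preceded by an \texttt{a}: the first character is \texttt{a} for $i \ge 2$, and no \texttt{b} is preceded by a \texttt{b} because \texttt{bb} does not occur by \cref{thm:no-aaa}. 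Hence occurrences of \texttt{ab} are in bijection with occurrences of \texttt{b}, giving $\theta_{i,i-3} = f_{i-2} = f_{j+1}$.

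I do not anticipate a genuine obstacle, as the argument is mostly bookkeeping; the two points needing care are the parity tracking in the induction and keeping the three regimes separated. The latter is essential because the closed form $f_{j+2} - (j \bmod 2)$ really does fail at $j = i-3$: there the inductive characterization is no longer valid, since the factorizations it relies on (e.g.\ $F_{i-(k-2)} = F_{i-k} \ F_{i-k} \ F_{i-(k+3)} \ F_{i-(k+2)}$) would require Fibonacci words of non-positive order. I would also remark that for small $i$ the bulk range shrinks or becomes empty, in which case the boundary case analysis covers all remaining instances directly.
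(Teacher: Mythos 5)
Your proposal is correct and follows exactly the route the paper intends: the paper states this corollary without proof, remarking only that the bulk case $0 \leq j \leq i-4$ follows from \cref{thm:fib-occ-inductive} and that the boundary cases are straightforward, and your argument fills in precisely those details (the disjoint-union cardinality recurrence with parity tracking, plus letter counts and the \texttt{bb}-freeness of \cref{thm:no-aaa} for $j \in \{i-3, i-2, i-1\}$). All steps check out, including your correct observation that the three regimes must be kept separate because the recurrence's factorizations break down for $j > i-4$.
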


\section{Occurrences of Thue-Morse Words of Smaller Order}\label{sec:occ-tm-smaller-ord}

\begin{figure}[t]
\centering
\includegraphics[width=0.9\linewidth]{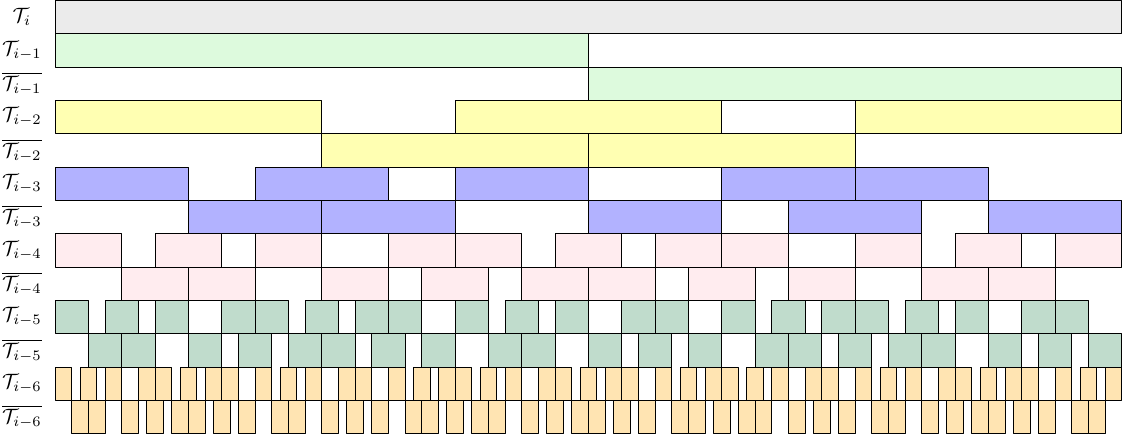}
\caption{
An illustration of the occurrences of $\tm_{i-j}$ and $\flip{\tm_{i-j}}$ 
in $\tm_{i}$ for $1 \leq j \leq 6$.
}
\label{fig:tm-occ}
\end{figure}

We study the occurrences of $\tm_{i-j}$ and $\flip{\tm_{i-j}}$ in each $\tm_i$ for appropriate $i$ and $j$
(the occurrences are shown in \cref{fig:tm-occ} for $1 \leq j \leq 6$).
These results will help us identify the net occurrences in 
each $\tm_i$ in \cref{sec:fib-net-occ} and  may also be of independent interest.
We now present the main result of the section,
illustrated in \cref{fig:tm-a-occ-cases}.

\begin{figure}[t]
\centering
\includegraphics[width=0.9\linewidth]{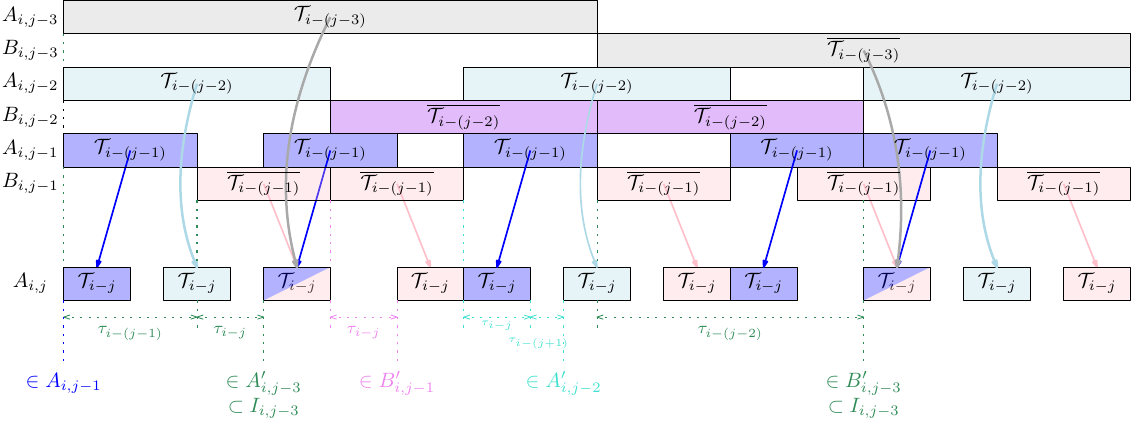}
\caption{
An illustration of \cref{thm:tm-occ}.
Each dark blue, pink, and light blue occurrence of $\tm_{i-j}$ is introduced by
an occurrence of $\tm_{i-(j-1)}$, $\flip{\tm_{i-(j-1)}}$, and $\tm_{i-(j-2)}$ respectively.
Each occurrence of $\tm_{i-j}$ that is both dark blue and pink 
indicates that it is introduced by both an occurrence of $\tm_{i-(j-1)}$ and an occurrence of $\flip{\tm_{i-(j-1)}}$.
}
\label{fig:tm-a-occ-cases}
\end{figure}

\begin{theorem}\label{thm:tm-occ}
For each $i \geq 2$ and $0 \leq j \leq i-1$,
let $A_{i,j}$ and $B_{i,j}$
denote the set of the starting positions of the occurrences 
of $\tm_{i-j}$ and $\flip{\tm_{i-j}}$
in $\tm_i$, respectively.
Then, $A_{i,0} = A_{i,1} = \{1\}$.
For each $j \geq 2$,
we define 
\begin{align*}
B_{i,j-1}' &:= B_{i,j-1} \oplus \ltm_{i-j},  \qquad  
A_{i,j-2}' := A_{i,j-2} \oplus \left(\ltm_{i-j} + \ltm_{i-(j+1)}\right), \\
I_{i,j-3} &:= 
\begin{cases}
\emptyset, & j = 2, \\
A_{i, j-3}' \cup B_{i, j-3}', &  j \geq 3,  \text{~where~}
\end{cases} \\
A_{i, j-3}' &:= A_{i,j-3} \oplus (\ltm_{i-(j-1)} + \ltm_{i-j}),  \qquad 
B_{i, j-3}' := B_{i,j-3} \oplus \ltm_{i-(j-2)}.
\end{align*} 
Then,
$A_{i,j} =  A_{i,j-1}  \cup  B_{i,j-1}' \cup  A_{i,j-2}'$
with 
\[
A_{i,j-1}   \cap B_{i,j-1}' = I_{i,j-3},  \quad
A_{i,j-1}   \cap A_{i,j-2}' = \emptyset,  \quad \text{and} \quad
B_{i,j-1}'   \cap A_{i,j-2}' = \emptyset.
\]
\end{theorem}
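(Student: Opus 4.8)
The plan is to mirror the structure of the proof of \cref{thm:fib-occ-inductive}: proceed by induction on $j$, and for each $j$ establish the set equality $A_{i,j} = A_{i,j-1} \cup B_{i,j-1}' \cup A_{i,j-2}'$ by two inclusions, then separately verify the three intersection formulas. First I would dispose of the base cases. The cases $j=0,1$ give $A_{i,0}=A_{i,1}=\{1\}$ directly, since $\tm_{i-1}$ is a prefix of $\tm_i = \tm_{i-1}\,\flip{\tm_{i-1}}$ and, by \cref{thm:tm-overlap-free}, cannot occur a second time (a second occurrence would overlap the first or produce a non-overlap-free repetition). The case $j=2$ is special because $I_{i,j-3}=\emptyset$; here I would use \cref{thm:tm-basic-fac} with the explicit factorization $\tm_i = \tm_{i-2}\,\flip{\tm_{i-2}}\,\flip{\tm_{i-2}}\,\tm_{i-2}$ to locate the occurrences of $\tm_{i-2}$ by hand and confirm the formula.

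For the inductive step I would argue the forward inclusion ($A_{i,j-1} \cup B_{i,j-1}' \cup A_{i,j-2}' \subseteq A_{i,j}$) by explaining, for each of the three contributing sets, why the corresponding occurrence of $\tm_{i-j}$ genuinely appears. The inclusion $A_{i,j-1}\subseteq A_{i,j}$ follows because $\tm_{i-j}$ is a prefix of $\tm_{i-(j-1)} = \tm_{i-j}\,\flip{\tm_{i-j}}$, so every occurrence of $\tm_{i-(j-1)}$ begins with an occurrence of $\tm_{i-j}$. The inclusion $B_{i,j-1}'\subseteq A_{i,j}$ follows because $\flip{\tm_{i-(j-1)}} = \flip{\tm_{i-j}}\,\tm_{i-j}$, so $\tm_{i-j}$ occurs as the \emph{second} half of every occurrence of $\flip{\tm_{i-(j-1)}}$, shifted by $\ltm_{i-j}$. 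The inclusion $A_{i,j-2}'\subseteq A_{i,j}$ requires expanding $\tm_{i-(j-2)}$ two levels and locating an occurrence of $\tm_{i-j}$ at the stated offset $\ltm_{i-j}+\ltm_{i-(j+1)}$; the relevant expansion is $\tm_{i-(j-2)} = \tm_{i-j}\,\flip{\tm_{i-j}}\,\flip{\tm_{i-j}}\,\tm_{i-j}$, whose third block $\flip{\tm_{i-j}}\,\tm_{i-j}$ beginning at offset $2\ltm_{i-j}=\ltm_{i-j}+\ltm_{i-(j+1)}$ contributes the new occurrence.

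For the reverse inclusion $A_{i,j} \subseteq A_{i,j-1}\cup B_{i,j-1}'\cup A_{i,j-2}'$, I would invoke \cref{thm:tm-basic-fac} to write $\tm_i$ as a concatenation of blocks each equal to $\tm_{i-(j-1)}$ or $\flip{\tm_{i-(j-1)}}$, and then classify every occurrence of $\tm_{i-j}$ according to where it sits relative to these block boundaries. An occurrence contained in a single block lands in $A_{i,j-1}$ or $B_{i,j-1}'$ (using the inductive description of occurrences of $\tm_{i-j}$ inside $\tm_{i-(j-1)}$ and $\flip{\tm_{i-(j-1)}}$); an occurrence straddling a boundary must span a two-block window, and the overlap-free and cube-free properties (\cref{thm:tm-overlap-free,thm:tm-cube-free}) sharply restrict which boundary-crossing positions are admissible, forcing each such occurrence into $A_{i,j-2}'$. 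This is the step I expect to be the main obstacle: enumerating the possible two-block windows ($\tm_{i-(j-1)}\flip{\tm_{i-(j-1)}}$, $\flip{\tm_{i-(j-1)}}\tm_{i-(j-1)}$, and the repeated types) and ruling out spurious boundary-crossing occurrences requires careful case analysis, with overlap-freeness doing the heavy lifting to exclude unwanted alignments.

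Finally I would verify the three intersection identities. The two ``empty'' intersections $A_{i,j-1}\cap A_{i,j-2}'=\emptyset$ and $B_{i,j-1}'\cap A_{i,j-2}'=\emptyset$ should follow from a position/offset argument combined with overlap-freeness: an element of $A_{i,j-2}'$ sits at an offset that, if it also lay in $A_{i,j-1}$ or $B_{i,j-1}'$, would force two overlapping occurrences of a Thue--Morse word and contradict \cref{thm:tm-overlap-free}, much as the Fibonacci proof used the near-commutativity $F_{i-(k-1)}\,F_{i-k}\neq F_{i-k}\,F_{i-(k-1)}$. The remaining identity $A_{i,j-1}\cap B_{i,j-1}'=I_{i,j-3}$ is the most delicate, since it is genuinely nonempty for $j\geq 3$: I would interpret a shared position as a place where an occurrence of $\tm_{i-(j-1)}$ (from the prefix role) coincides with an occurrence of $\flip{\tm_{i-(j-1)}}$ shifted by $\ltm_{i-j}$ (from the suffix role), and show by the inductive descriptions of $A_{i,j-1}$ and $B_{i,j-1}$ that these coincidences are exactly the positions recorded in $A_{i,j-3}'\cup B_{i,j-3}'$, matching the definition of $I_{i,j-3}$. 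Tracking the nested offsets $\ltm_{i-(j-1)}+\ltm_{i-j}$ and $\ltm_{i-(j-2)}$ correctly against the inductive hypothesis will be the bookkeeping crux here.
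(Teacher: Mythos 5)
Your overall strategy coincides with the paper's: induction on $j$, two inclusions for the set equality via the block factorization of \cref{thm:tm-basic-fac}, overlap-freeness to pin down occurrences within two-block windows, and a separate analysis of the coincidences making up $I_{i,j-3}$. However, there is a concrete error in your justification of $A_{i,j-2}'\subseteq A_{i,j}$, and it is not a typo but a misidentification of the occurrence that this set records. You claim the relevant occurrence of $\tm_{i-j}$ sits at offset $2\ltm_{i-j}=\ltm_{i-j}+\ltm_{i-(j+1)}$ inside $\tm_{i-(j-2)}=\tm_{i-j}\,\flip{\tm_{i-j}}\,\flip{\tm_{i-j}}\,\tm_{i-j}$. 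The arithmetic is false: since $\ltm_m=2^{m-1}$ we have $\ltm_{i-(j+1)}=\tfrac{1}{2}\ltm_{i-j}$, so $\ltm_{i-j}+\ltm_{i-(j+1)}=\tfrac{3}{2}\ltm_{i-j}\neq 2\ltm_{i-j}$. Worse, the occurrence you point to (the $\tm_{i-j}$ inside the trailing $\flip{\tm_{i-j}}\,\tm_{i-j}=\flip{\tm_{i-(j-1)}}$) is exactly the one already contributed by $B_{i,j-1}'$, so if your identification were what $A_{i,j-2}'$ meant, the claimed disjointness $B_{i,j-1}'\cap A_{i,j-2}'=\emptyset$ would fail. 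The occurrence $A_{i,j-2}'$ actually records is the \emph{hidden} one straddling the two middle blocks: expanding $\flip{\tm_{i-j}}\,\flip{\tm_{i-j}}=\flip{\tm_{i-(j+1)}}\,\tm_{i-(j+1)}\,\flip{\tm_{i-(j+1)}}\,\tm_{i-(j+1)}$, its middle half $\tm_{i-(j+1)}\,\flip{\tm_{i-(j+1)}}=\tm_{i-j}$ begins at offset $\ltm_{i-j}+\ltm_{i-(j+1)}$; equivalently, one uses the refined factorization $\tm_{i-(j-2)}=\tm_{i-j}\,\flip{\tm_{i-(j+1)}}\,\tm_{i-j}\,\tm_{i-(j+1)}\,\tm_{i-j}$ and takes its third factor. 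This correction also matters downstream: your reverse inclusion must account for precisely this straddling position when classifying boundary-crossing occurrences, and your disjointness arguments rely on knowing where the $A_{i,j-2}'$ positions really are.

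A smaller remark: your treatment of $A_{i,j-1}\cap B_{i,j-1}'=I_{i,j-3}$ is only a sketch, and this is genuinely the most delicate part. The intended mechanism is that a position lies in both sets exactly when $\tm_{i-(j-1)}$ and a copy of $\flip{\tm_{i-(j-1)}}$ shifted left by $\ltm_{i-j}$ overlap to form $\flip{\tm_{i-j}}\,\tm_{i-j}\,\flip{\tm_{i-j}}$, and one must then show (using overlap-freeness of $\tm_{i-(j-3)}$ and $\flip{\tm_{i-(j-3)}}$ and their level-two expansions) that this pattern occurs at exactly one position inside each occurrence of $\tm_{i-(j-3)}$ and each occurrence of $\flip{\tm_{i-(j-3)}}$, yielding the offsets in the definitions of $A_{i,j-3}'$ and $B_{i,j-3}'$. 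Your outline is compatible with this, but as written it does not yet establish either inclusion of the identity.
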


\begin{proof}
We proceed by induction on $j$.

\textbf{Base cases.}
When $j = 0$, 
the claim holds trivially.
When $j = 1$, 
note that $\tm_{i-1}$ only occurs at position 1 because
it cannot occur at position $\ltm_{i-1}+1$ (where $\flip{\tm_{i-1}}$ occurs),
and any other occurrences of $\tm_{i-1}$ would 
overlap with its occurrence at position 1, 
contradicting \cref{thm:tm-overlap-free}.
When $j = 2$, observe that 
$A_{i,2-1} = \{1\}$, 
$B_{i,2-1}' = \{ \ltm_{i-1} + \ltm_{i-2} + 1 \}$ and
$A_{i,2-2}' = \{ \ltm_{i-2} + \ltm_{i-3} + 1 \}$
are  mutually disjoint.
Further, there are no occurrences of $\tm_{i-2}$
outside of $A_{i,2} = A_{i,1} \cup B_{i,1}' \cup A_{i,0}'$ 
because any such occurrences would contradict \cref{thm:tm-overlap-free}.

\textbf{Inductive step.}
For each $3 \leq k \leq i-1$, assume the claim holds for $j=k-3$, $k-2$, and $k-1$ and 
we now prove the claim for $j=k$. 
Define $V_{i, k} := A_{i,k-1}  \cup  B_{i,k-1}' \cup  A_{i,k-2}'$.
We prove $A_{i, k} = V_{i, k}$ by
showing $A_{i, k} \subset V_{i, k}$ and $V_{i, k} \subset A_{i, k}$.

To prove  $V_{i, k} \subset A_{i, k}$,
we will show that each set in the union 
defining $V_{i, k}$ is contained in $A_{i, k}$.
Clearly, $A_{i,k-1} \subset A_{i, k}$ 
because $\tm_{i-j}$ is a prefix of
$\tm_{i-(j-1)} = \tm_{i-j} \ \flip{\tm_{i-j}}$.
Similarly, $B_{i,k-1}' \subset A_{i, k}$  
because $\tm_{i-j}$ is a suffix of
$\flip{\tm_{i-(j-1)}} =  \flip{\tm_{i-j}} \ \tm_{i-j}$.
Lastly, $A_{i,k-2}' \subset A_{i, k}$
because $\tm_{i-j}$ occurs at position $\ltm_{i-k} + \ltm_{i-(k+1)}$ of 
\begin{equation}\label{eq:t-i-k-2}
\tm_{i-(k-2)} 
= \tm_{i-k} \ \flip{\tm_{i-(k+1)}} \ \tm_{i-k} \ \tm_{i-(k+1)} \ \tm_{i-k}
\end{equation}

Next, we prove $A_{i, k} \subset V_{i, k}$
by showing that each occurrence of $\tm_{i-k}$ is in $V_{i, k}$.
By \cref{thm:tm-basic-fac}, there is a factorization of $\tm_i$ 
where each factor is either $\tm_{i-(k-1)}$ or $\flip{\tm_{i-(k-1)}}$.
We thus consider the following four cases.
\begin{enumerate}[{Case} 1]
\item\label{tm-occ-case-1}
When $\tm_{i-k}$ occurs within $\tm_{i-(k-1)}  \ \tm_{i-(k-1)} 
= \tm_{i-k} \ \flip{\tm_{i-k}} \ \tm_{i-k} \ \flip{\tm_{i-k}}$,
by the overlap-free property (\cref{thm:tm-overlap-free}),
positions 1 and $\ltm_{i-(k-1)} + 1$ are the only two occurrences
of $\tm_{i-k}$.
They are both contained in $A_{i,k-1}$,
while the latter is also in $B_{i,k-1}'$.
(The overlap-free property will be used similarly in the remaining three cases.)
\item\label{tm-occ-case-2}
When $\tm_{i-k}$ occurs within 
$\flip{\tm_{i-(k-1)}} \ \flip{\tm_{i-(k-1)}} 
= \flip{\tm_{i-k}} \ \tm_{i-k} \ \flip{\tm_{i-k}} \ \tm_{i-k}$,
positions $\ltm_{i-k} + 1$ and $\ltm_{i-(k-1)} + \ltm_{i-k} + 1$
are the only two occurrences of $\tm_{i-k}$.
They are both contained in $B_{i,k-1}'$,
while the former is also in $A_{i,k-1}$.
\item
When $\tm_{i-k}$ occurs within 
$\tm_{i-(k-1)} \ \flip{\tm_{i-(k-1)}} 
= \tm_{i-(k-2)}
$,
by \cref{eq:t-i-k-2},
positions 1, $\ltm_{i-k} + \ltm_{i-(k+1)} + 1$ and $\ltm_{i-(k-1)} + \ltm_{i-k} + 1$
are the only occurrences of $\tm_{i-k}$:
the first and third are both contained in $A_{i,k-1}$,
the second is in $A_{i,k-2}'$,
and the third is also in $B_{i,k-1}'$.
\item
When $\tm_{i-k}$ occurs within 
$\flip{\tm_{i-(k-1)}} \ \tm_{i-(k-1)}
= \flip{\tm_{i-k}} \ \tm_{i-k} \ \tm_{i-k} \ \flip{\tm_{i-k}}$,
position $\ltm_{i-k}$ and $\ltm_{i-(k-1)}$
are the only two occurrences of $\tm_{i-k}$.
The former is contained in $B_{i,k-1}'$,
while the latter is in $A_{i,k-1}$.
\end{enumerate}
After examining the above four cases, 
we conclude that $A_{i, k} \subset V_{i, k}$,
and thus $A_{i, k} = V_{i, k}$. 
Next we will prove $A_{i,k-1}   \cap B_{i,k-1}' = I_{i,k-3}$ 
by showing $A_{i,k-1}   \cap B_{i,k-1}' \subset I_{i,k-3}$ and 
$I_{i,k-3} \subset A_{i,k-1}   \cap B_{i,k-1}'$. 
Recall that $I_{i,k-3} := A_{i, k-3}' \cup B_{i, k-3}'$.

First, we prove $A_{i,k-1}   \cap B_{i,k-1}' \subset I_{i,k-3}$
by establishing that if an occurrence of $\tm_{i-k}$
is in $A_{i,k-1}   \cap B_{i,k-1}'$,
then this occurrence is in $I_{i,k-3}$.
First observe that 
in  Cases~\ref{tm-occ-case-1}--\ref{tm-occ-case-2},
some occurrences of $\tm_{i-k}$  are contained in both
$A_{i,k-1}$ and $B_{i,k-1}'$.
By \cref{thm:tm-overlap-free},
$\flip{\tm_{i-(k-3)}}$ and $\tm_{i-(k-3)}$ do not overlap in $\tm_i$,
it follow that, 
for each occurrence of $\flip{\tm_{i-(k-3)}}$ in $\tm_i$,
there is only one occurrence of $\tm_{i-k}$ contained in $B_{i,k-3}'$.
Similarly, 
for each occurrence of $\tm_{i-(k-3)}$ in $\tm_i$,
there is only one occurrence of $\tm_{i-k}$ contained in $A_{i,k-3}'$.
Now, consider the factorizations:
\begin{align*}
\flip{\tm_{i-(k-3)}} &= 
\flip{\tm_{i-(k-1)}} \ \tm_{i-(k-1)} \ \tm_{i-(k-1)} \ \flip{\tm_{i-(k-1)}}, 
\text{~and~} \\ 
\tm_{i-(k-3)} &= 
\tm_{i-(k-1)} \ \flip{\tm_{i-(k-1)}} \ \flip{\tm_{i-(k-1)}} \ \tm_{i-(k-1)}.
\end{align*}
Observe that the set of occurrences 
of $\tm_{i-k}$ in Case~\ref{tm-occ-case-1} 
is a subset of $B_{i,k-3}'$  
since $\tm_{i-(k-1)} \ \tm_{i-(k-1)}$ occurs at position $\ltm_{i-(k-1)}+1$ in
$\flip{\tm_{i-(k-3)}}$.
Similarly, the set 
of occurrences  of $\tm_{i-k}$ in Case~\ref{tm-occ-case-2} 
is a subset of $A_{i,k-3}'$ since $\flip{\tm_{i-(k-1)}} \ \flip{\tm_{i-(k-1)}}$ 
occurs at position $\ltm_{i-(k-1)}+1$ in 
$\tm_{i-(k-3)}$.

Next, we prove  $I_{i,k-3} \subset A_{i,k-1}   \cap B_{i,k-1}'$ by contraposition. 
Specifically, instead of directly showing that
``if  an occurrence of $\tm_{i-k}$ is in $I_{i,k-3}$,
then this occurrence is in $A_{i,k-1} \cap B_{i,k-1}'$'',
we prove the equivalent contrapositive:
``if an occurrence of $\tm_{i-k}$ is not in $A_{i,k-1} \cap B_{i,k-1}'$,
then this occurrence is not in $I_{i,k-3}$''.
First observe that $\tm_{i-k}$ occurs at position 1 in 
$\tm_{i-(k-1)} = \tm_{i-k} \ \flip{\tm_{i-k}}$
and occurs at position $\ltm_{i-k}+1$ in 
$\flip{\tm_{i-(k-1)}} = \flip{\tm_{i-k}} \ \tm_{i-k}$.
Next, occurrences of $\tm_{i-k} \ \flip{\tm_{i-k}}$
and $\tm_{i-k} \ \flip{\tm_{i-k}}$ overlap in $\tm_i$  
to form occurrences of
$\flip{\tm_{i-k}} \ \tm_{i-k} \ \flip{\tm_{i-k}}$ 
(see Cases~\ref{tm-occ-case-1}--\ref{tm-occ-case-2}).
Hence,
if an occurrence of $\tm_{i-k}$ is not in $A_{i,k-1} \cap B_{i,k-1}'$,  
then this occurrence is not at position $\ltm_{i-k}+1$ in 
$\flip{\tm_{i-k}} \ \tm_{i-k} \ \flip{\tm_{i-k}}$. 
Next, consider the factorizations:
\begin{align*}
\flip{\tm_{i-(k-3)}} &= 
\flip{\tm_{i-(k-1)}} \ \tm_{i-k} \ \flip{\tm_{i-k}} \ \tm_{i-k} \ \flip{\tm_{i-k}} \ \flip{\tm_{i-(k-1)}}, \text{~and~} \\
\tm_{i-(k-3)} &= 
\tm_{i-(k-1)} \ \flip{\tm_{i-k}} \  \tm_{i-k} \ \flip{\tm_{i-k}} \ \tm_{i-k} \ \tm_{i-(k-1)}.
\end{align*}
Since $\flip{\tm_{i-(k-3)}}$ and $\tm_{i-(k-3)}$
do not overlap in $\tm_i$,
we know that 
$\flip{\tm_{i-k}} \ \ \tm_{i-k} \ \flip{\tm_{i-k}}$ only occurs 
at position $\ltm_{i-(k-1)}+\ltm_{i-k}+1$ in $\flip{\tm_{i-(k-3)}}$ and 
at position $\ltm_{i-(k-1)}+1$ in $\tm_{i-(k-3)}$.
Thus, if an occurrence of $\tm_{i-k}$ is not in $A_{i,k-1} \cap B_{i,k-1}'$,
then this occurrence is not in $I_{i,k-3}$.
Therefore, we conclude that $I_{i,k-3} \subset A_{i,k-1}   \cap B_{i,k-1}'$.
\end{proof}

The analogous characterization of $B_{i, j}$ is 
presented as follows,
which
can be proven in a way similar to the proof of \cref{thm:tm-occ}.

\begin{corollary}\label{thm:tm-b-occ}
For each $i \geq 2$ and $0 \leq j \leq i-1$,
we have 
$B_{i,0} = \emptyset$, $B_{i,1} = \{\ltm_{i-1} + 1 \}$.
For each $j \geq 2$, we define 
\begin{align*}
A_{i,j-1}'' &:= A_{i,j-1} \oplus \ltm_{i-j}, \qquad  
B_{i,j-2}'' := B_{i,j-2} \oplus (\ltm_{i-j} + \ltm_{i-(j+1)}), \\
I_{i,j-3}' &:= 
\begin{cases}
\emptyset, & j = 2, \\
A_{i, j-3}'' \cup B_{i, j-3}'', &  j \geq 3,  \text{~where~}
\end{cases} \\
B_{i, j-3}'' &:=  B_{i,j-3} \oplus (\ltm_{i-(j-1)} + \ltm_{i-j}) , \qquad
A_{i, j-3}'' := A_{i,j-3} \oplus \ltm_{i-(j-2)} .
\end{align*} 
Then,
$B_{i,j} =  B_{i,j-1}  \cup  A_{i,j-1}'' \cup  B_{i,j-2}''$
with 
\[
B_{i,j-1}   \cap A_{i,j-1}'' = I'_{i,j-3},  \quad
B_{i,j-1}   \cap B_{i,j-2}'' = \emptyset,  \quad \text{and} \quad
A_{i,j-1}''   \cap B_{i,j-2}'' = \emptyset.
\lipicsEnd 
\]
\end{corollary}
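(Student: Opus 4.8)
The plan is to prove \cref{thm:tm-b-occ} by induction on $j$, mirroring the proof of \cref{thm:tm-occ} with the roles of $\tm_{i-m}$ and $\flip{\tm_{i-m}}$ interchanged. The conceptual reason this works is that complementation $S \mapsto \flip{S}$ is a length- and position-preserving involution, and applying it to the defining recurrence $\tm_i = \tm_{i-1} \flip{\tm_{i-1}}$ yields the flipped recurrence $\flip{\tm_i} = \flip{\tm_{i-1}} \ \tm_{i-1}$. Consequently every factorization identity used in the proof of \cref{thm:tm-occ}---in particular \cref{eq:t-i-k-2} and the four two-factor expansions of $\tm_{i-(k-1)} \tm_{i-(k-1)}$, $\flip{\tm_{i-(k-1)}} \flip{\tm_{i-(k-1)}}$, $\tm_{i-(k-1)} \flip{\tm_{i-(k-1)}}$, and $\flip{\tm_{i-(k-1)}} \tm_{i-(k-1)}$---has a complementary counterpart obtained by overlining every factor, which remains valid. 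Overlap-freeness (\cref{thm:tm-overlap-free}) is itself invariant under complementation, so every pinning-down argument transfers directly once the roles are swapped.

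For the base cases I would check $j = 0, 1, 2$ directly: $B_{i,0} = \emptyset$ since $\flip{\tm_i}$ has the same length as $\tm_i$ but differs in the first character; $B_{i,1} = \{\ltm_{i-1}+1\}$ since $\flip{\tm_{i-1}}$ is the second half of $\tm_i = \tm_{i-1} \flip{\tm_{i-1}}$ and any further occurrence would overlap it, contradicting \cref{thm:tm-overlap-free}; and for $j = 2$ the three candidate singletons are mutually disjoint, with no additional occurrences by overlap-freeness. For the inductive step, fix $3 \le k \le i-1$ and assume the claim for $j = k-3, k-2, k-1$. The inclusion $B_{i,k-1} \cup A_{i,k-1}'' \cup B_{i,k-2}'' \subseteq B_{i,k}$ follows because $\flip{\tm_{i-k}}$ is a prefix of $\flip{\tm_{i-(k-1)}} = \flip{\tm_{i-k}} \ \tm_{i-k}$, a suffix of $\tm_{i-(k-1)} = \tm_{i-k} \ \flip{\tm_{i-k}}$ (yielding the shift $\ltm_{i-k}$ defining $A_{i,k-1}''$), and the middle factor of the flipped expansion $\flip{\tm_{i-(k-2)}} = \flip{\tm_{i-k}} \ \tm_{i-(k+1)} \ \flip{\tm_{i-k}} \ \flip{\tm_{i-(k+1)}} \ \flip{\tm_{i-k}}$ of \cref{eq:t-i-k-2} (yielding the shift $\ltm_{i-k} + \ltm_{i-(k+1)}$ defining $B_{i,k-2}''$). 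The reverse inclusion $B_{i,k} \subseteq B_{i,k-1} \cup A_{i,k-1}'' \cup B_{i,k-2}''$ comes from factorizing $\tm_i$ into order-$(k-1)$ blocks via \cref{thm:tm-basic-fac} and performing the same four-case analysis of which adjacent block pair an occurrence of $\flip{\tm_{i-k}}$ may straddle, each case resolved by overlap-freeness exactly as before.

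The main obstacle, as in the original, will be establishing the intersection $B_{i,k-1} \cap A_{i,k-1}'' = I_{i,k-3}'$, that is, identifying precisely which occurrences of $\flip{\tm_{i-k}}$ are double-counted by both the prefix-type embedding (into $B_{i,k-1}$) and the suffix-type embedding (into $A_{i,k-1}''$). I would handle this by the same two-sided trace. For $\subseteq$, a shared occurrence sits at the center of a $\tm_{i-k} \ \flip{\tm_{i-k}} \ \tm_{i-k}$ pattern formed by overlapping blocks, and using the order-$(k-3)$ factorizations $\flip{\tm_{i-(k-3)}} = \flip{\tm_{i-(k-1)}} \ \tm_{i-(k-1)} \ \tm_{i-(k-1)} \ \flip{\tm_{i-(k-1)}}$ and $\tm_{i-(k-3)} = \tm_{i-(k-1)} \ \flip{\tm_{i-(k-1)}} \ \flip{\tm_{i-(k-1)}} \ \tm_{i-(k-1)}$ one locates it inside $A_{i,k-3}''$ or $B_{i,k-3}''$. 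For $\supseteq$ I would argue the contrapositive, invoking the finer complementary factorizations (the analogues of those appearing in the proof of \cref{thm:tm-occ}) to show every position of $I_{i,k-3}'$ lies at the center of the overlapping pattern and hence belongs to both $B_{i,k-1}$ and $A_{i,k-1}''$. The two remaining disjointness claims $B_{i,k-1} \cap B_{i,k-2}'' = \emptyset$ and $A_{i,k-1}'' \cap B_{i,k-2}'' = \emptyset$ follow from elementary support and maximum arguments on the explicit position formulas, just as in \cref{thm:tm-occ}. One could alternatively try to deduce the corollary from \cref{thm:tm-occ} directly via the involution---occurrences of $\flip{\tm_{i-j}}$ in $\tm_i$ coincide positionally with occurrences of $\tm_{i-j}$ in $\flip{\tm_i}$---but relating $\flip{\tm_i}$ back to a half of $\tm_{i+1}$ shifts the order parameter and is messier to state cleanly, so the mirrored induction is the preferable route.
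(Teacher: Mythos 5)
Your proposal is correct and takes essentially the same route as the paper: the paper gives no separate proof for this corollary, stating only that it "can be proven in a way similar to the proof of \cref{thm:tm-occ}," which is precisely the mirrored induction (with $\tm_{i-m}$ and $\flip{\tm_{i-m}}$ interchanged and every factorization complemented) that you carry out. Your base cases, the flipped counterpart of \cref{eq:t-i-k-2}, and the transfer of the overlap-freeness and intersection arguments all match what the paper intends.
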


We next use \cref{thm:tm-occ} and \cref{thm:tm-b-occ} to count the number of
occurrences of $\tm_{i-j}$ and $\flip{\tm_{i-j}}$
in $\tm_i$.

\begin{corollary}\label{thm:tm-occ-num}
For $i \geq 2$,
consider $\tm_i$ and $0 \leq j \leq i-1$.
Let $a_{j}$ and $b_{j}$
denote the number of occurrences of $\tm_{i-j}$ and $\flip{\tm_{i-j}}$
in $\tm_i$, respectively.
Then, 
\begin{itemize}
\item 
$a_{0} = a_{1} = 1$ and
$a_{j} = a_{j-1} + 2 \cdot a_{j-2}$
for each $j \geq 2$;
\item 
$b_{0} = 0$ and 
$b_{j} = b_{j-1} + a_{j-1}$
for each $j \geq 1$.
\end{itemize}
\end{corollary}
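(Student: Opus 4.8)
The plan is to read off cardinality recurrences from the set identities in \cref{thm:tm-occ} and \cref{thm:tm-b-occ} by inclusion--exclusion, and then close them by induction on $j$. I would first record that $\oplus$ is a translation, hence a bijection, so every primed or double-primed set has the same size as the set it is shifted from: $|B_{i,j-1}'| = b_{j-1}$, $|A_{i,j-2}'| = a_{j-2}$, $|A_{i,j-1}''| = a_{j-1}$, $|B_{i,j-2}''| = b_{j-2}$, and similarly for the constituents of the overlap sets. Because the base values and the set identities are uniform in $i$, so are the resulting recurrences, which is what licenses writing $a_j$ and $b_j$ without an $i$.

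Applying inclusion--exclusion to $A_{i,j} = A_{i,j-1} \cup B_{i,j-1}' \cup A_{i,j-2}'$ and using that two of the three pairwise intersections (and therefore the triple intersection) vanish gives
\[
a_j = a_{j-1} + b_{j-1} + a_{j-2} - |I_{i,j-3}|,
\]
and the same manipulation on $B_{i,j} = B_{i,j-1} \cup A_{i,j-1}'' \cup B_{i,j-2}''$ gives $b_j = b_{j-1} + a_{j-1} + b_{j-2} - |I'_{i,j-3}|$. The whole problem thus reduces to evaluating the two overlap sets. The key claim is that $|I_{i,j-3}| = |I'_{i,j-3}| = a_{j-3} + b_{j-3}$ for $j \ge 3$, while both are $0$ at $j = 2$ since $I_{i,-1} = I'_{i,-1} = \emptyset$ by definition.

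The main obstacle --- and the only nonroutine step --- is proving this evaluation, for which I would establish disjointness. Writing $I_{i,j-3} = A_{i,j-3}' \cup B_{i,j-3}'$, the elements of $A_{i,j-3}'$ are occurrences of $\tm_{i-j}$ lying inside occurrences of $\tm_{i-(j-3)}$, whereas those of $B_{i,j-3}'$ lie inside occurrences of $\flip{\tm_{i-(j-3)}}$. As already established in the proof of \cref{thm:tm-occ}, occurrences of $\tm_{i-(j-3)}$ and $\flip{\tm_{i-(j-3)}}$ never overlap in $\tm_i$ (via \cref{thm:tm-overlap-free}); hence no occurrence of $\tm_{i-j}$ can simultaneously sit inside one of each, so $A_{i,j-3}'$ and $B_{i,j-3}'$ are disjoint and $|I_{i,j-3}| = a_{j-3} + b_{j-3}$. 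The identical argument, with $\flip{\tm_{i-j}}$ in place of $\tm_{i-j}$, handles $I'_{i,j-3}$.

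It then remains to close the recurrences, which is pure bookkeeping. I would first prove the $b$-recurrence by induction on $j$: substituting $|I'_{i,j-3}| = a_{j-3} + b_{j-3}$ and applying the induction hypothesis $b_{j-2} = a_{j-3} + b_{j-3}$ collapses the $b$-identity to $b_j = b_{j-1} + a_{j-1}$ (the cases $j \le 2$ are checked directly against the base values). With the $b$-recurrence in hand, the $a$-recurrence follows by a direct computation rather than a separate induction: from $b_{j-1} = b_{j-2} + a_{j-2}$ and $b_{j-2} = b_{j-3} + a_{j-3}$ we get $b_{j-1} - (a_{j-3} + b_{j-3}) = a_{j-2}$, and substituting this into the $a$-identity yields $a_j = a_{j-1} + 2a_{j-2}$; the base cases $j \in \{0,1,2\}$ are immediate.
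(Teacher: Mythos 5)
Your proposal is correct and follows essentially the same route as the paper: inclusion--exclusion on the set identities of \cref{thm:tm-occ} and \cref{thm:tm-b-occ}, the evaluation $|I_{i,j-3}| = a_{j-3} + b_{j-3} = b_{j-2}$, and induction to collapse the resulting recurrences. Your explicit justification that $A_{i,j-3}'$ and $B_{i,j-3}'$ are disjoint via the overlap-free property is a detail the paper leaves implicit, but it does not change the argument.
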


\begin{proof}
We proceed by induction on $j$.
For $1 \leq j \leq 2$, 
the claim holds trivially.
For $j \geq 3$, we have $|I_{i,j-3}| 
= |A_{i,j-3}| + |B_{i,j-3}|  = a_{j-3} + b_{j-3} = b_{j-2}$
and 
$
  |A_{i,j}| 
= |A_{i,j-1}| + |B_{i,j-1}'| + |A_{i,j-2}'| - |I_{i,j-3}| 
= a_{j-1} + b_{j-1} + a_{j-2} - b_{j-2}
= a_{j-1} + (b_{j-1} - b_{j-2}) + a_{j-2}
= a_{j-1} + 2 \cdot a_{j-2}
$.
by induction hypothesis.
We can prove $b_j$ similarly with \cref{thm:tm-b-occ}.
\end{proof}

\begin{remark}
For each $i \geq 2$,
$a_{j}$ is the $(j+1)^\text{th}$ Jacobsthal Number: Sequence A001045 of the On-Line Encyclopedia of Integer Sequences (\url{https://oeis.org/A001045}). 
\end{remark}

With the characterization 
of the occurrences of $\tm_{i-j}$ and $\flip{\tm_{i-j}}$ in $\tm_i$,
a natural next step is to investigate the structure of the strings that surround 
$\tm_{i-j}$ and $\flip{\tm_{i-j}}$,
which correspond to the blank areas in each row in \cref{fig:tm-occ}.
In \cref{sec:tm-fac},
we explore two smallest factorizations of 
$\tm_i$, each containing all occurrences of $\tm_{i-j}$ and $\flip{\tm_{i-j}}$, respectively.
The remaining factors in these factorizations represent the surrounding strings.

\section{Net Occurrences in Fibonacci Words}\label{sec:fib-net-occ}
In this section,
we prove that there are only three net occurrences in each $F_i$,
using the results on the occurrences of Fibonacci words of smaller order from \cref{sec:occ-fib-smaller-ord},
the notion of ONOC from \cref{sec:onoc},
and new properties that we will develop in this section.
We begin by reviewing the following results. 
Some of the proofs in this section are presented in \cref{appx:fib-net-occ-proofs}.

\begin{lemma}[\cite{conf/cpm/2024/guo}]\label{thm:q-i-delta}
For $i \geq 7$,
let $Q_i := F_{i-5}  F_{i-6} \cdots F_3  F_2$,
$\Delta(0) := \texttt{ba}$, 
$\Delta(1) := \texttt{ab}$.
Then,
\begin{align}
F_{i-4} \ F_{i-5} &= Q_i \ \Delta(1 - (i \bmod 2)) \text{~and~} \\
F_{i-5} \ F_{i-4} &= Q_i \ \Delta(i \bmod 2).
\end{align}
\end{lemma}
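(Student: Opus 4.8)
The plan is to prove the two identities simultaneously by induction on $i$. The bookkeeping becomes cleaner after the substitution $m := i-5 \geq 2$, under which $Q_i = F_m \ F_{m-1} \cdots F_2$ and the two claims read $F_{m+1} \ F_m = Q_i \ \Delta(m \bmod 2)$ and $F_m \ F_{m+1} = Q_i \ \Delta((m+1) \bmod 2)$ (one checks the parity translation $1 - (i \bmod 2) = (i-5) \bmod 2$ and $i \bmod 2 = (i-4) \bmod 2$). Two elementary facts drive everything: the defining recurrence $F_k = F_{k-1} \ F_{k-2}$, and the recursion $Q_{i+1} = F_{i-4} \ Q_i$ (equivalently $Q_{m+1} = F_{m+1} \ Q_m$), which is immediate from the definition of $Q_i$ as a product of consecutive Fibonacci words of decreasing order.

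For the base case $i = 7$ (so $m = 2$), we have $Q_7 = F_2 = \texttt{a}$, and both identities follow by direct computation: $F_3 \ F_2 = \texttt{aba} = \texttt{a} \cdot \texttt{ba}$ matches $Q_7 \ \Delta(0)$, and $F_2 \ F_3 = \texttt{aab} = \texttt{a} \cdot \texttt{ab}$ matches $Q_7 \ \Delta(1)$, with the parities $m \bmod 2 = 0$ and $(m+1) \bmod 2 = 1$ as required.

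For the inductive step, the key move is to peel off the leading factor using the Fibonacci recurrence. Writing $F_{m+2} = F_{m+1} \ F_m$ gives $F_{m+2} \ F_{m+1} = F_{m+1} \ (F_m \ F_{m+1})$ and $F_{m+1} \ F_{m+2} = F_{m+1} \ (F_{m+1} \ F_m)$. Applying the induction hypothesis to each parenthesized product and then absorbing the leading $F_{m+1}$ via $Q_{m+1} = F_{m+1} \ Q_m$ rewrites the right-hand sides as $Q_{m+1} \ \Delta(\cdot)$. The crucial point is that this is a \emph{mutual} induction: the first identity at level $m+1$ consumes the second identity at level $m$, while the second at level $m+1$ consumes the first at level $m$, so both statements must be carried together. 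The first identity then matches its target parity directly, while the second closes using the simple parity identity $(m+2) \bmod 2 = m \bmod 2$.

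I expect the only real obstacle to be the parity bookkeeping together with recognizing the mutual structure of the induction; there is no genuine combinatorial difficulty here, and in particular no appeal to overlap-freeness, \cref{thm:f-i-f-i}, or \cref{thm:no-aaa} is needed. Some care is also needed to situate the base case at $i = 7$ (equivalently $m = 2$), matching the hypothesis $i \geq 7$, and to treat the degenerate shape of $Q_7 = F_2$ (a single factor) as the correct starting point of the recursion $Q_{m+1} = F_{m+1} \ Q_m$.
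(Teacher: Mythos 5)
Your proof is correct. Note that the paper itself gives no proof of this lemma---it is imported verbatim from Guo et al.\ [CPM 2024]---so there is no in-paper argument to compare against; your mutual induction, driven by $F_{k}=F_{k-1}\,F_{k-2}$ and the recursion $Q_{i+1}=F_{i-4}\,Q_i$, is a valid, self-contained derivation, and the parity bookkeeping and the base case at $i=7$ (where $Q_7=F_2=\texttt{a}$) all check out.
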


\begin{lemma}[\cite{conf/cpm/2024/guo}]\label{thm:net-occ-fib}
For each $i \geq 7$,
the following are net occurrences in $F_i$:
\begin{itemize}
\item one occurrence of $F_{i-2}$ at position $f_{i-1}+1$;
\item two occurrences of $F_{i-2} \ Q_i$ at positions $1$ and $f_{i-2}+1$.
\end{itemize}
Meanwhile, the two occurrences of $F_{i-2}$ at positions $1$ and $f_{i-2}+1$  are \emph{not} net occurrences.
\end{lemma}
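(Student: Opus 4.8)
The plan is to pin down the exact letters flanking each of the three occurrences of $F_{i-2}$ in $F_i$ and then read off which extensions are unique. By \cref{thm:f-i-2} these occurrences lie at positions $1$, $f_{i-2}+1$, and $f_{i-1}+1$. From \cref{eq:fac-1}, $F_i = F_{i-2}\,F_{i-3}\,F_{i-2}$, the occurrence at $f_{i-1}+1$ is a suffix of $F_i$; from \cref{eq:fac-2}, $F_i = F_{i-2}\,F_{i-2}\,F_{i-5}\,F_{i-4}$, the occurrences at $1$ and $f_{i-2}+1$ are each immediately followed by a Fibonacci block of order at least $2$, hence by the letter \texttt{a}. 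Thus both of these occurrences have right extension $F_{i-2}\texttt{a}$, so this string is repeated and in particular not unique; by definition neither is a net occurrence, which settles the final sentence of the statement.

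For the occurrence of $F_{i-2}$ at $f_{i-1}+1$, the right extension is a suffix of $F_i$ and so is unique by the boundary convention. For the left extension I would observe that a string $c\,F_{i-2}$ can occur only immediately before an occurrence of $F_{i-2}$, i.e. starting at position $f_{i-2}$ or $f_{i-1}$ (position $1$ has no predecessor). The two candidate predecessors are $F_i[f_{i-2}]$ and $F_i[f_{i-1}]$, the last letters of $F_{i-2}$ and of $F_{i-3}$, which differ because consecutive Fibonacci words end in different letters (the last letter of $F_n$ is \texttt{a} exactly when $n$ is even, by a one-line induction on $F_n = F_{n-1}F_{n-2}$). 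Hence, taking $c = F_i[f_{i-1}]$, the string $c\,F_{i-2}$ occurs only at $f_{i-1}$; the left extension is unique, $F_{i-2}$ is repeated, and this occurrence is net.

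The core of the proof concerns the two occurrences of $F_{i-2}Q_i$. I would first show it occurs exactly at $1$ and $f_{i-2}+1$: since $F_{i-2}$ is a prefix of $F_{i-2}Q_i$, any occurrence must start at one of the three positions above, and the one at $f_{i-1}+1$ leaves no room as $|F_{i-2}Q_i| = f_{i-2}+f_{i-3}-2$; the occurrences at $1$ and $f_{i-2}+1$ are then confirmed from \cref{eq:fac-1}, \cref{eq:fac-2}, and \cref{thm:q-i-delta}, using that $Q_i$ is a common prefix of $F_{i-3}$ and of $F_{i-5}F_{i-4}$, so $F_{i-2}Q_i$ is repeated. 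Left-extension uniqueness is then immediate: at position $1$ it is the boundary convention, and at $f_{i-2}+1$ it holds because, of the only two occurrences of $F_{i-2}Q_i$, only this one has a predecessor, so its left-extended string occurs exactly once. For the right extensions I would apply \cref{thm:q-i-delta}: the letter just after $F_{i-2}Q_i$ at position $1$ is the first letter of $\Delta(1-(i\bmod 2))$, whereas at $f_{i-2}+1$ it is the first letter of $\Delta(i \bmod 2)$; since $\Delta(0)=\texttt{ba}$ and $\Delta(1)=\texttt{ab}$, these two letters are \texttt{a} and \texttt{b} in some order and so differ. Each right-extended string therefore occurs at only one of the two positions, making both unique, so both occurrences of $F_{i-2}Q_i$ are net.

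The main obstacle is this last right-extension step: one must confirm that the continuations of the two copies of $F_{i-2}$ at positions $1$ and $f_{i-2}+1$ agree for exactly $|Q_i|$ letters and then diverge, which is precisely why the extension is by $Q_i$ and not by a shorter or longer string, and it is \cref{thm:q-i-delta} that pins the divergence down. The remaining work is routine bookkeeping against the occurrence set of \cref{thm:f-i-2} together with the boundary convention for extensions at the two ends of $F_i$.
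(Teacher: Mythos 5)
Your proof is correct. Note that the paper itself does not prove this lemma---it is imported by citation from Guo et al.~[CPM 2024]---so there is no in-paper argument to compare against; your reasoning is self-contained, uses exactly the paper's toolkit (\cref{thm:f-i-2}, the factorizations \cref{eq:fac-1}--\cref{eq:fac-2}, and \cref{thm:q-i-delta}), and your occurrence-counting step for $F_{i-2}\,Q_i$ coincides with the paper's own proof of the closely related \cref{thm:f-i-2-q-i-occ}.
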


With \cref{thm:f-i-2}, we know that these three are the only occurrences of $F_{i-2}$ in $F_i$.
Similarly, we now strengthen \cref{thm:net-occ-fib}
by showing the following result.

\begin{restatable}{lemma}{fitwoqiocc}\label{thm:f-i-2-q-i-occ}
For each $i \geq 7$,
$F_{i-2} \ Q_i$ only occurs at positions 1 and $f_{i-2}+1$ in $F_i$.
\end{restatable}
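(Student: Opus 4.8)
The plan is to exploit the fact that $F_{i-2} \ Q_i$ begins with $F_{i-2}$, so every occurrence of $F_{i-2} \ Q_i$ in $F_i$ must coincide at its left end with an occurrence of $F_{i-2}$. By \cref{thm:f-i-2}, the only starting positions of $F_{i-2}$ in $F_i$ (for $i \geq 6$) are $1$, $f_{i-2}+1$, and $f_{i-1}+1$; hence these are the only candidate starting positions for $F_{i-2} \ Q_i$ as well. The task thus reduces to deciding, for each of these three candidates, whether the required suffix $Q_i$ actually follows the prefix $F_{i-2}$.

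First I would note that positions $1$ and $f_{i-2}+1$ do yield occurrences of $F_{i-2} \ Q_i$: this is already established in \cref{thm:net-occ-fib}, which exhibits these two net occurrences. It therefore remains only to rule out the candidate at position $f_{i-1}+1$. To do so I would use a simple boundary (length) argument. Since $Q_i = F_{i-5} \ F_{i-6} \cdots F_3 \ F_2$ contains at least the factor $F_2 = \texttt{a}$ for every $i \geq 7$, we have $|Q_i| \geq 1$ and so $|F_{i-2} \ Q_i| = f_{i-2} + |Q_i|$. An occurrence of $F_{i-2} \ Q_i$ starting at position $f_{i-1}+1$ would therefore end at position $f_{i-1} + f_{i-2} + |Q_i| = f_i + |Q_i| \geq f_i + 1 > f_i = |F_i|$, which exceeds the length of $F_i$. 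Consequently $F_{i-2} \ Q_i$ cannot fit starting at $f_{i-1}+1$, and that candidate is excluded.

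Combining the three cases, the only occurrences of $F_{i-2} \ Q_i$ in $F_i$ are at positions $1$ and $f_{i-2}+1$, as claimed. I do not anticipate a substantial obstacle here: the argument is self-contained once \cref{thm:f-i-2} and \cref{thm:net-occ-fib} are in hand, and the result strengthens \cref{thm:net-occ-fib} merely by observing that the third occurrence of $F_{i-2}$—which sits flush against the right end of $F_i$—leaves no room for the non-empty suffix $Q_i$. The only point warranting a line of care is confirming that $Q_i$ is non-empty, so that the inequality at $f_{i-1}+1$ is strict.
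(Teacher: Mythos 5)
Your proof is correct and follows essentially the same route as the paper's: reduce to the three candidate positions via \cref{thm:f-i-2}, rule out $f_{i-1}+1$ because the string would overrun the end of $F_i$, and confirm the other two positions. The only cosmetic difference is that you cite \cref{thm:net-occ-fib} to confirm the occurrences at positions $1$ and $f_{i-2}+1$ (which is legitimate, since that lemma already asserts net occurrences there), whereas the paper re-derives this from \cref{thm:f-i-fac} and \cref{thm:q-i-delta}; your explicit length computation for excluding $f_{i-1}+1$ is a welcome elaboration of what the paper leaves as an observation.
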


By combining \cref{thm:f-i-2}, \cref{thm:net-occ-fib}, 
and \cref{thm:f-i-2-q-i-occ},
we conclude that $F_{i-2}$ has only one net occurrence
and $F_{i-2} \ Q_i$ only has two net occurrences.

\begin{lemma}
For each $i \geq 7$,
the net occurrences identified in \cref{thm:net-occ-fib} are the only net occurrences of $F_{i-2}$ and $F_{i-2} \ Q_i$ in $F_i$.
\end{lemma}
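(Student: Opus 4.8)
The plan is to recognize that this lemma is a direct synthesis of the exact occurrence counts with the net/non-net classification already in hand. By definition, a net occurrence of a string $S$ is simply an occurrence of $S$ (necessarily repeated) whose left and right extensions are both unique. Consequently, once we know the \emph{complete} set of occurrences of a string together with the net status of each individual occurrence, we have automatically accounted for every possible net occurrence of that string. So the strategy is to enumerate all occurrences of $F_{i-2}$ and of $F_{i-2} \ Q_i$, and then read off from \cref{thm:net-occ-fib} which ones are net occurrences, arguing that no further candidates remain.

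First I would treat $F_{i-2}$. By \cref{thm:f-i-2}, the string $F_{i-2}$ occurs exactly three times in $F_i$, namely at positions $1$, $f_{i-2}+1$, and $f_{i-1}+1$. \cref{thm:net-occ-fib} classifies all three of these: the occurrence at position $f_{i-1}+1$ is a net occurrence, whereas the occurrences at positions $1$ and $f_{i-2}+1$ are explicitly \emph{not} net occurrences. Since these are the only occurrences of $F_{i-2}$, the occurrence at $f_{i-1}+1$ identified in \cref{thm:net-occ-fib} is its unique net occurrence, with no others possible. Next I would treat $F_{i-2} \ Q_i$: by \cref{thm:f-i-2-q-i-occ}, this string occurs exactly twice, at positions $1$ and $f_{i-2}+1$, and \cref{thm:net-occ-fib} shows that both of these are net occurrences. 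As there are no further occurrences, these two are precisely its net occurrences. Combining the two cases yields the claim.

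The main point to verify is not any hard computation but rather the observation that the tightness of the occurrence counts is exactly what closes the argument: knowing that $F_{i-2}$ occurs \emph{only} three times and $F_{i-2} \ Q_i$ \emph{only} twice, together with the per-occurrence net/non-net determination, leaves no room for an unlisted net occurrence. In this sense the genuine work has been front-loaded into establishing the sharp occurrence counts (\cref{thm:f-i-2} and \cref{thm:f-i-2-q-i-occ}) and the classification (\cref{thm:net-occ-fib}); the present lemma merely assembles them.
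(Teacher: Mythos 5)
Your proposal is correct and matches the paper's own argument exactly: the paper derives this lemma by combining \cref{thm:f-i-2} (the three occurrences of $F_{i-2}$), \cref{thm:f-i-2-q-i-occ} (the two occurrences of $F_{i-2}\ Q_i$), and the per-occurrence net/non-net classification in \cref{thm:net-occ-fib}, just as you do. Nothing further is needed.
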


\begin{figure}[t]
\centering
\includegraphics[width=0.5\linewidth]{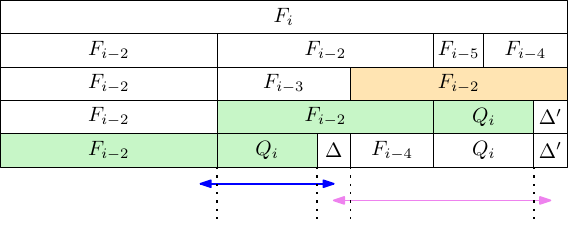}
\caption{
An illustration of several factorizations of $F_i$ 
from \cref{thm:f-i-fac} and \cref{thm:q-i-delta}
where $\Delta := \Delta(1 - (i \bmod 2))$ and 
$\Delta' := \Delta(i \bmod 2)$.
Net occurrences of $F_{i-2}$ and $F_{i-2} \ Q_i$
are in yellow and green, respectively.
Super-occurrences of the two BNSOs 
are shown as arrows. 
}
\label{fig:fib-non-net-occ-cases}
\end{figure}

It remains to show that there are no additional net occurrences in each $F_i$.
To achieve this, we use the results from \cref{sec:onoc}.
First, observe that 
the three net occurrences in \cref{thm:net-occ-fib} form an ONOC of $F_i$.
The two BNSOs of this ONOC 
correspond to 
an occurrence of $Q_i$ and 
an occurrence of $F_{i-4} \ Q_i$, respectively.
See \cref{fig:fib-non-net-occ-cases} for an illustration.
Next, we aim to show that no super-occurrences of these two occurrences 
can be a net occurrence.
To establish this,
we analyze the super-occurrences of 
the occurrences of $F_{i-3}$ in \cref{thm:f-i-3-sub-occ}.
This result  covers the examination of the super-occurrences 
of the occurrence of  $F_{i-4} \ Q_{i}$,
since $F_{i-3}$ is a prefix of $F_{i-4} \ Q_{i} = F_{i-3} \ Q_{i-1}$.
Furthermore,
\cref{thm:f-i-3-sub-occ} helps examining super-occurrences of
the occurrence of $Q_i$ 
in \cref{thm:q-i-sub-occ}.

To prove these two lemmas, we introduce some properties of $F_{i-3}$ and $Q_{i}$,
which are proved in \cref{appx:fib-net-occ-proofs}.

\begin{restatable}{lemma}{qlen}\label{thm:q-len}
$|Q_i| = f_{i-3} - 2$.
\end{restatable}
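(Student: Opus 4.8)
The plan is to compute $|Q_i|$ directly from the definition $Q_i = F_{i-5} \, F_{i-6} \cdots F_3 \, F_2$. Since $Q_i$ is a concatenation of Fibonacci words of consecutive orders, its length is simply the sum of the lengths of its factors. Recalling that $f_k = |F_k|$ and that these are the ordinary Fibonacci numbers with $f_1 = f_2 = 1$, I would write $|Q_i| = \sum_{k=2}^{i-5} f_k$. The only care needed here is index bookkeeping: the sum runs from $k = 2$ (not $k = 1$) up to $k = i-5$, so the first factor $F_1 = \texttt{b}$ is deliberately excluded.

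Next I would invoke the standard telescoping identity $\sum_{k=1}^{n} f_k = f_{n+2} - 1$, which follows at once from $f_k = f_{k+2} - f_{k+1}$ (or by a one-line induction). Setting $n = i-5$ gives $\sum_{k=1}^{i-5} f_k = f_{i-3} - 1$. Subtracting the $k=1$ term, namely $f_1 = 1$, then yields
\[
|Q_i| = \sum_{k=2}^{i-5} f_k = (f_{i-3} - 1) - f_1 = f_{i-3} - 2,
\]
as claimed. A short remark would record that this requires $i \geq 7$ so that $Q_i$ is well defined (with $Q_7 = F_2$ giving $|Q_7| = 1 = f_4 - 2$ as the boundary check).

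I do not anticipate a genuine obstacle: the statement is a length computation, and the substantive content is entirely captured by the Fibonacci summation identity. The only place to be vigilant is the off-by-one in the summation range and the value of $f_1$, since miscounting the lowest-order factor would shift the answer by exactly the $2$ that appears in the target expression. As an alternative that sidesteps any appeal to the summation formula, one could instead argue by induction on $i$ using the recurrence $Q_{i+1} = F_{i-4} \, Q_i$, so that $|Q_{i+1}| = f_{i-4} + |Q_i| = f_{i-4} + (f_{i-3} - 2) = f_{i-2} - 2$, with base case $|Q_7| = |F_2| = 1 = f_4 - 2$; either route reaches the same conclusion.
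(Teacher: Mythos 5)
Your proof is correct and matches the paper's argument essentially verbatim: both compute $|Q_i| = \sum_{k=2}^{i-5} f_k$ and apply the identity $\sum_{k=1}^{n} f_k = f_{n+2}-1$, subtracting the $f_1$ term. The inductive alternative you sketch is a fine bonus but not needed.
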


\begin{restatable}{lemma}{fitwoqi}\label{thm:f-i-2-q-i}
For a substring $S$ of $F_i$,
if $F_{i-2} \ Q_i$ is a proper substring of $S$,
then $S$ is unique.
\end{restatable}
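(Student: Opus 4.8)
The plan is to reduce the claim to one general principle: \emph{any substring of $F_i$ that contains a unique substring is itself unique}. To see why this suffices, suppose $U$ is unique in $F_i$ and occurs inside $S$ at a fixed offset. Then every occurrence of $S$ in $F_i$ forces an occurrence of $U$ at a determined position; since $U$ occurs only once, the starting position of $S$ is pinned down, so $S$ occurs at most once. As $S$ is a substring of $F_i$ it occurs at least once, hence exactly once. I would state and justify this principle first, since it is the engine of the argument and shifts all the real content onto finding a suitable unique substring inside $S$.

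Next I would locate the occurrences of $F_{i-2} \ Q_i$ and their extensions precisely. By \cref{thm:f-i-2-q-i-occ}, $F_{i-2} \ Q_i$ occurs only at positions $1$ and $f_{i-2}+1$, and by \cref{thm:net-occ-fib} both are net occurrences. Using \cref{thm:q-len} to compute $|F_{i-2} \ Q_i| = f_{i-2} + (f_{i-3}-2) = f_{i-1}-2$, these two occurrences span the intervals $[1,\,f_{i-1}-2]$ and $[f_{i-2}+1,\,f_i-2]$ in $F_i$. The net-occurrence property then guarantees that each one-character extension that matters -- the right extension of the occurrence at position $1$, and both the left and right extensions of the occurrence at position $f_{i-2}+1$ -- is a unique string in $F_i$.

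The core step is then a short case analysis. Since $F_{i-2} \ Q_i$ is a \emph{proper} substring of $S$, any occurrence of $F_{i-2} \ Q_i$ inside the substring $S$ maps to one of the two occurrences above, and $S$ must extend it strictly on at least one side. If $S$ extends it to the right, then $S$ contains the corresponding right extension; if $S$ extends it to the left -- which can happen only at the occurrence at position $f_{i-2}+1$, since the occurrence at position $1$ already abuts the left end of $F_i$ -- then $S$ contains the corresponding left extension. In every case $S$ contains one of the unique extensions identified above, so by the general principle $S$ is unique.

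I expect the case analysis and the length bookkeeping to be entirely routine, with the heavy lifting already done in \cref{thm:f-i-2-q-i-occ} and \cref{thm:net-occ-fib}. The only point requiring genuine care is the boundary behavior at position $1$, where the left extension is undefined: one must observe that $S$ cannot extend leftward there and therefore is forced to extend rightward, so the right extension is always available. The main conceptual content is simply recognizing that the net-occurrence property of the two occurrences supplies exactly the unique extensions needed to trigger the containment principle.
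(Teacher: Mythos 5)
Your proposal is correct and follows essentially the same route as the paper: both occurrences of $F_{i-2} \ Q_i$ are net occurrences (\cref{thm:f-i-2-q-i-occ}, \cref{thm:net-occ-fib}), so their extensions are unique strings, and any proper superstring $S$ must contain one of these unique extensions and is therefore itself unique. The paper states this in three lines; your version merely makes explicit the containment principle and the boundary case at position $1$.
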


\begin{restatable}{lemma}{qione}\label{thm:q-i-1}
$F_{i-3}$ is always followed by 
 $Q_{i-1}$ in $F_i$.
\end{restatable}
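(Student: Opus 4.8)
The plan is to combine the complete list of occurrences of $F_{i-3}$ from \cref{thm:f-i-3} with the two identities of \cref{thm:q-i-delta} read one order lower. The single unifying fact I would isolate first is that $Q_{i-1}$ is a short prefix of both $F_{i-5}\,F_{i-6}$ and $F_{i-6}\,F_{i-5}$. Applying \cref{thm:q-i-delta} at order $i-1$ gives $F_{i-5}\,F_{i-6} = Q_{i-1}\,\Delta(\cdot)$ and $F_{i-6}\,F_{i-5} = Q_{i-1}\,\Delta(\cdot)$, where each $\Delta(\cdot)$ is a two-character string; since $F_{i-4} = F_{i-5}\,F_{i-6}$, this in particular says $F_{i-4} = Q_{i-1}\,\Delta(\cdot)$, so that $Q_{i-1}$ is a prefix of $F_{i-4}$ of length $|F_{i-4}|-2$. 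The whole lemma then reduces to showing that, at each of the four occurrences of $F_{i-3}$ in $F_i$ listed by \cref{thm:f-i-3} (positions $1$, $f_{i-3}+1$, $f_{i-2}+1$, and $f_{i-1}+1$), the text immediately to the right begins with either $F_{i-4}$ or $F_{i-6}\,F_{i-5}$.

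First I would dispatch the three positions $1$, $f_{i-2}+1$, and $f_{i-1}+1$ together. By \cref{thm:f-i-2}, these are precisely the starting positions of the occurrences of $F_{i-2}$ in $F_i$; the remaining $F_{i-3}$-occurrence, at $f_{i-3}+1$, is the only one that is not an $F_{i-2}$-start. At each of the three $F_{i-2}$-starts, $F_{i-3}$ appears as the prefix of a full occurrence of $F_{i-2} = F_{i-3}\,F_{i-4}$, so it is immediately followed by $F_{i-4}$, which begins with $Q_{i-1}$ by the fact above. Here one should check that the $F_{i-2}$-occurrence is entirely contained in $F_i$ at each position so that its $F_{i-4}$ part is genuinely present; this is immediate from the length bound $2f_{i-2} < f_i$ and from the factorization in \cref{eq:fac-1}.

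The remaining and more delicate case is the occurrence at position $f_{i-3}+1$, which is \emph{not} aligned with any $F_{i-2}$-occurrence. Here I would compute the string immediately following this $F_{i-3}$ explicitly. Peeling the leading $F_{i-3}$ off the factorization $F_i = F_{i-2}\,F_{i-3}\,F_{i-2}$ of \cref{eq:fac-1} (using $F_{i-2} = F_{i-3}\,F_{i-4}$) gives $F_i[f_{i-3}+1 \ldots f_i] = F_{i-4}\,F_{i-3}\,F_{i-2}$; the occurrence of $F_{i-3}$ at $f_{i-3}+1$ is then the prefix $F_{i-3}$ of $F_{i-4}\,F_{i-3}\cdots$, and removing it using $F_{i-4}\,F_{i-3} = F_{i-4}\,F_{i-4}\,F_{i-5}$ together with $F_{i-4} = F_{i-5}\,F_{i-6}$ leaves $F_{i-6}\,F_{i-5}\,F_{i-2}$ to its right. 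Since $F_{i-6}\,F_{i-5}$ begins with $Q_{i-1}$ by the fact above, this occurrence is also followed by $Q_{i-1}$, completing the four cases. I expect this length-bookkeeping at the unaligned occurrence—correctly identifying the prefix/suffix split across the inner $F_{i-4}$ boundary so as to expose the pattern $F_{i-6}\,F_{i-5}$—to be the main obstacle, while the remaining steps are routine. Finally, because \cref{thm:q-i-delta} is only available for order at least $7$, its application at order $i-1$ requires $i \geq 8$, so I would verify the base case $i = 7$ directly by inspection of $F_7$.
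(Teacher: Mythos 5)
Your proposal is correct and follows essentially the same route as the paper: the paper's proof likewise observes that the occurrence of $F_{i-3}$ at position $f_{i-3}+1$ is followed by $F_{i-6}\,F_{i-5} = Q_{i-1}\,\Delta(1-(i\bmod 2))$ while the other three occurrences are followed by $F_{i-4} = F_{i-5}\,F_{i-6} = Q_{i-1}\,\Delta(i\bmod 2)$, both via \cref{thm:q-i-delta} at order $i-1$. Your write-up merely makes explicit the case analysis and the small-order base case that the paper leaves implicit.
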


\begin{restatable}{lemma}{fitsf}\label{thm:f-i-365}
$F_{i-3} \ F_{i-6} \ F_{i-5}$ and its length-$(f_{i-2}-1)$ prefix are both unique in $F_i$.
\end{restatable}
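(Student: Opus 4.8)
The plan is to reduce the whole statement to a single uniqueness claim about the shorter string. First I would record the length: since $f_{i-5}+f_{i-6}=f_{i-4}$, we have $|F_{i-3}\,F_{i-6}\,F_{i-5}| = f_{i-3}+f_{i-4}=f_{i-2}$, so the ``length-$(f_{i-2}-1)$ prefix'' $P$ is exactly $F_{i-3}\,F_{i-6}\,F_{i-5}$ with its last character deleted. I would then observe that if $P$ is unique, then $F_{i-3}\,F_{i-6}\,F_{i-5}$ is automatically unique, because every occurrence of the longer string begins with an occurrence of $P$; thus it suffices to prove $P$ is unique. To see that $P$ (and the full string) occurs at all, I would apply \cref{eq:fac-2} with index shifted by one to get $F_{i-1}=F_{i-3}\,F_{i-3}\,F_{i-6}\,F_{i-5}$, which exhibits an occurrence of $F_{i-3}\,F_{i-6}\,F_{i-5}$ at position $f_{i-3}+1$ of $F_i$.

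Next I would pin down the candidate starting positions. Any occurrence of $P$ starts with an occurrence of its prefix $F_{i-3}$, so by \cref{thm:f-i-3} it must begin at one of positions $1$, $f_{i-3}+1$, $f_{i-2}+1$, or $f_{i-1}+1$. By \cref{thm:q-i-1}, each occurrence of $F_{i-3}$ is followed by $Q_{i-1}$, so at every one of these positions the string $F_{i-3}\,Q_{i-1}$ occurs as a common prefix; using \cref{thm:q-len} its length is $f_{i-3}+|Q_{i-1}| = f_{i-3}+(f_{i-4}-2)=f_{i-2}-2$. Since $|P|=f_{i-2}-1$, the prefix $P$ reaches exactly one character beyond this common block, so the \emph{only} character that can distinguish $P$'s occurrences at the four candidate positions is the single character immediately following $Q_{i-1}$, i.e.\ the one at offset $f_{i-2}-1$.

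Finally I would read off that distinguishing character at each position. At $f_{i-3}+1$ the factorization $F_{i-1}=F_{i-3}\,F_{i-3}\,F_{i-6}\,F_{i-5}$ shows the continuation after $F_{i-3}$ is $F_{i-6}\,F_{i-5}$, and by \cref{thm:q-i-delta} (with $i$ replaced by $i-1$) this equals $Q_{i-1}\,\Delta((i-1)\bmod 2)$. At positions $1$, $f_{i-2}+1$, and $f_{i-1}+1$, the relevant continuation after $F_{i-3}$ is $F_{i-4}=F_{i-5}\,F_{i-6}=Q_{i-1}\,\Delta(1-((i-1)\bmod 2))$, obtained from the local factorizations $F_i=F_{i-3}\,F_{i-4}\,F_{i-3}\,F_{i-2}$, \cref{eq:fac-1}, and $F_i=F_{i-1}\,F_{i-2}$ respectively (here I would also check there is room for one extra character, which follows since $f_{i-1}+f_{i-4}-1<f_i$). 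Because $\Delta(0)=\texttt{ba}$ and $\Delta(1)=\texttt{ab}$ already differ in their first character (\texttt{b} versus \texttt{a}), the character following $Q_{i-1}$ at position $f_{i-3}+1$ differs from the one at each of the other three positions. Hence $P$ occurs only at $f_{i-3}+1$ and is unique, and therefore so is $F_{i-3}\,F_{i-6}\,F_{i-5}$.

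The main obstacle I anticipate is the bookkeeping in the last step: correctly identifying, for each of the three ``competing'' positions, that the continuation after $F_{i-3}$ begins with $F_{i-4}$ (requiring the right local factorization at each position and a check that enough room remains), and tracking the parity of $\Delta$ through the index shift in \cref{thm:q-i-delta}. Once the common prefix $F_{i-3}\,Q_{i-1}$ is isolated via \cref{thm:q-i-1} and \cref{thm:q-len}, the decisive comparison collapses to the single character after $Q_{i-1}$, so the parity argument is routine and no further case analysis on the rest of the strings is needed.
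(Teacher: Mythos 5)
Your proposal is correct and follows essentially the same route as the paper's proof: both arguments reduce to the four occurrences of $F_{i-3}$ given by \cref{thm:f-i-3}, observe via \cref{thm:q-i-1} and \cref{thm:q-len} that the continuation $Q_{i-1}$ is common to all four and that the single character after it is $\Delta(1-(i \bmod 2))[1]$ at position $f_{i-3}+1$ but $\Delta(i \bmod 2)[1]$ at the other three. The only (immaterial) organizational difference is that you deduce uniqueness of the full string from uniqueness of its length-$(f_{i-2}-1)$ prefix, whereas the paper argues the full string first and then the prefix.
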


\begin{restatable}{lemma}{fithreerightext}\label{thm:f-i-3-right-ext}
The length-$(f_{i-3}-1)$ prefix of $F_{i-3}$
is always followed by $F_{i-3}[f_{i-3}]$ in $F_i$.
\end{restatable}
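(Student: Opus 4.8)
The plan is to reduce the claim to a single non-occurrence statement and then kill that statement with the forbidden-substring result \cref{thm:no-aaa}. Write $P := F_{i-3}[1 \ldots f_{i-3}-1]$ for the prefix in question and $c := F_{i-3}[f_{i-3}]$ for its conjectured forced successor, so that $F_{i-3} = P\,c$. Since the alphabet $\Sigma = \{\texttt{a},\texttt{b}\}$ is binary, the character following any occurrence of $P$ in $F_i$ is either $c$ or $\flip{c}$. Hence it suffices to prove that $P\,\flip{c}$ does not occur in $F_i$ at all: this forces every occurrence of $P$ that is followed by a character to be followed by $c$, which is exactly the claim. (An occurrence of $P$ as a suffix of $F_i$, such as the trailing \texttt{ab} when $i=7$, simply has no following character and is irrelevant to the right extension.)

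The key observation is that $P\,\flip{c}$ always ends in a pattern forbidden by \cref{thm:no-aaa}. First I would record the trailing characters of Fibonacci words, which follow by a routine induction from the recurrence $F_n = F_{n-1} \ F_{n-2}$ (the recurrence preserves suffixes, so the tail of $F_n$ agrees with the tail of $F_{n-2}$): for even order at least $4$, $F_n$ ends in \texttt{ba}, and for odd order at least $5$, $F_n$ ends in \texttt{aab}. Now split on the parity of $i-3$. If $i-3$ is even, then $c = \texttt{a}$ and the penultimate character of $F_{i-3}$ is \texttt{b}, so $P$ ends in \texttt{b} and $P\,\flip{c} = P\texttt{b}$ ends in \texttt{bb}. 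If $i-3$ is odd, then $c = \texttt{b}$ and the two characters preceding the final \texttt{b} of $F_{i-3}$ are \texttt{aa}, so $P$ ends in \texttt{aa} and $P\,\flip{c} = P\texttt{a}$ ends in \texttt{aaa}. In either case $P\,\flip{c}$ contains an occurrence of \texttt{bb} or \texttt{aaa}, so by \cref{thm:no-aaa} it cannot occur as a substring of $F_i$. Combining this with the binary-alphabet dichotomy above completes the argument, with no appeal to the occurrence characterization of $F_{i-3}$ (\cref{thm:f-i-3}) needed.

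The main obstacle is minor and purely a matter of careful bookkeeping: getting the trailing-character computation exactly right for both parities, and in particular confirming that odd-order Fibonacci words end in \texttt{aab} (not merely \texttt{ab}), since it is precisely the antepenultimate \texttt{a} that, after deletion of the final \texttt{b}, exposes the \texttt{aa} needed to produce the forbidden \texttt{aaa}. One should also verify that $P$ is long enough for these suffix patterns to be meaningful, which holds since $i \geq 7$ gives $i-3 \geq 4$ and hence $|P| = f_{i-3} - 1 \geq 2$.
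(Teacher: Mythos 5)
Your proof is correct and follows essentially the same route as the paper's: both arguments pin down the last one or two characters of the prefix and then invoke \cref{thm:no-aaa} (no \texttt{aaa}, no \texttt{bb}) to force the right extension character. The only cosmetic difference is that you obtain the trailing characters via a parity induction on Fibonacci suffixes, whereas the paper reads them off the factorization $F_{i-3} = Q_i \ \Delta(1-(i \bmod 2))$ from \cref{thm:q-i-delta}.
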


Now, we introduce the two crucial lemmas motivated earlier.

\begin{figure}
\centering

\begin{subfigure}{0.45\linewidth}
\centering
\includegraphics[width=\linewidth]{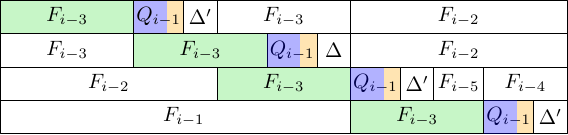}
\caption{Case (\ref{fib-case-1})}
\end{subfigure}
\hfill
\begin{subfigure}{0.45\linewidth}
\centering
\includegraphics[width=\linewidth]{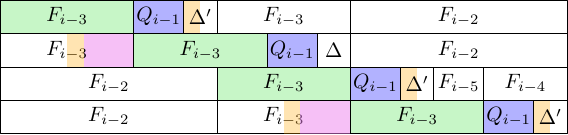}
\caption{Case (\ref{fib-case-2})}
\end{subfigure}

\medskip

\begin{subfigure}{0.45\linewidth}
\centering
\includegraphics[width=\linewidth]{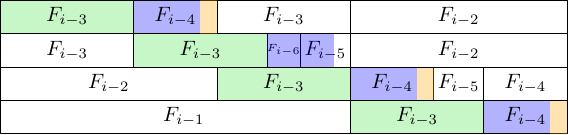}
\caption{Case (\ref{fib-case-3})}
\end{subfigure}
\hfill
\begin{subfigure}{0.45\linewidth}
\centering
\includegraphics[width=\linewidth]{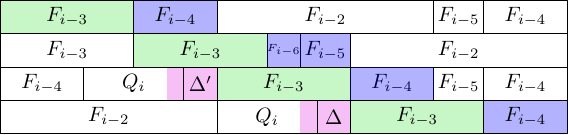}
\caption{Case (\ref{fib-case-4})}
\end{subfigure}

\medskip

\begin{subfigure}{0.45\linewidth}
\centering
\includegraphics[width=\linewidth]{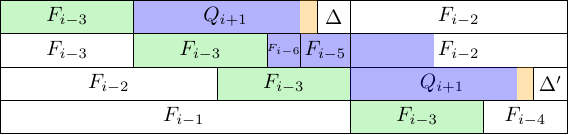}
\caption{Case (\ref{fib-case-5})}
\end{subfigure}
\hfill
\begin{subfigure}{0.45\linewidth}
\centering
\includegraphics[width=\linewidth]{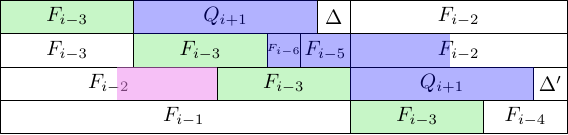}
\caption{Case (\ref{fib-case-6})}
\end{subfigure}

\medskip

\begin{subfigure}{0.45\linewidth}
\centering
\includegraphics[width=\linewidth]{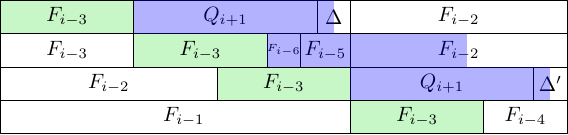}
\caption{Case (\ref{fib-case-7})}
\end{subfigure}
\caption{
Illustration of the proof of \cref{thm:f-i-3-sub-occ}.
In each case, four factorizations of $F_i$ are shown,
each focusing on one occurrence of $F_{i-3}$, highlighted in green.
For $S = X \ F_{i-3} \ Y$,
each discussed $X$ and $Y$ is shown in pink and blue, respectively.
Each discussed left or right extension character of $S$ is shown in yellow.
Recall that $\Delta := \Delta(1 - (i \bmod 2))$ and 
$\Delta' := \Delta(i \bmod 2)$.
}
\label{fig:proof-7-cases}
\end{figure}

\begin{lemma}\label{thm:f-i-3-sub-occ}
Consider an occurrence $(s, e)$ in $F_i$
and let $S := F_i[s \ldots e]$.
If  $(s, e)$ is a super-occurrence of
an occurrence of $F_{i-3}$,
and $S$ is neither $F_{i-2}$ nor $F_{i-2} \ Q_i$,
then $(s, e)$ is not a net occurrence.
\end{lemma}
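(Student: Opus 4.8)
The plan is to first pin down the nested occurrence of $F_{i-3}$ using \cref{thm:f-i-3}: every occurrence of $F_{i-3}$ in $F_i$ starts at one of the four positions $1$, $f_{i-3}+1$, $f_{i-2}+1$, $f_{i-1}+1$, so the occurrence of $F_{i-3}$ contained in $(s,e)$ sits at one of these. Accordingly I write $S = X\,F_{i-3}\,Y$, where $X$ is the (possibly empty) block of $F_i$ immediately to the left of this $F_{i-3}$ and $Y$ the block immediately to its right. To prove $(s,e)$ is not a net occurrence it suffices to produce, in each configuration, one of three witnesses: that the right extension $S\,F_i[e+1]$ occurs at least twice, that the left extension $F_i[s-1]\,S$ occurs at least twice, or that $S$ itself is unique; the last immediately disqualifies $(s,e)$, since a net occurrence must be repeated.

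I would organize the case analysis by how far $Y$ reaches past the nested $F_{i-3}$, examining, for each shape, the four factorizations of $F_i$ (one per occurrence of $F_{i-3}$, as in \cref{fig:proof-7-cases}) in order to compare the left and right contexts of the four occurrences. When $Y$ is short I exploit \cref{thm:q-i-1}: since every $F_{i-3}$ is immediately followed by the same $Q_{i-1}$, the right contexts of all four occurrences agree on a prefix of length $|Q_{i-1}| = f_{i-4}-2$ (using \cref{thm:q-len} at order $i-1$). Hence for such $Y$ the string $F_{i-3}\,Y$ is common to all four occurrences, and I need only compare the left contexts $X$ to locate a second occurrence of $S$, and then read off one further character to locate a repeated extension. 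When $Y$ just exceeds $Q_{i-1}$, the right-extension character is forced by the two-character block $\Delta$ that completes $F_{i-4} = Q_{i-1}\,\Delta$ (\cref{thm:q-i-delta} at order $i-1$) together with \cref{thm:f-i-3-right-ext}, and this is precisely where the contexts of different occurrences diverge, by the near-commutativity $F_{i-3}\,F_{i-4} \neq F_{i-4}\,F_{i-3}$.

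For long $Y$, equivalently once $S$ grows large, I would fall back on uniqueness: as soon as $F_{i-2}\,Q_i$ becomes a proper substring of $S$, \cref{thm:f-i-2-q-i} gives that $S$ is unique and we are done, while the intermediate range is covered by the uniqueness of $F_{i-3}\,F_{i-6}\,F_{i-5}$ and of its length-$(f_{i-2}-1)$ prefix (\cref{thm:f-i-365}), which forces $S$ to be unique before it can reach $F_{i-2}\,Q_i$. Throughout, the two configurations $S = F_{i-2} = F_{i-3}\,F_{i-4}$ and $S = F_{i-2}\,Q_i = F_{i-3}\,F_{i-3}\,Q_{i-1}$ are exactly the shapes the hypothesis excludes; these are the ones that pass every test, being repeated with both extensions unique by \cref{thm:net-occ-fib}, so the analysis must and does leave them out.

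The main obstacle is the transition regime: correctly identifying, for each of the four occurrences, the exact offset at which its right and left contexts first differ from those of the others. This is where the near-commutativity and the $\Delta$-level (two-character) resolution of \cref{thm:q-i-delta} are essential, and where one must be careful that every shape $(X,Y)$ is accounted for exactly once and that $F_{i-2}$ and $F_{i-2}\,Q_i$ are neither wrongly excluded nor wrongly caught. By contrast, the short- and long-$Y$ regimes are comparatively routine once \cref{thm:q-i-1}, \cref{thm:f-i-365}, and \cref{thm:f-i-2-q-i} are in hand.
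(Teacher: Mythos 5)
Your proposal follows essentially the same route as the paper's proof: decompose $S = X\,F_{i-3}\,Y$ around the nested occurrence (whose position is one of the four given by \cref{thm:f-i-3}), case on how far $Y$ extends, force a common right (or left) extension character via \cref{thm:q-i-1} when $Y$ is short, handle the transition at the $\Delta$-level, and fall back on the uniqueness lemmas (\cref{thm:f-i-365}, \cref{thm:f-i-2-q-i}) when $Y$ is long, with $F_{i-2}$ and $F_{i-2}\,Q_i$ emerging as the two excluded shapes. The paper executes exactly this plan in seven explicit cases on $|Y|$, so your outline is correct and matches its approach.
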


\begin{proof}
The proof is illustrated in \cref{fig:proof-7-cases}.
Consider strings $X$ and $Y$ such that $S = X  \ F_{i-3} \ Y$ 
and  $X \ Y \neq \epsilon$.
We examine the following cases depending on $|Y|$.
Note that $|Q_{i-1}| = f_{i-4}-2$ and $|Q_{i+1}| = f_{i-2}-2$
from \cref{thm:q-len}.

\begin{enumerate}[(1)] 
\item\label{fib-case-1}
$|Y| < |Q_{i-1}|$.
Using \cref{thm:q-i-1},
note that $Y$ is a prefix of $Q_{i-1}$ in this case. 
This means the right extension character of $S$ is always $Q_{i-1}[|Y|+1]$.
Thus, no occurrence of $S$ is a net occurrence.

\item\label{fib-case-2}
$|Y| = |Q_{i-1}|$.
Using \cref{thm:q-i-1},
$F_{i-3} \ Y = F_{i-3} \ Q_{i-1}$ always holds in this case.
Next, if $F_{i-3} \ Y$ occurs at position $1$, $f_{i-2}+1$, 
or $f_{i-1}+1$, then the right extension character 
is always $\Delta(i \bmod 2)[1]$.
On the other hand, if $F_{i-3} \ Y$ occurs at position $f_{i-3}+1$,
we examine the left extension character of 
$S = XF_{i-3}Y$.
Notice that $|X| \leq f_{i-3}$ always holds in this case
(and $S$ becomes a prefix of $F_i$ when $|X| = f_{i-3}$).
Now, observe that if $F_{i-3} \ Y$ occurs at position $f_{i-3}+1$ or $f_{i-1}+1$,
the left extension character of $S$ is always 
$F_{i-3}[f_{i-3} - |X| - 1]$.
Thus, this occurrence of $S$ is also not a net occurrence.

\item\label{fib-case-3}
$|Q_{i-1}| < |Y| < f_{i-4}$.
If $F_{i-3} \ Y$ occurs at positions $1$, $f_{i-2}+1$, or $f_{i-1}+1$,
observe that occurrences of $F_{i-3}$ 
at these three positions are always followed by $F_{i-4}$.
Thus, $Y$ is a prefix of $F_{i-4}$
and the right extension character of $S$ is always 
$F_{i-4}[|Y|+1]$.
So these three occurrences of $S$ are not net occurrences.
If $F_{i-3} \ Y$ occurs at position $f_{i-3}+1$,
since $|Y| =|Q_{i-1}|+1 = f_{i-4} - 1$,
we have $|F_{i-3} \ Y| = f_{i-3} + f_{i-4} - 1 = f_{i-2} - 1$.
By \cref{thm:f-i-365}, $F_{i-3} \ Y$ is unique,
which means $S$ is unique.

\item\label{fib-case-4}
$|Y| = f_{i-4}$.
If $F_{i-3}  Y$ occurs at position 1,
then $X$ is empty and $S = F_{i-3} \ F_{i-4} = F_{i-2}$. 
If $F_{i-3}  Y$ occurs at positions $f_{i-3}+1$,
then $F_{i-3}  Y = F_{i-3} \ F_{i-6} \ F_{i-5}$ is unique (\cref{thm:f-i-365})
so $S$ is also unique.
If $F_{i-3} Y$ occurs at position $f_{i-2}+1$,
then $X$ ends with $\Delta(i \bmod 2)$. 
If $F_{i-3} Y$ occurs at position $f_{i-1}+1$,
then $X$ ends with $\Delta(1 - (i \bmod 2))$.
Now, since $\Delta(i \bmod 2) \ F_{i-2}$
and $\Delta(1 - (i \bmod 2))\ F_{i-2}$ are both unique by \cref{thm:f-i-2},
$S$ is also unique if $F_{i-3} Y$ occurs at these two positions.

\item\label{fib-case-5}
$f_{i-4} < |Y| < |Q_{i+1}|$.
First observe that 
the occurrences of $F_{i-3}$ at positions 
$1$ and $f_{i-2}+1$ are both followed by $F_{i-4}\ Q_i = Q_{i+1}$.
Thus, if $F_{i-3}Y$ occurs at these two positions, 
then $Y$ is a prefix of $Q_{i+1}$
and the right extension character of $S$ is always $Q_{i+1}[|Y|+1]$.
So these two occurrences of $S$ are not net occurrences.
Next, if $F_{i-3} \ Y$ occurs at position $f_{i-3}+1$,
then $F_{i-3} \ Y$ is unique
because $F_{i-3} \ F_{i-6} \ F_{i-5}$ is a prefix of $F_{i-3} Y$
and the former is unique by \cref{thm:f-i-365}.
Thus, $S$ is unique.
Finally note that, $F_{i-3} Y$ 
cannot occur at position $f_{i-1}+1$ because
$|Y| > |F_{i-4}|$ and  $F_{i-3} \ F_{i-4}$ is a suffix of $F_i$.

\item\label{fib-case-6}
$|Y| = |Q_{i+1}|$.
Similar to the previous case,
if $F_{i-3}Y$ occurs at position $f_{i-3}+1$,
then $F_{i-3} Y$ is unique,
and $F_{i-3}Y$ cannot occur at position $f_{i-1}+1$.
If $F_{i-3}Y$ occurs at position 1,
then $X$ is empty and 
$S = F_{i-3}Y = F_{i-3} \  Q_{i+1} = F_{i-2} \ Q_i$.
If $F_{i-3}Y$ occurs at position $f_{i-2}+1$,
then $S = X \ F_{i-2} \ Q_i$,
which is unique by \cref{thm:f-i-2-q-i}.

\item\label{fib-case-7}
$|Y| > |Q_{i+1}|$.
Similar to the previous two cases,
if $F_{i-3}Y$ occurs at position $f_{i-3}+1$,
then $F_{i-3} Y$ is unique,
and $F_{i-3}Y$ cannot occur at position $f_{i-1}+1$.
If $F_{i-3}Y$ occurs at position $f_{i-2}+1$,
then $F_{i-3}Y$ is a prefix of 
$F_{i-3} \ F_{i-2} = 
F_{i-2} \ F_{i-5} \ F_{i-4} =
F_{i-2} \ Q_i \ \Delta(i \bmod 2)$,
which is unique by \cref{thm:f-i-2-q-i}.
Thus, $S$ is unique. 
Finally, 
since $|Y| > |Q_{i+1}| > f_{i-2}-1$, 
if $F_{i-3}Y$ occurs at positions $1$,
then $Y$ begins with 
the length-$(f_{i-2}-1)$ prefix of
$F_{i-3} \ F_{i-6} \ F_{i-5} 
$,
which is unique by \cref{thm:f-i-365}.
Thus, $S$ is also unique. 
\qedhere
\end{enumerate}
\end{proof}

\begin{lemma}\label{thm:q-i-sub-occ}
Consider an occurrence $(s, e)$ in $F_i$.
If  $(s, e)$ is a proper super-occurrence of
the occurrence of $Q_i$ at position $f_{i-2}+1$,
then $(s, e)$ is not a net occurrence.
\end{lemma}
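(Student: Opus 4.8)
The plan is to reduce almost everything to \cref{thm:f-i-3-sub-occ} by noticing that the BNSO occurrence of $Q_i$ is a prefix of an occurrence of $F_{i-3}$. Indeed, by \cref{thm:q-i-delta} we have $F_{i-3} = F_{i-4}\ F_{i-5} = Q_i\ \Delta(1 - (i \bmod 2))$, so $Q_i$ is the length-$(f_{i-3}-2)$ prefix of $F_{i-3}$; and by \cref{eq:fac-1}, writing $F_i = F_{i-2}\ F_{i-3}\ F_{i-2}$, the occurrence of $Q_i$ at position $f_{i-2}+1$, namely $(f_{i-2}+1, f_{i-1}-2)$, is precisely the prefix of the occurrence of $F_{i-3}$ given by $(f_{i-2}+1, f_{i-1})$. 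By \cref{thm:super-occ-bnso} the super-occurrences $(s,e)$ that must be examined are those extending this BNSO strictly on both sides, that is, with $s \le f_{i-2}$ and $e \ge f_{i-1}-1$; I would record this reduction at the outset, since the boundary values $s = f_{i-2}+1$ and $e = f_{i-1}-2$ are realized by the ONOC's own net occurrences of $F_{i-2}\ Q_i$ and so lie outside the scope of what remains to be ruled out.

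I would then split on the right endpoint $e$ relative to $f_{i-1}$, the right end of the $F_{i-3}$ occurrence. First, when $e \ge f_{i-1}$, the occurrence $(s,e)$ contains the whole span $(f_{i-2}+1, f_{i-1})$ and, since $s \le f_{i-2} < f_{i-2}+1$, is a proper super-occurrence of that occurrence of $F_{i-3}$. Thus \cref{thm:f-i-3-sub-occ} applies and yields that $(s,e)$ is not a net occurrence unless $S := F_i[s \ldots e]$ equals $F_{i-2}$ or $F_{i-2}\ Q_i$. To close this case I would verify that neither exception can arise under $s \le f_{i-2}$ and $e \ge f_{i-1}$: by \cref{thm:f-i-2} and \cref{thm:f-i-2-q-i-occ} the only occurrences of $F_{i-2}$ and of $F_{i-2}\ Q_i$ whose starting position is at most $f_{i-2}$ are the ones at position $1$, and these end at $f_{i-2}$ and $f_{i-1}-2$ respectively, both below $f_{i-1}$, contradicting $e \ge f_{i-1}$.

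The remaining case is $e = f_{i-1}-1$, where $(s,e)$ ends one position short of the $F_{i-3}$ occurrence, so that the last $f_{i-3}-1$ characters of $S$ form the length-$(f_{i-3}-1)$ prefix of $F_{i-3}$ (a proper suffix of $S$ because $s \le f_{i-2}$). Here I would apply \cref{thm:f-i-3-right-ext}: this prefix is always followed in $F_i$ by the fixed character $F_{i-3}[f_{i-3}]$, so every occurrence of $S$ has the same right extension character. Consequently, if $S$ is repeated then its right extension $F_i[s \ldots e+1]$ occurs equally often and is also repeated, so the right extension is not unique; and if $S$ is unique then $(s,e)$ is not a net occurrence by definition. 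Either way $(s,e)$ is not a net occurrence, which completes the analysis.

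The main obstacle I anticipate is the boundary bookkeeping rather than any deep combinatorial fact: one must pin down that the values $e = f_{i-1}-2$ and $s = f_{i-2}+1$ excluded by the strict two-sided extension are exactly those where a genuine net occurrence lives (the two copies of $F_{i-2}\ Q_i$), so that \cref{thm:f-i-3-sub-occ} together with \cref{thm:f-i-3-right-ext} really does cover every super-occurrence that \cref{thm:super-occ-bnso} leaves to be checked. The length identities $|Q_i| = f_{i-3}-2$ from \cref{thm:q-len} and the resulting $|F_{i-2}\ Q_i| = f_{i-1}-2$ are what make these endpoint computations line up.
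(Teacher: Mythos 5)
Your proposal follows the same route as the paper's own proof: you split on how far $(s,e)$ extends past the right end of the $Q_i$ occurrence, send the case $e \geq f_{i-1}$ (the paper's $|Y| \geq 2$) to \cref{thm:f-i-3-sub-occ} via the observation that this occurrence of $Q_i$ extended by two characters is exactly the occurrence of $F_{i-3}$ at position $f_{i-2}+1$, and settle the case $e = f_{i-1}-1$ (the paper's $|Y| = 1$) with \cref{thm:f-i-3-right-ext}. The core argument is identical.

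Where you differ is at the boundary, and your version is the more careful one. The paper's proof treats only $|Y| \geq 2$ and $|Y| = 1$, silently omitting $|Y| = 0$, and in the $|Y| \geq 2$ case it invokes \cref{thm:f-i-3-sub-occ} without checking that $S$ is neither $F_{i-2}$ nor $F_{i-2} \ Q_i$. These omissions are not cosmetic: the two net occurrences of $F_{i-2} \ Q_i$, namely $(1, f_{i-1}-2)$ and $(f_{i-2}+1, f_i-2)$, are both proper super-occurrences of the $Q_i$ occurrence $(f_{i-2}+1, f_{i-1}-2)$ --- the first with $Y = \epsilon$, the second with $X = \epsilon$ and $|Y| = f_{i-2} \geq 2$ --- so the lemma as literally stated is false. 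You correctly spotted this, restricted attention to the two-sided extensions $s \leq f_{i-2}$, $e \geq f_{i-1}-1$ that \cref{thm:super-occ-bnso} actually requires, and verified via \cref{thm:f-i-2} and \cref{thm:f-i-2-q-i-occ} that the exceptional shapes $F_{i-2}$ and $F_{i-2} \ Q_i$ cannot arise in that range; both steps check out. The only caveat is that what you prove is a restricted variant rather than the statement as written, so you should either state that variant explicitly or add the exclusion $S \notin \{F_{i-2}, \ F_{i-2} \ Q_i\}$ to the hypothesis, mirroring \cref{thm:f-i-3-sub-occ}; either form suffices for \cref{thm:fib-only-net-occ}.
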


\begin{proof}
Consider strings $X$ and $Y$ such that $S = X  Q_i Y$
and  $X \ Y \neq \epsilon$.
When $|Y| \geq 2$, notice that $F_{i-3}$ occurs in $S$.
By \cref{thm:f-i-3-sub-occ}, $(s, e)$ is not a net occurrence.
Now, we consider the case when $|Y| = 1$.
Note that $Q_i Y$ is precisely the length-($f_{i-3}-1$) prefix of 
$F_{i-3}$.
Thus, by \cref{thm:f-i-3-right-ext},
$Q_i Y$ is always followed by the same right extension character, 
$F_{i-3}[f_{i-3}]$, which means $(s, e)$ is not a net occurrence.
\end{proof}

Finally, the main result follows from \cref{thm:net-occ-fib}, \cref{thm:super-occ-bnso},  \cref{thm:f-i-3-sub-occ} and \cref{thm:q-i-sub-occ}.

\begin{theorem}\label{thm:fib-only-net-occ}
The three net occurrences identified in \cref{thm:net-occ-fib}
are the only ones in $F_i$.
\end{theorem}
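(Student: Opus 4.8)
The plan is to assemble the final theorem by combining the machinery developed throughout \cref{sec:fib-net-occ} with the overlapping net occurrence cover framework from \cref{sec:onoc}. The strategy has three phases, mirroring the three-step recipe announced just after \cref{thm:super-occ-bnso}. First I would verify that the three net occurrences from \cref{thm:net-occ-fib}, namely the occurrence of $F_{i-2}$ at position $f_{i-1}+1$ and the two occurrences of $F_{i-2} \ Q_i$ at positions $1$ and $f_{i-2}+1$, indeed form an ONOC of $F_i$. This requires checking the three defining conditions of \cref{def:onoc}: that the leftmost occurrence starts at position $1$ (true, since $F_{i-2} \ Q_i$ occurs at position $1$), that consecutive net occurrences overlap ($i_{k+1} \leq j_k$), and that the rightmost one ends at position $f_i$. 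The overlap and coverage can be read off directly from the factorizations illustrated in \cref{fig:fib-non-net-occ-cases}, using \cref{thm:q-len} to pin down the exact endpoints via $|Q_i| = f_{i-3} - 2$.

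Second, I would identify the two BNSOs of this ONOC. As already flagged in the text preceding the theorem, these correspond to an occurrence of $Q_i$ and an occurrence of $F_{i-4} \ Q_i$ respectively; this identification again follows by computing the overlapping regions from the endpoints of the three net occurrences. Third, and this is the heart of the argument, I would invoke \cref{thm:super-occ-bnso}: any net occurrence outside the ONOC must be a super-occurrence of $(i-1, j+1)$ for some BNSO $(i,j)$. Concretely, I must rule out net occurrences among the super-occurrences associated with each of the two BNSOs. For the BNSO corresponding to $F_{i-4} \ Q_i$, I would observe that $F_{i-3}$ is a prefix of $F_{i-4} \ Q_i = F_{i-3} \ Q_{i-1}$, so every relevant super-occurrence contains an occurrence of $F_{i-3}$; \cref{thm:f-i-3-sub-occ} then shows that no such super-occurrence is a net occurrence unless the underlying string is exactly $F_{i-2}$ or $F_{i-2} \ Q_i$, both of which are already accounted for in the ONOC. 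For the BNSO corresponding to $Q_i$, the super-occurrences are handled directly by \cref{thm:q-i-sub-occ}. Taken together, these two lemmas exhaust the candidates produced by \cref{thm:super-occ-bnso}, forcing the ONOC to already contain every net occurrence in $F_i$.

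The main obstacle, as the authors themselves have evidently anticipated by front-loading so much structural work, is not the final assembly but the precise bookkeeping of endpoints needed to (a) confirm the ONOC conditions and (b) correctly locate the two BNSOs so that the super-occurrence sets they generate line up exactly with the hypotheses of \cref{thm:f-i-3-sub-occ} and \cref{thm:q-i-sub-occ}. A subtle point to get right is the role of the exclusion clause in \cref{thm:f-i-3-sub-occ}: the lemma deliberately exempts the strings $F_{i-2}$ and $F_{i-2} \ Q_i$, and I must confirm that these exempted cases coincide precisely with the genuine net occurrences already in the ONOC, so that nothing is double-counted or overlooked. Once this alignment is verified, the conclusion is immediate. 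I would also note that the statement is implicitly restricted to $i \geq 7$, inheriting the hypotheses of \cref{thm:net-occ-fib}, \cref{thm:f-i-3-sub-occ}, and \cref{thm:q-i-sub-occ}; the finitely many smaller cases can be checked by direct inspection if a fully uniform statement is desired.
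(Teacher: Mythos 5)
Your proposal is correct and follows essentially the same route as the paper: the authors likewise establish that the three net occurrences form an ONOC, identify the two BNSOs as an occurrence of $Q_i$ and an occurrence of $F_{i-4}\ Q_i$, and then dispatch all candidate super-occurrences via \cref{thm:super-occ-bnso}, \cref{thm:f-i-3-sub-occ} (using that $F_{i-3}$ is a prefix of $F_{i-4}\ Q_i = F_{i-3}\ Q_{i-1}$), and \cref{thm:q-i-sub-occ}. Your observation about the exclusion clause in \cref{thm:f-i-3-sub-occ} aligning with the net occurrences already in the ONOC is exactly the alignment the paper relies on.
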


\section{Net Occurrences in Thue-Morse Words}\label{sec:tm-net-occ}

In this section,
we prove the only nine net occurrences in each $\tm_i$
using the results on the occurrences of Thue-Morse words of smaller order from \cref{sec:occ-tm-smaller-ord},
the notion of ONOC from \cref{sec:onoc},
and new results that we will introduce in this section.
We will first show that each occurrence of each string 
in $\mathcal{P}_i$ (defined below) 
is a net occurrence in $\tm_i$, then show that they are the only ones.

\begin{definition}\label{def:tm-pos-nf-str}
For each $i \geq 5$,
define $\mathcal{P}_i := \{ 
\tm_{i-2}, \
\flip{\tm_{i-2}}, \
\tm_{i-4} \ \flip{\tm_{i-3}}, \
\flip{\tm_{i-4}} \ \tm_{i-3} \}$..
\end{definition}

We next show several factorizations of $\tm_i$,
proved in \cref{appx:tm-net-occ-proofs}.

\begin{restatable}{lemma}{tmifac}\label{thm:tm-i-fac}
For each $i \geq 5$:
\begin{align}
\tm_{i} &= \tm_{i-2} \ \flip{\tm_{i-2}} \ \flip{\tm_{i-2}} \ \tm_{i-2} 
\\
\tm_{i} &= \tm_{i-2} \ \flip{\tm_{i-3}} \ \tm_{i-2} \ \tm_{i-3} \ \tm_{i-2} 
\\
\tm_{i} &= \tm_{i-3} \ \flip{\tm_{i-4}} \ 
\tm_{i-4} \ \flip{\tm_{i-3}} \ 
\tm_{i-2} \ 
\tm_{i-4} \ \flip{\tm_{i-3}} \ 
\flip{\tm_{i-4}} \ \flip{\tm_{i-3}} 
\label{eq:tm-fac-i-4-flip-i-3} 
\\
\tm_{i} &=  \tm_{i-3} \ 
\flip{\tm_{i-4}} \ \tm_{i-3} \ 
\tm_{i-4} \ \tm_{i-2} \ \tm_{i-4} \ 
\flip{\tm_{i-4}} \ \tm_{i-3} \  
\flip{\tm_{i-3}}
\label{eq:tm-fac-flip-i-4-i-3}
\end{align}
\end{restatable}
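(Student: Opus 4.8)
The plan is to prove each of the four factorization identities in \cref{thm:tm-i-fac} by repeatedly applying the defining recurrence $\tm_i = \tm_{i-1} \ \flip{\tm_{i-1}}$ and its flipped counterpart $\flip{\tm_i} = \flip{\tm_{i-1}} \ \tm_{i-1}$, unfolding down to order $i-2$, $i-3$, or $i-4$ as required by the right-hand side of each equation. The essential tool is the observation that flipping distributes over concatenation, so that $\flip{\tm_{i-1}} = \flip{\tm_{i-2} \ \flip{\tm_{i-2}}} = \flip{\tm_{i-2}} \ \tm_{i-2}$; I would state this distributive fact once at the start and then use it silently. Since everything is a finite rewriting, each identity reduces to checking that a fixed pattern of $\tm$'s and $\flip{\tm}$'s at a single fixed order matches on both sides, which is purely mechanical.

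Concretely, I would handle the first identity directly: $\tm_i = \tm_{i-1} \ \flip{\tm_{i-1}} = (\tm_{i-2} \ \flip{\tm_{i-2}})(\flip{\tm_{i-2}} \ \tm_{i-2})$, which is exactly the claimed right-hand side (this is also just the $j=3$ instance of \cref{thm:tm-basic-fac}). For the second identity I would unfold one further step on selected blocks: starting from the first identity, expand the inner $\tm_{i-2}$ and $\flip{\tm_{i-2}}$ factors to order $i-3$ so that $\tm_{i-2} = \tm_{i-3} \ \flip{\tm_{i-3}}$ and $\flip{\tm_{i-2}} = \flip{\tm_{i-3}} \ \tm_{i-3}$, then regroup the resulting length-$8$ sequence of order-$(i-3)$ blocks and recognize consecutive pairs as $\tm_{i-2}$ where the claim asks for them; the goal is to match the mixed-order right-hand side $\tm_{i-2} \ \flip{\tm_{i-3}} \ \tm_{i-2} \ \tm_{i-3} \ \tm_{i-2}$. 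The two remaining identities \eqref{eq:tm-fac-i-4-flip-i-3} and \eqref{eq:tm-fac-flip-i-4-i-3} are the most involved: here I would unfold all the way to order $i-4$, obtaining a canonical sequence of sixteen order-$(i-4)$ blocks (the length-$16$ Thue--Morse pattern), and then verify that expanding each factor on the claimed right-hand side down to order $i-4$ produces the identical sequence. Throughout I would use the length bookkeeping $\ltm_{i-j} = 2^{i-j-1}$ to confirm that the total lengths agree, providing a quick sanity check that the factorization is complete.

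The main obstacle is not conceptual but combinatorial: the two length-$9$-factor identities expand to sixteen order-$(i-4)$ atoms, and I must be careful that the alternating flip pattern is tracked correctly through every unfolding step, since a single sign error in whether a block is $\tm$ or $\flip{\tm}$ invalidates the match. To keep this manageable I would first write out once and for all the canonical order-$(i-4)$ expansions of $\tm_{i-2}$, $\tm_{i-3}$, and $\flip{\tm_{i-3}}$ (each as a short block of two or four atoms), and then substitute these into both sides, reducing the verification to comparing two explicit atom-sequences of equal length. The remaining check is then entirely routine pattern-matching, which I would relegate to a direct unfolding in the appendix as the statement indicates (the proof is deferred to \cref{appx:tm-net-occ-proofs}).
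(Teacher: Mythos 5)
Your proposal is correct and takes essentially the same approach as the paper: both proofs are purely mechanical unfoldings of the recurrence $\tm_i = \tm_{i-1}\ \flip{\tm_{i-1}}$ together with the fact that flipping distributes over concatenation (so $\flip{\tm_{i-1}} = \flip{\tm_{i-2}}\ \tm_{i-2}$). The only cosmetic difference is that the paper obtains \eqref{eq:tm-fac-i-4-flip-i-3} and \eqref{eq:tm-fac-flip-i-4-i-3} by chained local substitutions into the earlier identities (e.g.\ rewriting $\tm_{i-3}\ \tm_{i-2}$ as $\tm_{i-4}\ \flip{\tm_{i-3}}\ \flip{\tm_{i-4}}\ \flip{\tm_{i-3}}$), whereas you expand both sides down to a canonical sequence of sixteen order-$(i-4)$ atoms and compare; the content is identical.
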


\begin{figure}[t]
\centering
\includegraphics[width=0.55\linewidth]{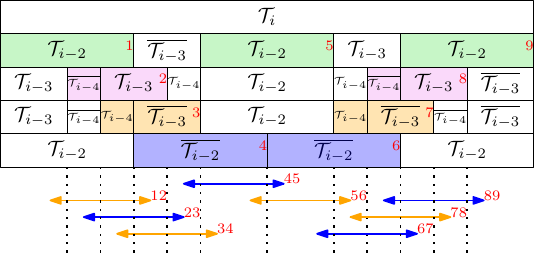}
\caption{
An illustration of several factorizations of $\tm_i$ from \cref{thm:tm-i-fac}.
Net occurrences of each string in \cref{def:tm-pos-nf-str} 
are highlighted in a separate color.
Super-occurrences of the eight BNSOs are shown as colored arrows,
blue for $\tm_{i-3}$ and orange for $\flip{\tm_{i-3}}$ (see \cref{thm:ext-tm-i-3}).
Each net occurrence is numbered in red
at the top-right corner,
and each arrow with label $ij$
corresponds to an overlap between 
the $i^\text{th}$ and $j^\text{th}$ net occurrences.
}
\label{fig:tm-non-net-occ-cases}
\end{figure}

The following two results immediately follow from \cref{thm:tm-occ}.
Note that \cref{thm:tm-occ-234} also appears in \cite{journal/jda/2012/radoszewski}.

\begin{corollary}\label{thm:tm-occ-234}
For each $i \geq 5$:
\begin{itemize}
\item 
$\tm_{i-2}$ only occurs at positions
1, $\ltm_{i-2} + \ltm_{i-3} +1$ and $\ltm_{i-1} + \ltm_{i-2} +1$
in $\tm_i$.
\item 
$\flip{\tm_{i-2}}$ only occurs at positions
$\ltm_{i-2} +1$ and  $\ltm_{i-1} +1$
in $\tm_i$.
\item 
$\tm_{i-4} \ \flip{\tm_{i-3}}$ only occurs at positions 
$\ltm_{i-3} + \ltm_{i-4} +1$ and  $\ltm_{i-1} + \ltm_{i-3} + 1$
in $\tm_i$.
\item 
$\flip{\tm_{i-4}} \ \tm_{i-3}$ only occurs at positions 
$\ltm_{i-3} +1 $ and $  \ltm_{i-1} + \ltm_{i-3} + \ltm_{i-4} + 1$
in $\tm_i$.
\end{itemize}
\end{corollary}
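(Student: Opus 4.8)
The plan is to derive \cref{thm:tm-occ-234} directly from the general characterization in \cref{thm:tm-occ} by specializing to the concrete values of $j$ that produce the four strings in question, and then reading off the starting-position sets. The statement asserts, for each of the four strings, a small explicit list of occurrence positions. Since \cref{thm:tm-occ} (together with its companion \cref{thm:tm-b-occ}) gives recurrences for $A_{i,j}$ and $B_{i,j}$, the main task is to unroll these recurrences for the relevant $j$ and confirm that they collapse to exactly the claimed sets.

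First I would handle $\tm_{i-2}$ and $\flip{\tm_{i-2}}$, which correspond to $j=2$. For $A_{i,2}$, I would invoke the base case $j=2$ already established inside the proof of \cref{thm:tm-occ}: there $A_{i,1}=\{1\}$, $B_{i,1}'=\{\ltm_{i-1}+\ltm_{i-2}+1\}$, and $A_{i,0}'=\{\ltm_{i-2}+\ltm_{i-3}+1\}$ are shown to be mutually disjoint, giving $A_{i,2}=\{1,\ \ltm_{i-2}+\ltm_{i-3}+1,\ \ltm_{i-1}+\ltm_{i-2}+1\}$, which is precisely the claimed set for $\tm_{i-2}$. For $\flip{\tm_{i-2}}$ I would compute $B_{i,2}$ from \cref{thm:tm-b-occ}: with $B_{i,1}=\{\ltm_{i-1}+1\}$ and $A_{i,1}''=A_{i,1}\oplus\ltm_{i-2}=\{\ltm_{i-2}+1\}$ and $B_{i,0}''=\emptyset$ (since $B_{i,0}=\emptyset$), and $I_{i,-1}'=\emptyset$, the recurrence yields $B_{i,2}=\{\ltm_{i-2}+1,\ \ltm_{i-1}+1\}$, matching the claim.

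Next I would treat $\tm_{i-4}\ \flip{\tm_{i-3}}$ and $\flip{\tm_{i-4}}\ \tm_{i-3}$. The key observation is that these length-$(\ltm_{i-4}+\ltm_{i-3})$ strings are exactly the ``bridging'' pieces that appear when one factorizes $\tm_i$ at resolution $i-3$; their occurrence sets can be obtained by intersecting (or aligning) the occurrence data for $\tm_{i-3}$, $\flip{\tm_{i-3}}$, $\tm_{i-4}$, and $\flip{\tm_{i-4}}$ from $A_{i,3}$, $B_{i,3}$, $A_{i,4}$, $B_{i,4}$. Concretely, an occurrence of $\tm_{i-4}\ \flip{\tm_{i-3}}$ is an occurrence of $\tm_{i-4}$ immediately followed by $\flip{\tm_{i-3}}$; so I would compute $A_{i,4}$ and $B_{i,3}$ via the recurrences and keep only the positions $p\in A_{i,4}$ with $p+\ltm_{i-4}\in B_{i,3}$. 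A cleaner route is to read the two positions directly off the factorizations already recorded in the excerpt: equation~\eqref{eq:t-i-k-2} and the factorizations displayed in the proof of \cref{thm:tm-occ} exhibit $\tm_{i-4}\ \flip{\tm_{i-3}}$ and $\flip{\tm_{i-4}}\ \tm_{i-3}$ as contiguous blocks, pinning down their starting positions as $\ltm_{i-3}+\ltm_{i-4}+1$ and $\ltm_{i-1}+\ltm_{i-3}+1$ for the former, and $\ltm_{i-3}+1$ and $\ltm_{i-1}+\ltm_{i-3}+\ltm_{i-4}+1$ for the latter.

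The main obstacle I anticipate is not the algebra of unrolling the recurrences, but rather establishing that there are \emph{no other} occurrences of the two length-$(\ltm_{i-4}+\ltm_{i-3})$ strings---that is, ruling out spurious alignments where $\tm_{i-4}$ is followed by $\flip{\tm_{i-3}}$ at positions not coming from the two intended factorizations. For this I would lean on the overlap-free property (\cref{thm:tm-overlap-free}): any additional occurrence would force two occurrences of $\tm_{i-4}$ (or of a common power) to overlap, or would create a forbidden square/cube pattern, contradicting \cref{thm:tm-overlap-free} or \cref{thm:tm-cube-free}. Once the candidate sets are shown to be both necessary (every claimed position is an occurrence) and sufficient (no other position works), the corollary follows. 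Since the statement is labelled a corollary and the hard combinatorial work already resides in \cref{thm:tm-occ}, I expect the argument to be short, amounting to careful bookkeeping of the four specializations plus one appeal to overlap-freeness to discard extraneous positions.
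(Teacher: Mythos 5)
Your proposal matches the paper's (implicit) argument: the paper states the corollary follows immediately from \cref{thm:tm-occ} and \cref{thm:tm-b-occ}, and unrolling the recurrences for $j=2,3,4$ and intersecting $\{p \in A_{i,4} : p+\tau_{i-4} \in B_{i,3}\}$ (and its analogue) is exactly the intended bookkeeping. Your worry about ruling out ``spurious alignments'' via overlap-freeness is unnecessary: since every occurrence of a concatenation $XY$ at position $p$ forces an occurrence of $X$ at $p$ and of $Y$ at $p+|X|$, the intersection of the already-complete occurrence sets is automatically exhaustive.
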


\begin{corollary}\label{thm:tm-i-3-occ}
$\tm_{i-3}$ only occurs at positions
$1, 
\ltm_{i-3} + \ltm_{i-4} + 1,  
\ltm_{i-2} + \ltm_{i-3} + 1,  
\ltm_{i-1} + \ltm_{i-3} + 1, \text{~and~}
\ltm_{i-1} + \ltm_{i-2} + 1
$
in $\tm_{i}$.
\end{corollary}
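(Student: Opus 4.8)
The plan is to derive \cref{thm:tm-i-3-occ} directly from \cref{thm:tm-occ} applied with $j = 3$. By definition, $A_{i,3}$ is precisely the set of starting positions of occurrences of $\tm_{i-3}$ in $\tm_i$, so it suffices to compute $A_{i,3}$ explicitly and verify that it equals the claimed five-element set. First I would instantiate the recurrence from \cref{thm:tm-occ}: $A_{i,3} = A_{i,2} \cup B_{i,2}' \cup A_{i,1}'$, where $B_{i,2}' := B_{i,2} \oplus \ltm_{i-3}$ and $A_{i,1}' := A_{i,1} \oplus (\ltm_{i-3} + \ltm_{i-4})$.

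The main computational work is assembling the three constituent sets. From the base cases in \cref{thm:tm-occ} we have $A_{i,1} = \{1\}$, and from \cref{thm:tm-occ-234} (or directly from the $j=2$ analysis) $A_{i,2} = \{1,\ \ltm_{i-2} + \ltm_{i-3} + 1,\ \ltm_{i-1} + \ltm_{i-2} + 1\}$. For $B_{i,2}$ I would invoke \cref{thm:tm-b-occ}, which yields $B_{i,2} = B_{i,1} \cup A_{i,1}'' \cup B_{i,0}''$; since $B_{i,0} = \emptyset$, $B_{i,1} = \{\ltm_{i-1} + 1\}$, and $A_{i,1}'' = A_{i,1} \oplus \ltm_{i-2} = \{\ltm_{i-2} + 1\}$, we get $B_{i,2} = \{\ltm_{i-2} + 1,\ \ltm_{i-1} + 1\}$. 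Shifting by $\ltm_{i-3}$ gives $B_{i,2}' = \{\ltm_{i-2} + \ltm_{i-3} + 1,\ \ltm_{i-1} + \ltm_{i-3} + 1\}$, and $A_{i,1}' = \{\ltm_{i-3} + \ltm_{i-4} + 1\}$.

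Taking the union of $A_{i,2}$, $B_{i,2}'$, and $A_{i,1}'$ then collects the positions $1$, $\ltm_{i-3} + \ltm_{i-4} + 1$, $\ltm_{i-2} + \ltm_{i-3} + 1$, $\ltm_{i-1} + \ltm_{i-3} + 1$, and $\ltm_{i-1} + \ltm_{i-2} + 1$, which is exactly the claimed set. The one subtlety I would watch is the overlap structure: \cref{thm:tm-occ} records that $A_{i,j-1} \cap B_{i,j-1}' = I_{i,j-3}$, and for $j = 3$ this intersection $I_{i,0}$ could be nonempty. However, since $A_{i,0} = \{1\}$ and $B_{i,0} = \emptyset$, the set $I_{i,0} = A_{i,0}' \cup B_{i,0}'$ reduces to a single shifted element that coincides with $\ltm_{i-2} + \ltm_{i-3} + 1$, so the two constituent sets share exactly this position and the final union still has five distinct elements. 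The only real obstacle is bookkeeping these shifts without arithmetic slips, so I would double-check each element against the factorizations in \cref{thm:tm-i-fac} and \cref{fig:tm-occ} as a sanity check.
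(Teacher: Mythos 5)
Your proposal is correct and matches the paper's intent exactly: the paper states this corollary as following immediately from \cref{thm:tm-occ} (together with \cref{thm:tm-b-occ} for the needed $B_{i,2}$), and your computation simply unrolls that recurrence at $j=3$, with all the shifts and the $I_{i,0}$ overlap handled correctly. Nothing further is needed.
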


We now identify the nine net occurrences in $\tm_i$.

\begin{lemma}\label{thm:tm-net-occ}
Each occurrence of each string in $\mathcal{P}_i$ is a net occurrence in $\tm_i$.
\end{lemma}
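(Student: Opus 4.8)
The plan is to prove that each of the four strings in $\mathcal{P}_i$ has every one of its occurrences qualify as a net occurrence by directly verifying, for each occurrence listed in \cref{thm:tm-occ-234}, that both the left extension and the right extension are unique strings in $\tm_i$. Recall that by \cref{def:tm-pos-nf-str} we have four strings and, by \cref{thm:tm-occ-234}, each occurs exactly twice (except $\tm_{i-2}$, which occurs three times), giving $3 + 2 + 2 + 2 = 9$ occurrences in total, matching the claimed count. So the task reduces to checking the extensions at each of these nine positions.

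The central tool will be the factorizations in \cref{thm:tm-i-fac}, which exhibit every one of these nine occurrences aligned with factor boundaries, together with the uniqueness machinery we can extract from \cref{thm:tm-occ-234} itself: an occurrence $(s,e)$ of a string $S$ is a net occurrence precisely when neither the left-extended string $T[s-1 \ldots e]$ nor the right-extended string $T[s \ldots e+1]$ occurs elsewhere in $\tm_i$. First I would handle $\tm_{i-2}$: using the factorization $\tm_i = \tm_{i-2} \ \flip{\tm_{i-2}} \ \flip{\tm_{i-2}} \ \tm_{i-2}$ and the second factorization of \cref{thm:tm-i-fac}, I would read off the left and right extension characters at each of the three positions $1$, $\ltm_{i-2}+\ltm_{i-3}+1$, and $\ltm_{i-1}+\ltm_{i-2}+1$, then argue that the resulting extended strings are longer strings whose occurrence positions are forced to be unique. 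The cleanest way to see uniqueness is to observe that each extended string contains (or equals) one of the strings already characterized in \cref{thm:tm-occ-234} or $\tm_{i-3}$ in \cref{thm:tm-i-3-occ} in a position that pins it down, or else to invoke the overlap-free property (\cref{thm:tm-overlap-free}) to rule out a second occurrence that would overlap the first. I would repeat this for $\flip{\tm_{i-2}}$ and then for the two length-$(\ltm_{i-3}+\ltm_{i-4})$ strings $\tm_{i-4} \ \flip{\tm_{i-3}}$ and $\flip{\tm_{i-4}} \ \tm_{i-3}$, where the factorizations \eqref{eq:tm-fac-i-4-flip-i-3} and \eqref{eq:tm-fac-flip-i-4-i-3} display both of their occurrences at factor boundaries and so make the surrounding characters explicit.

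The main obstacle I expect is the uniqueness verification for the left- and right-extended strings rather than the routine extraction of extension characters. For each of the nine occurrences, after determining the one-character extension on each side, I must confirm that the extended string occurs only once in $\tm_i$; a naive appeal to \cref{thm:tm-occ-234} does not immediately suffice because the extended string is typically not itself one of the four strings catalogued there. My strategy is to show each extended string either strictly contains an occurrence of a string whose positions are already fully enumerated (so that the extended string's position is determined by that contained occurrence, and checking the enumerated positions shows only one of them admits the required extension), or to reduce to overlap-freeness: if the extended string appeared twice, the two appearances, being longer than the period structure permits, would force an overlapping repeated factor, contradicting \cref{thm:tm-overlap-free}. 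The potential subtlety is bookkeeping the boundary cases where an occurrence sits at position $1$ or abuts position $\ltm_i$, in which case one of the two extensions is vacuously unique by the convention in the net-occurrence definition; I would flag these explicitly so the verification is complete.

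Finally, I would organize the argument as a short case analysis, one paragraph per string in $\mathcal{P}_i$, and conclude by tallying the nine net occurrences. Because \cref{fig:tm-non-net-occ-cases} already numbers these nine occurrences, I would align my case labels with that numbering so the reader can cross-reference the figure. This establishes the ``at least nine'' direction of \cref{thm:tm-only-net-occ}; the complementary ``at most nine'' direction is deferred to the ONOC argument built on \cref{thm:super-occ-bnso} later in the section.
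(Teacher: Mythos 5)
Your proposal is correct and follows essentially the same route as the paper: enumerate the nine occurrences via \cref{thm:tm-occ-234}, read off the extension characters from the factorizations, and conclude uniqueness of each extended string because it must contain an occurrence of the base string, whose positions are fully enumerated and at which the extension characters pairwise differ (with the text-boundary convention handling the first and last occurrences of $\tm_{i-2}$). The overlap-freeness fallback you mention is not needed; the character-comparison argument already suffices.
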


\begin{proof}
We proceed by examining the left and right extension characters
of each occurrence of each string in $\mathcal{P}_i$.

Since $\tm_{i-2}$ is a prefix and a suffix of $\tm_i$,
by the definition of occurrences,
the occurrence of $\tm_{i-2}$ at  positions 1 has a unique left extension character,
and the occurrence of $\tm_{i-2}$ at  positions $\ltm_{i-1} + \ltm_{i-2} + 1$ has a unique right extension character.
Next, note that the right extension character of 
the occurrence of $\tm_{i-2}$ at position 1 
differs from that of 
the occurrence at position $\ltm_{i-2} + \ltm_{i-3} + 1$
because $\flip{\tm_{i-3}}[1] \neq \tm_{i-3}[1]$.
Similarly, 
the left extension character of 
the occurrence at position $\ltm_{i-2} + \ltm_{i-3} +1$
differs from that of
the occurrence at position $\ltm_{i-1} + \ltm_{i-2} +1$,
because $\flip{\tm_{i-3}}[\ltm_{i-3}] \neq \tm_{i-3}[\ltm_{i-3}]$.
Hence, all three occurrences of $\tm_{i-2}$ are net occurrences.

For $\flip{\tm_{i-2}}$, a similar argument holds.
the right extension characters satisfy
$\flip{\tm_{i-2}}[1] \neq \tm_{i-2}[1]$
and the left extension characters satisfy
$\tm_{i-2}[\ltm_{i-2}] \neq \flip{\tm_{i-2}}[\ltm_{i-2}]$.
Thus, both occurrences of $\flip{\tm_{i-2}}$
are net occurrences.
For $\tm_{i-4} \ \flip{\tm_{i-3}}$, 
similarly,
the right extension characters satisfy
$\tm_{i-2}[1] \neq \flip{\tm_{i-4}}[1]$
and the left extension characters satisfy
$\flip{\tm_{i-4}}[\ltm_{i-4}] = \flip{\tm_{i-2}}[\ltm_{i-2}] 
\neq \tm_{i-2}[\ltm_{i-2}]$.
Thus, both occurrences of $\tm_{i-4} \ \flip{\tm_{i-3}}$
are net occurrences.
Finally, for $\flip{\tm_{i-4}} \ \tm_{i-3}$,
once again,
the right extension characters satisfy
$\tm_{i-4}[1] \neq \flip{\tm_{i-3}}[1]$
and the left extension characters satisfy
$\tm_{i-3}[\ltm_{i-3}] \neq \tm_{i-4}[\ltm_{i-4}]$.
Thus, both occurrences of $\flip{\tm_{i-4}} \ \tm_{i-3}$
are net occurrences.
\end{proof}

To show that all other occurrences are not net occurrences,
we follow \cref{thm:super-occ-bnso}.
First note that the nine net occurrences 
we identified in \cref{thm:tm-net-occ} form an ONOC of $\tm_i$.
The eight BNSOs of this ONOC correspond to 
the occurrences of 
$\tm_{i-3}$ and $\flip{\tm_{i-3}}$ shown in \cref{fig:tm-non-net-occ-cases}.
We next show that no super-occurrences of these occurrences 
are net occurrences to conclude that this ONOC already contains
all the net occurrences in $\tm_i$.

\begin{lemma}\label{thm:ext-tm-i-3}
Consider an occurrence $(s, e)$  in $\tm_i$
and let $S := \tm_i[s \ldots e]$.
If $(s, e)$ is a proper super-occurrence  
of $\tm_{i-3}$ or $\flip{\tm_{i-3}}$,
 and $S \notin \mathcal{P}_i$,
then $(s, e)$ is not a net occurrence.
\end{lemma}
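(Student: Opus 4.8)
The plan is to mirror the structure of the proof of \cref{thm:f-i-3-sub-occ}, adapting it to the Thue--Morse setting. By \cref{thm:tm-i-3-occ}, $\tm_{i-3}$ has exactly five occurrences in $\tm_i$, and by the overlap-free property (\cref{thm:tm-overlap-free}) together with the block factorization $\tm_i = \tm_{i-3}\,\flip{\tm_{i-3}}\,\flip{\tm_{i-3}}\,\tm_{i-3}\,\flip{\tm_{i-3}}\,\tm_{i-3}\,\tm_{i-3}\,\flip{\tm_{i-3}}$ (from \cref{thm:tm-basic-fac} with $j=4$) and the finer factorizations of \cref{thm:tm-i-fac}, I can read off the left and right contexts of each of these five occurrences. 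The corresponding situation for $\flip{\tm_{i-3}}$ is governed by the same block pattern under complementation, so I would first reduce to the case in which $(s,e)$ is a proper super-occurrence of $\tm_{i-3}$ and treat the $\flip{\tm_{i-3}}$ case by an entirely analogous argument.

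Writing $S = X\,\tm_{i-3}\,Y$ with $XY \neq \epsilon$ (since the super-occurrence is proper), I would perform a case analysis on $|Y|$, exactly as the Fibonacci proof does. The first point to establish is that, in increasing order of $|Y|$, the five occurrences of $\tm_{i-3}$ split into two groups according to their first right-extension character ($\flip{\tm_{i-3}}[1] = \texttt{b}$ for three of them, $\tm_{i-3}[1] = \texttt{a}$ for the other two), and that within each group the continuations remain forced until a structural block boundary, i.e.\ a multiple of $\ltm_{i-3}$, is crossed. For each threshold of $|Y|$, I would then argue one of three outcomes for $(s,e)$: (i) every occurrence of $S$ shares the same right-extension character, so the right extension $\tm_i[s\ldots e+1]$ is repeated and $(s,e)$ is not net; (ii) $S$ is unique in $\tm_i$ (verified through \cref{thm:tm-occ-234} and the overlap-free and cube-free properties, \cref{thm:tm-overlap-free} and \cref{thm:tm-cube-free}), so $S$ is not repeated and $(s,e)$ is not net; or (iii) $S$ coincides with one of $\tm_{i-2}$, $\flip{\tm_{i-2}}$, $\tm_{i-4}\,\flip{\tm_{i-3}}$, $\flip{\tm_{i-4}}\,\tm_{i-3}$, i.e.\ $S \in \mathcal{P}_i$, which is excluded by hypothesis. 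The values of $|Y|$ at which the case-(iii) strings appear are precisely where $\tm_{i-3}\,\flip{\tm_{i-3}} = \tm_{i-2}$ and its relatives emerge, so these boundary lengths must be isolated carefully and matched against \cref{def:tm-pos-nf-str}.

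When $|Y|$ is too small to separate the occurrences---in particular when $Y = \epsilon$ and only $X$ is non-empty---the right-extension argument fails, and I would instead run the symmetric argument on the left, examining $|X|$ and the left-extension character using the left contexts recorded in the first step. Here the occurrence characterizations of $\flip{\tm_{i-2}}$ and of the two length-$(\ltm_{i-3}+\ltm_{i-4})$ strings in \cref{thm:tm-occ-234} are what certify uniqueness of the relevant left extensions, again yielding outcome (i), (ii), or (iii).

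I expect the main obstacle to be the heterogeneity of the five contexts. Unlike the Fibonacci case, where every occurrence of $F_{i-3}$ is uniformly followed by $Q_{i-1}$ (\cref{thm:q-i-1}), the five occurrences of $\tm_{i-3}$ have genuinely different surroundings, so there is no single uniform continuation to exploit; instead the occurrences peel off in the two groups at successive block boundaries. The real work lies in the bookkeeping of which occurrences remain compatible with a given $Y$ (and $X$), and in verifying at each divergence point that the resulting $S$ is either unique or a member of $\mathcal{P}_i$. The repeated appeals to overlap-freeness and cube-freeness to exclude spurious extra occurrences---thereby securing the uniqueness claims in outcome (ii)---are the most delicate steps, and matching the sporadic $\mathcal{P}_i$ strings against \cref{def:tm-pos-nf-str} at exactly the right lengths is where errors would most easily creep in.
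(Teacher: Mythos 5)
Your skeleton matches the paper's: write $S = X\,\tm_{i-3}\,Y$ with $X\,Y \neq \epsilon$, split the five occurrences of $\tm_{i-3}$ (from \cref{thm:tm-i-3-occ}) into two groups according to whether they are followed by $\flip{\tm_{i-3}}$ or by $\tm_{i-4}$, and case on $|Y|$ against the block boundary of each group ($\ltm_{i-3}$ and $\ltm_{i-4}$, respectively). Your outcomes (i) and (iii) --- forced right-extension character below the threshold, and $S \in \mathcal{P}_i$ exactly at it --- are precisely the paper's sub-threshold and at-threshold cases.

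The gap is in your outcome (ii). For $|Y|$ strictly above the threshold you plan to prove that $S$ is unique by direct combinatorial arguments from \cref{thm:tm-occ-234} together with overlap-freeness and cube-freeness, but you never carry this out, and you yourself flag it as the step where ``errors would most easily creep in.'' The paper needs no uniqueness argument here at all: once $|Y| > \ltm_{i-3}$ (resp.\ $|Y| > \ltm_{i-4}$), the occurrence $(s,e)$ properly contains an occurrence of $\tm_{i-3}\,\flip{\tm_{i-3}} = \tm_{i-2}$ (resp.\ of $\tm_{i-3}\,\tm_{i-4} = \tm_{i-4}\,\flip{\tm_{i-3}}$), which \cref{thm:tm-net-occ} has already certified to be a net occurrence, so \cref{thm:non-net-super} disposes of $(s,e)$ in one line. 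This is the one idea your plan is missing, and without it the work you defer --- establishing uniqueness of $S$ for every $X$ and every $|Y|$ beyond the threshold --- is substantial and is exactly where the proposal stops. Relatedly, the separate left-sided analysis you introduce for $Y = \epsilon$ is unnecessary under the paper's organization: $Y = \epsilon$ falls under the sub-threshold case and is handled by the forced right-extension character. As written, the proposal is a plausible plan rather than a proof.
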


\begin{proof}
We first consider when $(s, e)$ contains an occurrence of $\tm_{i-3}$.
Consider strings $X$ and $Y$ such that $S = X \ \tm_{i-3} \ Y$ 
and  $X \ Y \neq \epsilon$.
Let position $ u := s + |X| $ be the starting position of this occurrence of $\tm_{i-3}$.
Let $C := \{ 
1, \ltm_{i-2} + \ltm_{i-3} + 1, \ltm_{i-1} + \ltm_{i-2} + 1
\}$
and 
$D := \{
\ltm_{i-3} + \ltm_{i-4} + 1, \ltm_{i-1} + \ltm_{i-3} + 1
\}$.
By \cref{thm:tm-i-3-occ}, 
we have $u \in C \cup D$.
We next examine the following cases depending on 
which set $u$ belongs to and how large $|Y|$ is.

We first consider when $u \in C$.
\begin{enumerate}[(a)]
\item\label{tm-case-1}
$|Y| < \ltm_{i-3}$.
By \cref{thm:tm-occ-234},
note that $Y$ is always a prefix of $\flip{\tm_{i-3}}$. 
This means the right extension character of $S$ is always $\flip{\tm_{i-3}}[|Y|+1]$.
Thus, $(s, e)$ is not  a net occurrence.
\item\label{tm-case-2}
$|Y| = \ltm_{i-3}$.
By \cref{thm:tm-occ-234}, $ S \in \mathcal{P}_i$ in this case.
\item\label{tm-case-3}
$|Y| > \ltm_{i-3}$.
By \cref{thm:tm-occ-234}, $(s, e)$ contains a net occurrence of $\tm_{i-3} \ \flip{\tm_{i-3}} = \tm_{i-2}$ as a proper sub-occurrence.
By \cref{thm:non-net-super} and \cref{thm:tm-net-occ}, $(s, e)$ is not  a net occurrence.
\end{enumerate}

We next consider when $u \in D$.
\begin{enumerate}[(a)]
\item\label{tm-case-4}
$|Y| < \ltm_{i-4}$.
Recall that  $ \tm_{i-4} \ \flip{\tm_{i-3}}
= \tm_{i-4} \ \flip{\tm_{i-4}} \ \tm_{i-4}
= \tm_{i-3} \ \tm_{i-4}$.
By \cref{thm:tm-occ-234},
note that $Y$ is always a prefix of $\tm_{i-4}$ in this case. 
This means the right extension character of $S$ is always $\tm_{i-4}[|Y|+1]$.
Thus, $(s, e)$ is not  a net occurrence.
\item\label{tm-case-5}
$|Y| = \ltm_{i-4}$. 
Using \cref{thm:tm-occ-234}, $ S \in \mathcal{P}_i$ in this case.
\item\label{tm-case-6}
$|Y| > \ltm_{i-4}$. By \cref{thm:tm-occ-234}, in this case $(s, e)$ contains a net occurrence of $\tm_{i-3} \ \tm_{i-4} = \tm_{i-4} \ \flip{\tm_{i-4}} \ \tm_{i-4} = \tm_{i-4} \ \flip{\tm_{i-3}}$ as a proper sub-occurrence.
Thus, by \cref{thm:non-net-super} and \cref{thm:tm-net-occ}, $(s, e)$ is not  a net occurrence.
\end{enumerate}
We can prove the case when $(s, e)$ contains an occurrence of $\flip{\tm_{i-3}}$ similarly.
\end{proof}

Finally, the main result follows from \cref{thm:super-occ-bnso}, \cref{thm:tm-net-occ} and \cref{thm:ext-tm-i-3}.

\begin{theorem}\label{thm:tm-only-net-occ}
The net occurrences in \cref{thm:tm-net-occ}
are the only net occurrences in each $\tm_i$.
\end{theorem}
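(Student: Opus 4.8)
The plan is to invoke the framework developed in \cref{sec:onoc}, specifically \cref{thm:super-occ-bnso}, which reduces the task to examining only the super-occurrences of the ``expanded'' BNSOs of a suitable ONOC. First I would verify that the nine net occurrences established in \cref{thm:tm-net-occ} indeed constitute an ONOC of $\tm_i$: I must check that the leftmost net occurrence starts at position $1$, the rightmost ends at position $\ltm_i$, and that consecutive net occurrences overlap in the sense required by \cref{def:onoc} (i.e. $i_{k+1} \leq j_k$). Using the explicit starting positions from \cref{thm:tm-occ-234} together with the factorizations in \cref{thm:tm-i-fac}, I would order the nine occurrences left-to-right and confirm the covering-and-overlapping conditions, as visualized in \cref{fig:tm-non-net-occ-cases}.

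Next I would identify the eight BNSOs arising from the eight overlapping regions between consecutive net occurrences. The claim (again supported by \cref{fig:tm-non-net-occ-cases}) is that each such BNSO corresponds to an occurrence of $\tm_{i-3}$ or $\flip{\tm_{i-3}}$. I would establish this by intersecting the ranges of consecutive net occurrences in $\mathcal{P}_i$ and checking that the overlapping substring is precisely $\tm_{i-3}$ or its complement; this is a direct computation using the known positions and lengths $\ltm_{i-j} = 2^{i-j-1}$. Having done so, \cref{thm:super-occ-bnso} tells me that any net occurrence outside the ONOC must be a super-occurrence of some $(s-1, e+1)$ where $(s,e)$ is a BNSO, hence in particular a proper super-occurrence of an occurrence of $\tm_{i-3}$ or $\flip{\tm_{i-3}}$.

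At this point the heavy lifting is already done by \cref{thm:ext-tm-i-3}, which states exactly that any proper super-occurrence $(s,e)$ of $\tm_{i-3}$ or $\flip{\tm_{i-3}}$ with $S \notin \mathcal{P}_i$ is not a net occurrence. Combining this with the fact that every BNSO is such an occurrence, I conclude that no net occurrence can exist outside the ONOC, except possibly one whose corresponding string lies in $\mathcal{P}_i$; but those are already the nine net occurrences counted in \cref{thm:tm-net-occ}. Therefore the nine occurrences are the only net occurrences in $\tm_i$.

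The main obstacle I anticipate is not in the logical skeleton, which is a clean three-step application of the ONOC machinery, but in the bookkeeping of the overlap structure: confirming that the eight BNSOs are precisely occurrences of $\tm_{i-3}$ and $\flip{\tm_{i-3}}$ and that the expansion to $(s-1, e+1)$ covers exactly the scenarios handled by \cref{thm:ext-tm-i-3}. I would need to be careful that every consecutive pair of net occurrences overlaps in a region that is at least as long as $\tm_{i-3}$ (so that the BNSO genuinely contains such an occurrence), and that no overlapping region is so short that \cref{thm:ext-tm-i-3} does not apply; the factorizations in \cref{thm:tm-i-fac} and the positions in \cref{thm:tm-occ-234} should make each of these checks routine but require attention to the small-$i$ edge cases (which is why the statement is phrased for $i \geq 5$).
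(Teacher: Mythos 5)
Your proposal is correct and follows essentially the same route as the paper: it establishes that the nine occurrences form an ONOC, identifies the eight BNSOs as occurrences of $\tm_{i-3}$ and $\flip{\tm_{i-3}}$, and then combines \cref{thm:super-occ-bnso} with \cref{thm:ext-tm-i-3} (handling the $S \in \mathcal{P}_i$ exception via \cref{thm:tm-occ-234} and \cref{thm:tm-net-occ}) exactly as the paper does.
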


\section{Conclusion and Future Work}

In this work, we investigate net occurrences in Fibonacci and Thue-Morse words, 
making two main contributions.
First, we confirm the conjecture that each Fibonacci word contains
exactly three net occurrences.
Second, we establish that each Thue-Morse word contains exactly nine net occurrences. 
To achieve these results, we first introduce the notion of
an overlapping net occurrence cover 
and show how it can be used to 
prove that certain net occurrences in a text are the only ones. 
We then develop recurrence relations 
that precisely characterize the occurrences of 
Fibonacci and Thue-Morse words of smaller order,
which could be of independent interest.
As an application, we illustrate how these results
facilitate the counting of small-order occurrences.

An avenue of future work is to extend our findings to 
study the net occurrences in $k$-bonacci words~\cite{conf/cwords/2023/gheeraert,
journal/tcs/2022/jahannia,
journal/tcs/2021/ghareghani,
journal/combinatorics/2020/ghareghani} and Thue-Morse-like words~\cite{journal/tcs/2022/aedo,journal/dam/2019/chen}.
Furthermore, since both Fibonacci and Thue-Morse words can be defined via morphisms,
one could also explore net occurrences in other morphic words~\cite{conf/dlt/2022/frosini,conf/dlt/2010/halava,conf/iwoca/2019/brlek}.
Finally, the net occurrences have been characterized in terms of 
minimal unique substrings~\cite{conf/cpm/2025/mieno};
this viewpoint may offer alternative and potentially simpler proofs than those presented
in Sections \ref{sec:fib-net-occ}--\ref{sec:tm-net-occ}.

\clearpage

\bibliographystyle{plainurl}
\bibliography{references}

\appendix

\section{Proofs Omitted from 
\texorpdfstring{\cref{sec:onoc}}{}
}\label{appx:onoc}

\nonnetsuper*

\begin{proof}
Let $(s', e')$ be the net occurrence, 
then $T[s'-1 \ldots e']$ and $T[s' \ldots e'+1]$ are both unique. 
Since $T[s \ldots e]$ contains at least one of these two strings as a substring,
$T[s \ldots e]$ is also unique.
Thus, $(s,e)$ is not a net occurrence.  
\end{proof}

\nonnetsub*

\begin{proof}
Let $(s', e')$ be the net occurrence, 
then $T[s' \ldots e']$ is repeated.
Since $(s,e)$ is a proper sub-occurrence of $(s', e')$,
$T[s-1 \ldots e]$ or $T[s \ldots e+1]$ is also repeated.
Thus, $(s,e)$ is not a net occurrence. 
\end{proof}

\superoccbnso*

\begin{proof}
Let $(s,e)$ be a net occurrence in $T$ that is outside of $\mathcal{C}$. 
Assume, by contradiction, that 
$(s,e)$ is not a super-occurrence of any occurrence $(i-1,j+1)$, 
where $(i,j)$ is an occurrence in the set of BNSOs, 
$ \{ (i_2, j_1), (i_3, j_2), \ldots (i_{c}, j_{c-1}) \}$.
We consider the following two cases depending on the position of  $s$.

First, when
$i_{k+1} \leq s < i_{k+2}$ for some $0 \leq k \leq c-2$.
Given our assumption,
$(s,e)$ is not a super-occurrence of  $(i_{k+2}-1, j_{k+1} + 1)$,
where $(i_{k+2}, j_{k+1})$ is a BNSO.
It follows that $e$ satisfies  $e \leq j_{k+1}$,
implying that $(s,e)$ must be a sub-occurrence of $(i_{k+1}, j_{k+1})$, 
which is a net occurrence in $\mathcal{C}$. 
Now,
if $(s,e)$ is a proper sub-occurrence of $(i_{k+1}, j_{k+1})$, this
this contradicts \cref{thm:non-net-sub};
if $(s,e)$ is $(i_{k+1}, j_{k+1})$,  it contradicts the assumption that 
$(s,e)$ is a net occurrence in $T$ outside of $\mathcal{C}$.
Second, when 
$s \geq i_c$.
In this case, $(s,e)$ is a sub-occurrence of $(i_c, j_c)$,
the last net occurrence in $\mathcal{C}$.

In both cases, $(s,e)$ must be a sub-occurrence of some net occurrence in $\mathcal{C}$,
leading to a contradiction of either the assumption or \cref{thm:non-net-sub}.
Therefore, 
our initial assumption was false and we conclude that
$(s,e)$ is indeed a super-occurrence of $(i-1,j+1)$, 
where $(i,j)$ is a BNSO of $\mathcal{C}$.
\end{proof}

\section{Proofs Omitted from 
\texorpdfstring{\cref{sec:fib-net-occ}}{}
}\label{appx:fib-net-occ-proofs}

\fitwoqiocc*

\begin{proof}
By \cref{thm:f-i-2}, there are only three positions where 
$F_{i-2} \ Q_i$ could occur.
First observe that $F_{i-2} \ Q_i$ cannot occur at position $f_{i-1}+1$.
Next, by \cref{thm:f-i-fac} and \cref{thm:q-i-delta},
the occurrences of $F_{i-2}$ at positions 1 and $f_{i-2}+1$ are both 
followed by $Q_i$,
thus $F_{i-2} \ Q_i$ only occurs at these two positions.
\end{proof}

\qlen*

\begin{proof}
Note that
$
|Q_i| 
= \sum_{j=2}^{i-5} |F_j| 
= \left( \sum_{j=1}^{i-5} f_j \right) - f_1
= f_{i-3} - 1 - f_1
= f_{i-3} - 2
$
where the third equality comes from
the fact that $\sum_{j=1}^k f_j = f_{k+2} - 1$.
\end{proof}

\fitwoqi*

\begin{proof}
By \cref{thm:f-i-2-q-i-occ} and \cref{thm:net-occ-fib},
$F_{i-2} \ Q_i$ only occurs twice in $F_i$,
and both are net occurrences, 
which means the extensions are unique.
Thus, any string containing $F_{i-2} \ Q_i$
as a substring is also unique.
\end{proof}

\qione*

\begin{proof}
Observe that the occurrence of $F_{i-3}$ at position $f_{i-3}+1$
is followed by $F_{i-6}\ F_{i-5} = Q_{i-1} \ \Delta(1 - (i \bmod 2))$ while 
the other three occurrences of $F_{i-3}$
are all followed by $F_{i-4} = F_{i-5}\ F_{i-6} = Q_{i-1} \ \Delta(i \bmod 2)$.
\end{proof}

\fitsf*

\begin{proof}
From the proof of \cref{thm:q-i-1},
$F_{i-3}$ is only followed by $F_{i-6}\ F_{i-5}$ once
and by $F_{i-4}$ three times,
thus $F_{i-3} \ F_{i-6} \ F_{i-5}$ is unique.
Next, by \cref{thm:q-len},
$|F_{i-3} \ Q_{i-1}| = f_{i-3} + f_{i-4} - 2 = f_{i-2} - 2$.
Also from the proof of \cref{thm:q-i-1},
$F_{i-3} \ Q_{i-1}$ is only followed by 
$\Delta(1 - (i \bmod 2))$ once and by 
$\Delta(i \bmod 2)$ three times,
thus, the length-$(f_{i-2}-1)$ prefix of 
$F_{i-3} \ Q_{i-1} \ \Delta(1 - (i \bmod 2)) = F_{i-3} \ F_{i-6} \ F_{i-5}$ is also unique.
\end{proof}

\fithreerightext*

\begin{proof} 
Let $U$ be the length-$(f_{i-3}-1)$ prefix of $F_{i-3}$.
By \cref{eq:fac-1},
$F_{i-3} = Q_i \ \Delta(1 - (i \bmod 2)$.
Note that $U[|U|-1]$ is always \texttt{a} 
because $Q_i$ ends with $F_2= \texttt{a}$.
When $U[|U|] = \texttt{a}$,
 the right extension character of $U$ is always \texttt{b}.
This is because, if it were not, \texttt{aaa} would occur in $F_i$,
contradicting \cref{thm:no-aaa}.
On the other hand, when
$U[|U|] = \texttt{b}$,
the right extension character of $U$ is always \texttt{a}.
This is because, similarly, an occurrence of \texttt{bb} would contradict \cref{thm:no-aaa}.
Finally, observe that, whether $U[|U|]$ is \texttt{a} or \texttt{b},
$U[|U|]$ concatenated with the right extension character of $U$ 
is exactly $\Delta(1 - (i \bmod 2)$.
Therefore, the desired result follows.
\end{proof}

\section{Proof Omitted from 
\texorpdfstring{\cref{sec:tm-net-occ}}{}
}\label{appx:tm-net-occ-proofs}

\tmifac*

\begin{proof}
The first two follow from 
\cref{eq:tm-fac-i-2}  and \cref{eq:tm-fac-flip-i-2}.
We next proceed by repeatedly applying 
the definition of Thue-Morse words.
Substituting 
$
  \tm_{i-2}
= \tm_{i-3} \ \flip{\tm_{i-3}} 
= \tm_{i-3} \ \flip{\tm_{i-4}} \ \tm_{i-4} 
$
and 
$
  \tm_{i-3} \ \tm_{i-2}
= \tm_{i-4} \ \flip{\tm_{i-4}} \ \tm_{i-3} \ \flip{\tm_{i-3}}
= \tm_{i-4} \ \flip{\tm_{i-4}} \ \tm_{i-4} \ \flip{\tm_{i-4}} \ \flip{\tm_{i-3}}
= \tm_{i-4} \ \flip{\tm_{i-3}} \ \flip{\tm_{i-4}} \ \flip{\tm_{i-3}}
$
to \cref{eq:tm-fac-i-2},
we have \cref{eq:tm-fac-i-4-flip-i-3}.
Finally, observe that 
$
  \tm_{i-4} \ \flip{\tm_{i-3}}
= \tm_{i-4} \ \flip{\tm_{i-4}} \ \tm_{i-4}
= \tm_{i-3} \ \tm_{i-4}
$
and similarly,
$
  \flip{\tm_{i-3}} \ \flip{\tm_{i-4}}
= \flip{\tm_{i-4}} \ \tm_{i-4} \ \flip{\tm_{i-4}}
= \flip{\tm_{i-4}} \ \tm_{i-3}
$.
Substituting them to \cref{eq:tm-fac-i-4-flip-i-3},
we derive \cref{eq:tm-fac-flip-i-4-i-3}.
\end{proof}

\section{A Factorization of Thue-Morse Word}\label{sec:tm-fac}

First, we define a \emph{smallest} factorization of a string as 
one that contains the fewest number of factors while satisfying certain conditions. 
In this section, we explore two smallest factorizations of 
$\tm_i$, each containing all occurrences of $\tm_{i-j}$ and $\flip{\tm_{i-j}}$, respectively.
Observe that such factorizations exist due to 
the overlap-free property of each Thue-Morse word (\cref{thm:tm-overlap-free}).

\begin{definition}\label{def:tm-fac}
For each $i \geq 2$ and $0 \leq j \leq i-1$, 
we define the following.
\begin{itemize}
\item 
Let $\fac{A}{i,j}$ and $\fac{B}{i,j}$
denote the smallest factorization of $\tm_i$ that contains 
all occurrences of $\tm_{i-j}$ and $\flip{\tm_{i-j}}$
in $\tm_i$, respectively.
\item 
Define $\nextfac{T_{i-j}} := T_{i-(j+1)}$ and $\nextfac{\overline{T_{i-j}}} := \overline{T_{i-(j+1)}}$.
\item 
Consider $\fac{}{i,j} \in \{ \fac{A}{i,j}, \fac{B}{i,j} \}$  
and suppose
 $\fac{}{i,j} = (  x_t )^m_{t=1}$.
We define two operators on $\fac{}{i,j}$:
\[
\nextfac{\fac{}{i,j}} := \left(  \nextfac{x_t} \right)^m_{t=1}
\quad \text{and} \quad
\flip{\fac{}{i,j}} := \left(  \flip{x_t} \right)^m_{t=1}.
\]
\item 
Consider two factorizations $\mathcal{X} = (x_k)^{m}_{k=1}$
and $\mathcal{Y} = (y_k)^{\ell}_{k=1}$.
If $x_{m} = y_{1} = \flip{\tm_{i-j}}$,
then 
\[ \mathcal{X} \boxplus \mathcal{Y} := \left( 
x_1, \ x_2, \ \ldots, \ x_{m-1},   \
\flip{\tm_{i-(j+1)}}, \ \tm_{i-j}, \ \tm_{i-(j+1)}, \
y_2, \ y_3, \ \ldots, \ y_{\ell}
\right). \lipicsEnd \]
\end{itemize}
\end{definition}

\noindent
For the definition of operator $\boxplus$,
note that  $|\mathcal{X} \boxplus \mathcal{Y}| = |\mathcal{X}| + |\mathcal{Y}| + 1$ and
\[
\flip{\tm_{i-j}} \  \flip{\tm_{i-j}}
= \flip{\tm_{i-(j+1)}} \ \tm_{i-(j+1)} \ \flip{\tm_{i-(j+1)}} \ \tm_{i-(j+1)}
= \flip{\tm_{i-(j+1)}} \ \tm_{i-j} \ \tm_{i-(j+1)}.
\]

We next introduce  a simple characteristic of $\fac{A}{i,j}$ and $\fac{B}{i,j}$.

\begin{observation}\label{thm:tm-fac-no-two-consec}
For each $i \geq 2$ and $0 \leq j \leq i-1$,
consider a factorization $\mathcal{X} = (  x_k )^m_{k=1}$ of $\tm_i$
that contains all occurrences of $\tm_{i-j}$ (respectively, $\flip{\tm_{i-j}}$).
If no two consecutive factors are both different from $\tm_{i-j}$ (respectively, $\flip{\tm_{i-j}}$),
then 
$\mathcal{X}$ is the smallest  and thus
$\mathcal{X} = \fac{A}{i,j}$ (respectively, $\mathcal{X} = \fac{B}{i,j}$).
\end{observation}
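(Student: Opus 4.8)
The plan is to prove the claim by a factor-counting argument that splits the factors of any eligible factorization into those equal to $\tm_{i-j}$ and the remaining ``filler'' factors. First I would use the overlap-free property (\cref{thm:tm-overlap-free}) to observe that the occurrences of $\tm_{i-j}$ in $\tm_i$ are pairwise non-overlapping, hence occupy disjoint intervals of $\tm_i$. Consequently, in \emph{any} factorization $\mathcal{Y}$ of $\tm_i$ that contains all occurrences of $\tm_{i-j}$, the factors equal to $\tm_{i-j}$ are precisely these occurrences: each occurrence is a factor by hypothesis, and conversely every factor equal to $\tm_{i-j}$ is itself an occurrence. Thus the number of such factors is a constant determined solely by $\tm_i$ (namely $a_j$ in the notation of \cref{thm:tm-occ-num}), independent of the choice of $\mathcal{Y}$.

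Next I would introduce the \emph{gaps}: the maximal intervals of positions of $\tm_i$ not covered by any occurrence of $\tm_{i-j}$, and let $g$ denote their number. Since the occurrences appear as factors and each gap lies strictly between two occurrences (or at an end of $\tm_i$), no single factor can cross an occurrence boundary; hence each gap must be covered entirely by filler factors, requiring at least one filler per gap. This yields the lower bound $|\mathcal{Y}| \ge a_j + g$ for every eligible factorization $\mathcal{Y}$.

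The key step is then to show that the hypothesised $\mathcal{X}$ attains this bound. The assumption that no two consecutive factors of $\mathcal{X}$ differ from $\tm_{i-j}$ means the filler factors of $\mathcal{X}$ never appear back-to-back, so consecutive fillers would have to be separated by a $\tm_{i-j}$-factor. Because a gap is covered only by fillers and cannot contain an occurrence, each gap is therefore covered by exactly one filler, namely the whole gap string. Hence $\mathcal{X}$ has exactly $a_j + g$ factors, matching the lower bound, so $\mathcal{X}$ is a smallest factorization; the identical counting argument with $\flip{\tm_{i-j}}$ in place of $\tm_{i-j}$ settles the parenthetical case of $\fac{B}{i,j}$.

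Finally I would note that the smallest factorization is unique: its $a_j$ occurrence-factors are fixed, and forcing exactly one filler per gap determines each filler to be the entire gap string, so the factorization is completely determined. Therefore $\mathcal{X}$ coincides with $\fac{A}{i,j}$ (respectively $\fac{B}{i,j}$), as claimed. I expect the main obstacle to be stating the correspondence between filler factors and gaps cleanly---in particular, justifying via overlap-freeness that the $\tm_{i-j}$-factors are exactly the occurrences and that no filler can straddle an occurrence---after which the lower-bound count and the matching count for $\mathcal{X}$ follow directly.
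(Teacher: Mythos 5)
Your proof is correct, and it reaches the same conclusion as the paper but by a more complete route. The paper dispatches this observation in one sentence: two consecutive factors both different from $\tm_{i-j}$ could be merged, yielding a smaller factorization. Strictly read, that only shows that a \emph{smallest} factorization has no two consecutive filler factors; the observation asserts the converse, namely that \emph{any} factorization with this property is already smallest. Your counting argument closes exactly this gap: overlap-freeness (\cref{thm:tm-overlap-free}) makes the occurrences of $\tm_{i-j}$ pairwise disjoint and hence forces every eligible factorization to contain the same $a_j$ occurrence-factors plus at least one filler per gap, giving the lower bound $a_j+g$; the no-two-consecutive-fillers hypothesis then pins $\mathcal{X}$ to exactly one filler per gap, attaining the bound and simultaneously establishing uniqueness of the minimizer (which the paper's definite article ``the smallest factorization'' silently assumes). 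The only point worth stating explicitly when you write this up is that a filler cannot straddle an occurrence boundary precisely because each occurrence is itself a factor, so the factorization's cut points already isolate the gaps; with that said, your argument is airtight and arguably the proof the observation deserves.
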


The observation 
holds because, otherwise, 
we could merge the two consecutive factors 
and obtain a smaller factorization.

Now, we present the main result of the section,
illustrated in \cref{fig:tm-occ-fac-evolve}.

\begin{figure}[t]
\centering
\includegraphics[width=0.8\linewidth]{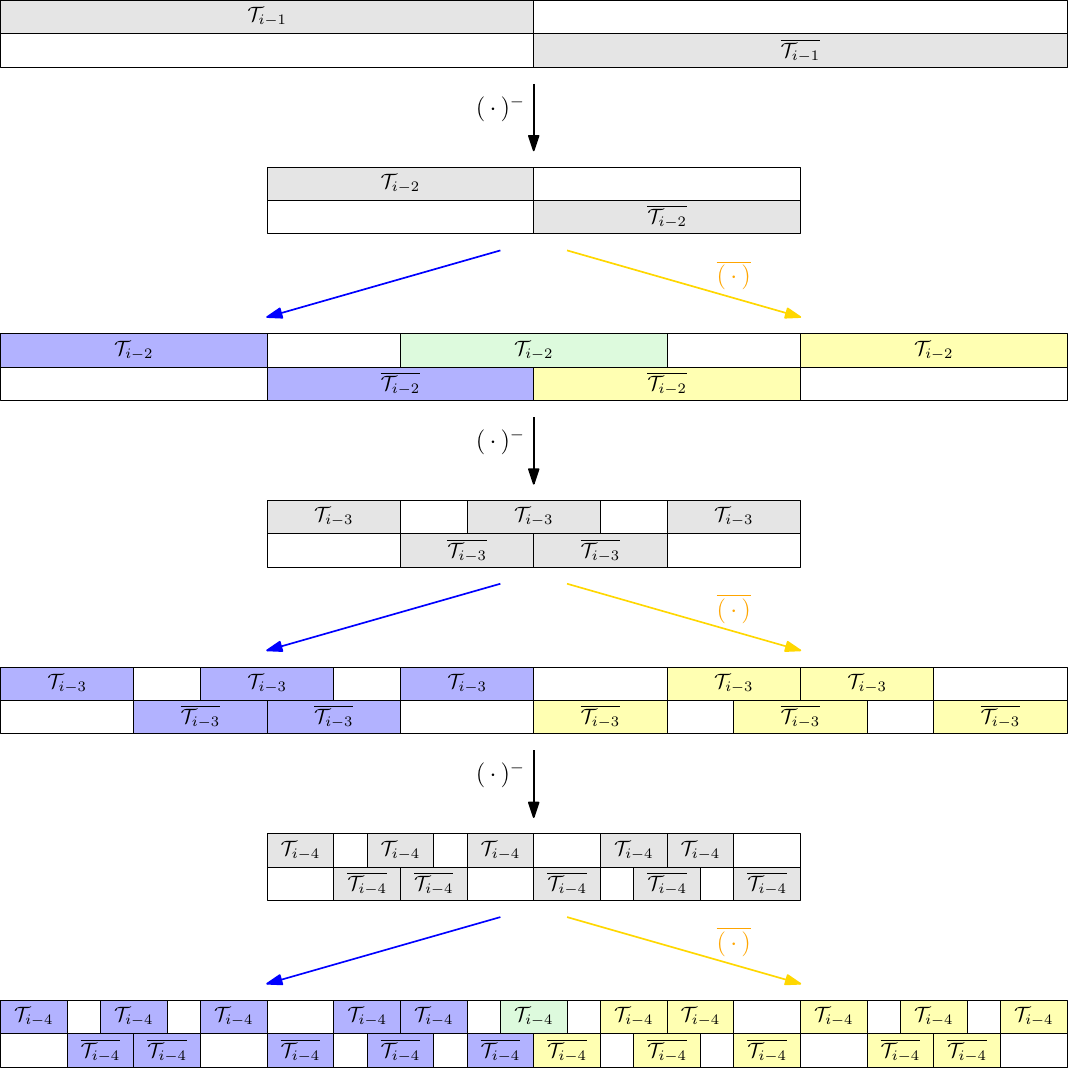}
\caption{
An illustration of \cref{thm:tm-fac} for  $1 \leq j \leq 4$.
Operators $(\cdot)^-$ and $\flip{(\cdot)}$ are defined in 
\cref{def:tm-fac}.
Notice the green occurrences of $\tm_{i-j}$ are introduced from $\boxplus$.
}
\label{fig:tm-occ-fac-evolve}
\end{figure}

\begin{theorem}\label{thm:tm-fac}
For  $i \geq 2$ and $0 \leq j \leq i-1$,
the following statements hold.
\begin{enumerate}[(1)]
\item\label{statment-1} 
For each $\fac{}{i,j} \in \{ \fac{A}{i,j}, \fac{B}{i,j} \}$,
let $\fac{}{i,j} = (  x_t )^m_{t=1}$.
Then each term $x_t$ is in the set
$\mathcal{B}_{i, j} := \left\{ 
\tm_{i-j}, \flip{\tm_{i-j}}, \tm_{i-(j+1)}, \flip{\tm_{i-(j+1)}} 
\right\}$.
Moreover, $x_1 = \tm_{i-j}$.
If $j$ is even,
then $x_m = \tm_{i-j}$; otherwise, $x_m = \flip{\tm_{i-j}}$.
\item\label{statment-2} 
$\fac{A}{i,0} = \left( \tm_{i} \right)$,
$\fac{A}{i,1} = \left( \tm_{i-1}, \flip{\tm_{i-1}}  \right)$,
and for each $j \geq 2$,
\[
\fac{A}{i,j} = 
\begin{cases}
\nextfac{\fac{A}{i,j-1}} \ \flip{\nextfac{\fac{B}{i,j-1}}},  & \quad    \text{$j$ is odd},\\
\nextfac{\fac{A}{i,j-1}} \boxplus \flip{\nextfac{\fac{B}{i,j-1}}}, &  \quad   \text{$j$ is even}.
\end{cases}
\]
\item\label{statment-3} 
$\fac{B}{i,0} = \emptyseq$, 
$\fac{B}{i,1} = \left( \tm_{i-1}, \flip{\tm_{i-1}}  \right)$,
and for each $j \geq 2$,
$
\fac{B}{i,j} = \nextfac{\fac{B}{i,j-1}} \ \flip{\nextfac{\fac{A}{i,j-1}}} .
$ 
\end{enumerate}
\end{theorem}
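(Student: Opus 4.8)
The plan is to prove (\ref{statment-1})--(\ref{statment-3}) simultaneously by induction on $j$, with the claim for a given $j$ asserted for every admissible order (so that, at level $j$, I may use level $j-1$ both at order $i$ and at order $i-1$). The base cases $j\in\{0,1\}$ are immediate: $\tm_i$ occurs only as the whole word, giving $\fac{A}{i,0}=(\tm_i)$; $\flip{\tm_i}$ does not occur, giving $\fac{B}{i,0}=\emptyseq$; and since $\tm_{i-1}$ occurs only at position $1$, exposing that single occurrence forces $\fac{A}{i,1}=\fac{B}{i,1}=(\tm_{i-1},\flip{\tm_{i-1}})$. The engine of the induction is the Thue--Morse morphism $\mu$ with $\mu(\texttt{a})=\texttt{ab}$, $\mu(\texttt{b})=\texttt{ba}$: from the defining recurrence one gets $\mu(\tm_k)=\tm_{k+1}$, that $\mu$ commutes with $\flip{\cdot}$, and that $\mu$ is injective. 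Applying $\mu$ factorwise to $\nextfac{\mathcal{X}}$ recovers any factorization $\mathcal{X}$ of $\tm_i$ whose factors lie in $\{\tm_{i-k},\flip{\tm_{i-k}}\}$, so $\nextfac{\mathcal{X}}$ is a factorization of $\mu^{-1}(\tm_i)=\tm_{i-1}$.

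Using this, I would first establish the key identification $\nextfac{\fac{A}{i,j-1}}=\fac{A}{i-1,j-1}$ and $\nextfac{\fac{B}{i,j-1}}=\fac{B}{i-1,j-1}$. Indeed, $\mu^{-1}$ is a length-halving bijection that carries the occurrences of $\tm_{i-(j-1)}$ in $\tm_i$ onto the occurrences of $\tm_{(i-1)-(j-1)}=\tm_{i-j}$ in $\tm_{i-1}$, and it preserves both factor adjacency and the ``no two consecutive non-target factors'' property; combined with the induction hypothesis and \cref{thm:tm-fac-no-two-consec}, this shows the halved factorization is again smallest. Substituting these identities turns the target recurrences into a statement purely about the midpoint split $\tm_i=\tm_{i-1}\,\flip{\tm_{i-1}}$: the smallest factorization of $\tm_i$ exposing $\tm_{i-j}$ should be the smallest factorization of the first half exposing $\tm_{i-j}$ followed by the smallest factorization of the (flipped) second half exposing $\tm_{i-j}$, up to a correction for an occurrence straddling the midpoint.

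The straddling analysis is where parity enters. Near the midpoint, $\tm_{i-1}$ ends and $\flip{\tm_{i-1}}$ begins with the same block $\flip{\tm_{i-2}}$, so the two halves meet as $\flip{\tm_{i-2}}\,\flip{\tm_{i-2}}$; propagating this down shows that exactly one occurrence of $\tm_{i-j}$ crosses the midpoint when $j$ is even and none when $j$ is odd. When $j$ is odd the midpoint already falls on a factor boundary and plain concatenation suffices; when $j$ is even the crossing target (the green occurrence in \cref{fig:tm-occ-fac-evolve}) must be exposed, and this is exactly what $\boxplus$ does, using $\flip{\tm_{i-j}}\,\flip{\tm_{i-j}}=\flip{\tm_{i-(j+1)}}\,\tm_{i-j}\,\tm_{i-(j+1)}$ so that the concatenation is left equal to $\tm_i$. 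Overlap-freeness (\cref{thm:tm-overlap-free}) guarantees that no other occurrence of the target crosses the midpoint, so this single correction is complete. Statement (\ref{statment-1}) is then read off mechanically: every factor lies in $\mathcal{B}_{i,j}$ because $\nextfac{\cdot}$ and $\flip{\cdot}$ send $\mathcal{B}_{i,j-1}$ into $\mathcal{B}_{i,j}$ and $\boxplus$ introduces only factors of $\mathcal{B}_{i,j}$, while transforming the first and last factors supplied by the hypothesis gives $x_1=\tm_{i-j}$ and the stated parity-dependent value of $x_m$. Minimality of the assembled factorization again follows from \cref{thm:tm-fac-no-two-consec}, since the no-two-consecutive-non-target property is inherited from the two halves and is not broken by the explicit factors of the $\boxplus$ correction.

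The main obstacle is the bookkeeping at the midpoint, and in particular two facts underpinning it. First, the identification $\nextfac{\fac{A}{i,j-1}}=\fac{A}{i-1,j-1}$ requires that every occurrence of $\tm_{i-(j-1)}$ in $\tm_i$ be $\mu$-block-aligned; I would obtain this from the explicit positions in \cref{thm:tm-occ} and \cref{thm:tm-b-occ}, all of which are odd. Second, the claim ``exactly one straddling occurrence iff $j$ is even'' must be verified carefully, as it drives the dichotomy between $\boxplus$ and plain concatenation; a clean way to confirm it is to compare factor counts against the occurrence numbers of \cref{thm:tm-occ-num}. Finally, the boundary regime $j=i-1$, where $\nextfac{\cdot}$ reaches the smallest orders, and the parity cases should be checked separately to ensure the transformations remain well defined.
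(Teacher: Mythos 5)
Your proposal is correct and follows essentially the same route as the paper's proof: induction on $j$, the midpoint split $\tm_i = \tm_{i-1}\,\flip{\tm_{i-1}}$ with $\nextfac{\cdot}$ producing factorizations of the two halves, overlap-freeness to control occurrences straddling the seam (handled by $\boxplus$ exactly when $j$ is even), and \cref{thm:tm-fac-no-two-consec} for minimality. Your explicit identification $\nextfac{\fac{A}{i,j-1}} = \fac{A}{i-1,j-1}$ via the Thue--Morse morphism and block-alignment of occurrences is a more carefully justified version of a step the paper asserts directly from the definition of $\nextfac{\cdot}$ and the induction hypothesis.
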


\begin{proof}

We proceed by induction on $j$.

\textbf{Base cases.}
The claim holds trivially for  $j=0$.
When $j= 1$,
$\left( \tm_{i-1}, \flip{\tm_{i-1}}  \right)$
is the smallest factorization following \cref{thm:tm-fac-no-two-consec} and 
Statement~\ref{statment-1} holds.
Next, note that 
$\nextfac{\fac{A}{i,1}}  = \nextfac{\fac{B}{i,1}} 
=  \left( \tm_{i-2}, \flip{\tm_{i-2}}  \right)$.
When $j=2$, it follows that
\begin{equation}\label{eq:tm-fac-i-2}  
\fac{A}{i,2}
= \nextfac{\fac{A}{i,1}} \boxplus  \flip{\nextfac{\fac{B}{i,1}}} 
= \left(  \tm_{i-2}, \flip{\tm_{i-2}} \right) \boxplus \left(  \flip{\tm_{i-2}}, \tm_{i-2} \right)
= \left(  \tm_{i-2}, \flip{\tm_{i-3}}, \tm_{i-2}, \tm_{i-3}, \tm_{i-2} \right)
\end{equation}
and  
\begin{equation}\label{eq:tm-fac-flip-i-2} 
\fac{B}{i,2}
= \nextfac{\fac{B}{i,1}}   \flip{\nextfac{\fac{A}{i,1}}} 
= \left(  \tm_{i-2}, \flip{\tm_{i-2}} \right) \ \left( \flip{\tm_{i-2}}, \tm_{i-2} \right)
= \left(  \tm_{i-2}, \flip{\tm_{i-2}},  \flip{\tm_{i-2}}, \tm_{i-2} \right)
\end{equation}
Both factorizations are the smallest 
following \cref{thm:tm-fac-no-two-consec}, and Statement~\ref{statment-1} holds.

\textbf{Inductive step.}
Let  $k$ be an odd integer such that $3 \leq k \leq i-1$, and
assume the claim holds for $j=k-1$.
Specifically,  assume $\fac{A}{i,k-1} = (x_t)^m_{t=1}$, 
$\fac{B}{i, k-1} = (  y_t )^l_{t=1}$,
$x_1 = y_1 = \tm_{i-(k-1)}$, and
$x_m  = y_l = \tm_{i-(k-1)}$.
We now prove the result for $j = k$. 

First, note that both $\fac{A}{i,k-1}$ and $\fac{B}{i,k-1}$ are factorizations of $\tm_{i}$.
Then, by the definition of operation $\nextfac{\cdot}$,
both $\nextfac{\fac{A}{i,k-1}}$ and $\nextfac{\fac{B}{i,k-1}}$ are factorizations of $\tm_{i-1}$,
and  $\flip{\nextfac{\fac{B}{i,k-1}}}$ is a factorization of $\flip{\tm_{i-1}}$.
Now, since  $\tm_i = \tm_{i-1} \ \flip{ \tm_{i-1}}$,
it follows that
$\mathcal{Y}_{\text{odd}}^A := \nextfac{\fac{A}{i,k-1}} \ \flip{\nextfac{\fac{B}{i,k-1}}}$ 
is a factorization of $\tm_i$.
It remains to show that $\mathcal{Y}_{\text{odd}}^A = \fac{A}{i,j}$. 
Observe that 
\begin{equation}\label{eq:y-a-odd}
\mathcal{Y}_{\text{odd}}^A 
= \nextfac{\fac{A}{i,k-1}} \ \flip{\nextfac{\fac{B}{i,k-1}}} 
= \nextfac{x_1} \ \cdots \ \nextfac{x_{m-1}} \ \tm_{i-k} \ \flip{\tm_{i-k}} \ \flip{\nextfac{y_2}} \ \cdots \ \flip{\nextfac{y_l}}.
\end{equation}
By the induction hypothesis on $\fac{A}{i,k-1}$ and $\fac{B}{i,k-1}$ 
and the definition of operation $\nextfac{\cdot}$, we know that 
$\nextfac{\fac{A}{i,k-1}}$ contains all the occurrences of $\tm_{i-k}$ in $ \tm_{i-1}$,
and $\flip{\nextfac{\fac{B}{i,k-1}}}$ contains all the occurrences of $\tm_{i-k}$ in $\flip{ \tm_{i-1}}$. 
Since  $\tm_i = \tm_{i-1} \ \flip{ \tm_{i-1}}$
and $\tm_i$ is overlap-free,
it follows that $\mathcal{Y}_{\text{odd}}^A$ contains all the occurrences of $\tm_{i-k}$.
Moreover,  no two consecutive factors of $\mathcal{Y}_{\text{odd}}^A$ are both different from $\tm_{i-k}$,
so by \cref{thm:tm-fac-no-two-consec},
we conclude that 
$\mathcal{Y}_{\text{odd}}^A = \fac{A}{i,j}$. 

Next we show that all factors of $\fac{A}{i,j}$ are elements of $ \mathcal{B}_{i, k}$.  
Since $x_t \in \mathcal{B}_{i, k-1}$ for each $1 \leq t \leq m$ 
and
$y_t \in \mathcal{B}_{i, k-1}$ for each $1 \leq t \leq l$,
it follows from \cref{eq:y-a-odd} that
all factors of $\fac{A}{i,j}$  are elements of $\mathcal{B}_{i, k}$. 
Additionally, the first factor in $\fac{A}{i,j}$ is $\nextfac{x_1} = \tm_{i-k}$, 
and the last factor in $\fac{A}{i,j}$ is $\flip{\nextfac{y_l}} = \flip{\tm_{i-k}}$ since $k-1$ is even.

Similarly, we can show that $\mathcal{Y}_{\text{odd}}^B := \nextfac{\fac{B}{i,k-1}} \ \flip{\nextfac{\fac{A}{i,k-1}}}$
is a factorization  of $\tm_i$,
 that $\mathcal{Y}_{\text{odd}}^B = \fac{B}{i,j}$,
and that Statement~\ref{statment-1} holds.

We can prove analogously when $k$ is even.
In this case, 
 the operation $\boxplus$ is used to ensure that $\fac{A}{i,j}$ contains all occurrences of $\tm_{i-j}$.
Specifically, when $k$ is even, we have $\nextfac{x_m}= \flip{\nextfac{y_1}} = \flip{\tm_{i-k}}$, 
and there is a occurrence of $\tm_{i-k}$ within $\flip{\tm_{i-k}} \ \flip{\tm_{i-k}} =\flip{\tm_{i-(k+1)}} \ \tm_{i-k} \ \tm_{i-(k+1)}$.
\end{proof}

\end{document}